\documentclass[reprint,aps,prx]{revtex4-1}

\usepackage[utf8]{inputenc}
\usepackage[svgnames]{xcolor}
\usepackage{xspace,graphicx}

\usepackage{microtype,enumitem}
\usepackage{dsfont,upgreek}
\usepackage{amsmath,amsthm,amssymb,mathtools}
\usepackage[colorlinks]{hyperref}

\usepackage{qcircuit}
\input{braket}

\usepackage{tikz}
\usetikzlibrary{arrows}
\usetikzlibrary{shapes.misc}

% clever references
\usepackage[capitalise]{cleveref}
\crefname{lemma}{Lemma}{Lemmas}
\crefname{proposition}{Proposition}{Propositions}
\crefname{definition}{Definition}{Definitions}
\crefname{theorem}{Theorem}{Theorems}
\crefname{conjecture}{Conjecture}{Conjectures}
\crefname{corollary}{Corollary}{Corollaries}
\crefname{example}{Example}{Examples}
\crefname{section}{Section}{Sections}
\crefname{appendix}{Appendix}{Appendices}
\crefname{figure}{Fig.}{Figs.}
\crefname{equation}{Eq.}{Eqs.}
\crefname{table}{Table}{Tables}
\crefname{item}{Property}{Properties}
\crefname{remark}{Remark}{Remarks}

% newtheorem
\newtheorem{theorem}{Theorem}

\newtheorem{corollary}[theorem]{Corollary}

\newtheorem{lemma}[theorem]{Lemma}

\newtheorem{remark}[theorem]{Remark}

\newcommand\field\mathds
\newcommand\op{\mathbf}
\newcommand\Hs{\mathcal H}
\renewcommand\1{\mathds{1}}

\newcommand\ii{\mathrm{i}}
\newcommand\ee{\mathrm{e}}

\DeclareMathOperator\BigO{O}

\DeclareMathOperator\poly{poly}
\DeclareMathOperator\lmin{\lambda_\mathrm{min}}
\DeclareMathOperator\spec{spec}
\DeclareMathOperator\sig{sig}
\DeclareMathOperator\spn{span}
\DeclareMathOperator\sign{sign}

\newcommand\bl{\raisebox{.1ex}{$\vartriangleright$}}
\newcommand\bd{\raisebox{-.125ex}{$\blacksquare$}}
\newcommand\hd{\raisebox{.1ex}{$\blacktriangleright$}}

\newcommand{\HTM}{\op H_{\mathrm{TM}}}
\newcommand{\Hel}{\op H^{(f)}}
\newcommand{\Htriv}{\op H_\mathrm{trivial}}
\newcommand{\Hcont}{\op H_\mathrm{dense}}
\newcommand{\Hguard}{\op H_\mathrm{guard}}

\newcommand{\ivar}{\upeta}

\begin{document}
\title{Undecidability of the Spectral Gap in One Dimension}
\author{Johannes Bausch}
\affiliation{CQIF, DAMTP, University of Cambridge, UK}
\author{Toby S. Cubitt}
\affiliation{Department of Computer Science, University College London, UK}
\author{Angelo Lucia}
\affiliation{QMATH, Department of Mathematical Sciences and NBIA, Niels Bohr Institute, University of Copenhagen, 2100 Copenhagen, DK}
\affiliation{Walter Burke Institute for Theoretical Physics and Institute for Quantum Information \& Matter,\protect\\ California Institute of Technology, Pasadena, CA 91125, USA}
\author{David~Perez-Garcia}
\affiliation{Dept.\ An\'alisis y Matem\'atica Aplicada,%
Universidad Complutense de Madrid, 28040 Madrid, Spain}
\affiliation{Instituto de Ciencias Matem\'aticas, 28049 Madrid, Spain}

    \begin{abstract}
          The spectral gap problem---determining whether the energy spectrum of a system has an energy gap above ground state, or if there is a continuous range of low-energy excitations---pervades quantum many-body physics.
          Recently, this important problem was shown to be undecidable for quantum spin systems in two (or more) spatial dimensions: there exists no algorithm that determines in general whether a system is gapped or gapless, a result which has many unexpected consequences for the physics of such systems.
          However, there are many indications that one dimensional spin systems are simpler than their higher-dimensional counterparts: for example, they cannot have thermal phase transitions or topological order, and there exist highly-effective numerical algorithms such as DMRG---and even provably polynomial-time ones---for gapped 1D systems, exploiting the fact that such systems obey an entropy area-law.
          Furthermore, the spectral gap undecidability construction crucially relied on aperiodic tilings, which are not possible in 1D.\\
          So does the spectral gap problem become decidable in 1D?
          In this paper we prove this is not the case, by constructing a family of 1D spin chains with translationally-invariant nearest neighbour interactions for which no algorithm can determine the presence of a spectral gap.
          This not only proves that the spectral gap of 1D systems is just as intractable as in higher dimensions, but also predicts the existence of qualitatively new types of complex physics in 1D spin chains.
          In particular, it implies there are 1D systems with constant spectral gap and non-degenerate classical ground state for all systems sizes up to an uncomputably large size, whereupon they switch to a gapless behaviour with dense spectrum.
    \end{abstract}
    \maketitle

\section{Introduction}
One-dimensional spin chains are an important and widely-studied class of quantum many-body systems.
The quantum Ising model, for example, is a classic model of magnetism; the 1D Ising model with transverse fields is the textbook example of a quantum phase transition.
It is also one of a handful of quantum many-body systems which can be completely solved analytically. Indeed, most known exactly solvable quantum many-body models are in 1D~\cite{Lieb1961a,Affleck1987,Franchini2017}.
Even for 1D systems that are not exactly solvable, the density matrix renormalisation group (DMRG) algorithm~\cite{White1992} works extremely well in practice, and recent results have even yielded provably efficient classical algorithms for all 1D gapped systems~\cite{Landau2013}.

While it is known that approximating a 1D quantum system's ground state energy to inverse polynomial precision is in general QMA hard~\cite{Kitaev2002,Aharonov2009}---even with translationally-invariant nearest neighbour interactions~\cite{gottesman2009quantum,Bausch2016}---currently there are no examples of gapped QMA-hard Hamiltonians and and there are indications~\cite{1810.06528} that gaplessness is required in order to have a hard to compute ground state energy.

There are several other indications that ground states of (finite) gapped 1D systems are qualitatively simpler than in higher dimensions.
They obey an entanglement area-law, hence have an efficient classical descriptions in terms of matrix product states~\cite{Hastings2007a,Arad2013}.
Furthermore, thermal phase transitions~\cite{Imry1974} and topological order~\cite{Verstraete2005} are both ruled out for 1D quantum systems. For classical 1D systems, satisfiability and tiling problems become tractable.
For the simplest class of spin chains---qubit chains with translationally invariant nearest-neighbour interactions---the spectral gap problem has been completely solved when the system is frustration-free~\cite{Bravyi2015}.

Contrast this with the situation in 2D and higher, where even simple theoretical models such as the 2D Fermi-Hubbard model (believed to underlie high-temperature superconductivity) cannot be reliably solved numerically even for moderately large system sizes~\cite{Staar2013,Mazurenko2017};
the entropy area-law remains an unproven conjecture~\cite{Ge2016};
and the spectral gap problem---i.e.\ the question of existence of a spectral gap above the ground state in the thermodynamic limit---is undecidable~\cite{Cubitt2015,Cubitt2015_long}.
This latter result holds under the assumption that either the ground state is non-degenerate with a constant spectral gap above it in the gapped case, or that the entire spectrum is continuous in the gapless case: therefore the undecidability of the problem of distinguishing the two cases is not due to the presence of ambiguous cases (for example cases where low-excited states collapse onto the groundstate in the limit).
For classical systems, satisfiability and tiling problems are NP-hard~\cite{Cook1971} and undecidable~\cite{Berger1966} (respectively) in two dimensions and higher.

Despite these indications that one-dimensional systems appear qualitatively easier to analyse than their higher-dimensional counterparts, we show in this paper that the spectral gap problem is undecidable, even in 1D.
The many-body quantum systems we consider in this work are one-dimensional spin chains, i.e.\ with a Hilbert space $(\field C^d)^{\otimes N}$, where $d$ is the local physical dimension, and $N$ the length of the chain.
The spins are coupled by translationally-invariant local interactions: a nearest-neighbour term $\op h^{(2)}$, which is a $d^2\times d^2$ Hermitian matrix, and a $d\times d$-sized local term $\op h^{(1)}$ which is also Hermitian.
Both $\op h^{(1)}$ and $\op h^{(2)}$ are independent of the system size $N$.
The overall Hamiltonian $\op H_N$ will be a sum of the local terms:
\begin{equation}\label{eq:H_N}
  \op H_N = \sum_{i=1}^{N-1} \op h^{(2)}_{i,i+1} + \sum_{i=1}^N \op h^{(1)}_i.
\end{equation}
(Following standard notation, subscripts indicate the spin(s) on which the operator acts non-trivially, with the operator implicitly extended to the whole chain by tensoring with $\1$ on all other spins.)
More precisely, $\op h^{(1)}$ and $\op h^{(2)}$ define a sequence of Hamiltonians $\{\op H_N\}$ on increasing chain lengths.
The thermodynamic limit will be taken by letting $N$ grow to infinity.

In order to be completely unambiguous about what we mean by the two terms \emph{gapped} and \emph{gapless}, we use a very strong definition. For $\{\op H_N\}$ to be gapless, we require that there exists a finite interval of size $c>0$ above its ground state energy $E_0(N)$ such that the spectrum of $\op H_N$ becomes dense therein as $N$ goes to infinity, in the sense that any value in the interval $[E_0(N), E_0(N)+c]$ is arbitrarily well approximated by a $N$-dependent sequence of eigenvalues of $\op H_N$.
In contrast, $\{\op H_N\}$ is gapped if there exists  $\gamma>0$ such that for all $N\in \field{N}$, $\op H_N$ have a non-degenerate ground state and a spectral gap $\Delta(\op H_N)>\gamma$ where $\Delta(\op H_N)$ is the difference in energy between the (unique) ground state and the first excited state~\footnote{Note that gapped is not defined as the negation of gapless; there are systems that fall into neither class. The reason for choosing such strong definitions is to deliberately avoid ambiguous cases (such as systems with degenerate ground states). Our constructions will allow us to use these strong definitions, because we are able to guarantee that each instance falls into one of the two classes.} (see \cref{fig:gapped-vs-dense}).

\section{Main result}
Our main result is a construction of a nearest-neighbour coupling $\op h^{(2)}(\ivar)$ and a single site term $\op h^{(1)}(\ivar)$, parametrized by an integer $\ivar$, with the guarantee that each of the corresponding Hamiltonians $\{ \op H_N(\ivar) \}$, defined via \eqref{eq:H_N}, is either gapped or gapless according to the definitions given above. For this particular class of Hamiltonians, we show that determining which $\ivar$ correspond to gapped instances and which $\ivar$ correspond to gapless instances is as hard as determining whether a given Turing machine halts, a problem known as the Halting problem. Since the latter problem is undecidable~\cite{Turing1937a}, this immediately implies that the question of existence of a spectral gap is also undecidable for 1D Hamiltonians, both algorithmically, as well as in the axiomatic sense of Gödel~\cite{Godel1931}. 

The construction of the interactions $\op h^{(2)}(\ivar)$ and $\op h^{(1)}(\ivar)$ is based on an embedding of a fixed universal Turing machine (UTM), in such a way that the spectral gap problem for $\{\op H_N(\ivar)\}$ encodes the behaviour of the UTM when given $\ivar$ as an input: if the UTM halts on input $\ivar$, then $\{\op H_N(\ivar)\}$ will be gapless, while if the UTM does not halt on input $\ivar$, it will be gapped with spectral gap uniform in $\ivar$.

Moreover, we can show that $\op h^{(2)}(\ivar)$ and $\op h^{(1)}(\ivar)$ can be choosen to be small quantum perturbations around of a classical interaction (i.e. diagonal in the computational basis), and that their depedence on $\ivar$ is only due to some numerical factors. We present this explicit form, toghether with a summary of the above discussion, in the following theorem.

\begin{theorem}\label{th:main}
  Fix a universal Turing machine (UTM).
  There exist (explicitly constructible) nearest-neighbor interactions $\op h^{(2)}(\ivar)$ and a local term $\op h^{(1)}(\ivar)$, parametrized by an integer $\ivar$, such that $\|\op h^{(1)}(\ivar)\|\leq 2$, $\|\op h^{(2)}(\ivar) \| \leq 1$ and the family of Hamiltonians $\{\op H_N(\ivar)\}$ defined on a spin chain with $N$ sites and local dimension $d$ by
    \[  \op H_N(\ivar) = \sum_{i=1}^{N-1} \op h^{(2)}_{i,i+1}(\ivar) + \sum_{i=1}^N \op h^{(1)}_i(\ivar), \]
    satisfies the following:
    \begin{enumerate}[leftmargin=*]
    \item if the UTM halts on input $\ivar$, then $\{\op H_N(\ivar)\}$ is gapless.
    \item if the UTM does not halt on input $\ivar$, then  $\{\op H_N(\ivar)\}$ is gapped.
        Moreover the spectral gap $\Delta(\op H_N(\ivar))\ge 1$ for all $N\in \field{N}$.
    \end{enumerate}
    
  The interactions  $\op h^{(2)}(\ivar)$ and $\op h^{(1)}(\ivar)$ can be chosen to be of the form 
    \begin{alignat}{2}
      \op h^{(1)}(\ivar) & = \op a + \beta(&&2^{-2|\ivar|}\op a' + \op a'' ),\label{eq:mainthm-single-site-interaction}\\
      \op h^{(2)}(\ivar) & = \op b + \beta(&&2^{-2|\ivar|}\op b' + \op b''+  \notag \\
      & && \ee^{\ii \pi\phi(\ivar)} \op b''' + \ee^{-\ii \pi\phi(\ivar)}\op b'''^\dagger+  \notag \\
      & && \ee^{\ii \pi2^{-2|\ivar|}} \op b'''' + \ee^{-\ii \pi2^{-2|\ivar|}}\op b''''^\dagger). \label{eq:mainthm-two-sites-interaction}
    \end{alignat}
  where $0<\beta\le 1$ is any rational number (which can be chosen arbitrarily small),
   $|\ivar|$ denotes the number of digits in the binary expansion $\ivar=\ivar_1\ivar_2\ldots \ivar_{|\ivar|}$, $\phi(\ivar)$ denotes its binary fraction with interleaved $1$s, i.e. $\phi(\ivar)=0.\ivar_11\ivar_21\ldots \ivar_{|\ivar|-1}1\ivar_{|\ivar|}$, 
   and $\op a, \op a', \op a''$ are $d\times d$ matrices and $\op b,\op b',\op b'', \op b''', \op b''''$ are $d^2 \times d^2$ matrices with the following properties:
  \begin{enumerate}
  \item $\op a$ and $\op b$ are diagonal with entries in $\field Z$, i.e.\ they correspond to a purely classical spin coupling.
  \item $\op a',\op a''$, $\op b', \op b''$ are Hermitian with entries in $\field Q[\sqrt 2]$, i.e. they are of the form $x+y\sqrt{2}$ with $x$ and $y$ being rational numbers.
  \item $\op b''', \op b''''$ have entries in $\field Q$.
\end{enumerate}
Since the matrices constructed have entries in $\field Q[\sqrt 2]$, they can be specified by a finite description, which toghether with the binary expansion of $\ivar$ completely determines the interactions  $\op h^{(2)}(\ivar)$ and $\op h^{(1)}(\ivar)$.
\end{theorem}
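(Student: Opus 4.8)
The plan is to combine a sequence of reductions that transform the Halting problem first into a problem about the ground state energy of a quantum Turing machine Hamiltonian, then into a tiling-like problem on a line, and finally into the spectral gap dichotomy for the explicit nearest-neighbour chain in the statement. The overall architecture mirrors the 2D construction of Cubitt--Perez-Garcia--Wolf \cite{Cubitt2015,Cubitt2015_long}, but the aperiodic-tiling layer used there to pump the halting information across a 2D lattice must be replaced by a genuinely 1D mechanism; this is where the bulk of the new work lies.

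First I would build the computational-history part: encode a fixed UTM as a \emph{quantum} Turing machine and use a Kitaev-style circuit-to-Hamiltonian construction (as in \cite{gottesman2009quantum,Bausch2016}) to obtain a translationally-invariant nearest-neighbour Hamiltonian $\HTM$ on a line segment whose ground state is the uniform superposition over the computational history of the UTM run on input $\ivar$. The key property to extract is that $\HTM$ has a zero-energy ground state iff the machine does not enter the halting state within the length of the available tape, and otherwise incurs an energy penalty that decays like $1/\poly(\text{length})$; the input $\ivar$ is fed in through the phase $\ee^{\ii\pi\phi(\ivar)}$ appearing in $\op b'''$, using the binary-fraction-with-interleaved-ones trick so that $|\ivar|$ can be recovered (this is what the $2^{-2|\ivar|}$ factors track). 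Next I would glue this to a ``dense spectrum'' gadget $\Hcont$ — a chain whose spectrum is gapless — and a ``trivial'' gadget $\Htriv$ with a product ground state and a constant gap, together with a \emph{guard} Hamiltonian $\Hguard$ that energetically couples the halting signal of $\HTM$ to a switch between $\Hcont$ and $\Htriv$. On a chain of length $N$: if the UTM halts (at some time $\le N$), the halting penalty makes the $\Htriv$ configuration energetically unfavourable on all sufficiently long segments, so the ground state is forced into the $\Hcont$ sector, and because this happens on \emph{every} long enough segment the spectrum becomes dense above the ground state as $N\to\infty$ — hence gapless. If the UTM never halts, the halting penalty is always zero, $\Htriv$ wins on every segment, the ground state is the classical product state and the gap is bounded below by a constant; after rescaling by $\beta$ and absorbing constants one gets $\Delta(\op H_N(\ivar))\ge 1$.

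The main obstacle — and the heart of the 1D construction — is manufacturing the gapless phase without aperiodic tilings. In 1D one cannot tile the plane with a marker that appears at uncomputably-spaced intervals; instead I would use the observation (due to Gottesman--Irani \cite{gottesman2009quantum}) that a single local ``clock'' Hamiltonian on a segment of length $L$ can have a ground state energy that depends on $L$ in a way that decreases monotonically, combined with a doubling/branching history construction so that long segments always contain a copy of the computation on \emph{every} smaller input length, making the ``does some halting computation fit'' condition monotone in $N$. The technical work is (i) verifying the Gottesman--Irani-style local rules can be written with the claimed small norms $\|\op h^{(1)}\|\le 2$, $\|\op h^{(2)}\|\le 1$ and the claimed matrix-entry structure (integer diagonal $\op a,\op b$ plus $\field Q[\sqrt2]$ and $\field Q$ Hermitian perturbations scaled by $\beta$), which amounts to bookkeeping the finite alphabet of ``tiles'' and transition rules and checking that the only $\ivar$-dependence enters through the two phases and the $2^{-2|\ivar|}$ prefactors; and (ii) the spectral analysis showing the claimed strict dichotomy — bounding the ground state energy density in the non-halting case away from that of the halting case uniformly in $N$, and proving genuine spectral density (not merely small gap) in the halting case, which uses a boundary/size argument stacking many independent copies of the gapless gadget along the chain. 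I expect step (ii), the quantitative spectral estimates that yield the \emph{uniform} gap $\ge 1$ on one side and \emph{dense} spectrum on the other, to be the most delicate part; everything else is (elaborate but) systematic verification that the pieces fit together with the stated parameters.
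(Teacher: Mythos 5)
Your high-level architecture (history-state Hamiltonian for a UTM, phase $\phi(\ivar)$ fed in through a QPE gadget, the direct-sum combination with a dense-spectrum Hamiltonian $\Hcont$, a trivial gapped $\Htriv$, and a guard $\Hguard$ penalising mixed support) matches the paper. But the central new ingredient of the paper is missing: the \emph{Marker Hamiltonian} $\op H'$. The paper's 1D replacement for the Robinson tiling is a translationally-invariant Hamiltonian that block-diagonalises according to where a special marker symbol $\bd$ sits, and gives every $\bd$-bounded segment of length $w$ a small negative energy bonus $\Theta(-4^{-f(w)})$, obtained by analysing a perturbed path-graph Laplacian $\Delta'_w = \Delta_w - \ketbra{w}$ via explicit continuant recurrences. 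The essential property is that this bonus is always dominated (in magnitude) by the history-state penalty that the QTM picks up when it runs out of tape on that same segment. Your ``doubling/branching history'' / ``monotone in $N$'' substitute does not deliver what is actually needed, namely a fixed \emph{density} $\sim 1/w_\mathrm{halt}$ of segments of the right length across the whole chain, so that in the halting case $O(N/w_\mathrm{halt})$ segments each contribute a bounded negative amount and the ground-state energy diverges to $-\infty$. Without that additive, density-controlled bonus structure, there is no mechanism for the $\Hcont$ sector to be pulled below $\Htriv$ uniformly as $N\to\infty$.

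There is also a sign error that makes your described mechanism give the wrong conclusion. You write that $\HTM$ inflicts an energy \emph{penalty} when the machine halts (and has zero ground-state energy when it does not halt), i.e.\ the 2D sign convention. But $\HTM$ lives in the $\mathcal{H}_C$ factor of the $\Hcont$ sector, so a halting penalty raises the energy of $\op H_C + \Hcont$, which would make $\Htriv$'s zero-energy product state \emph{more} favourable — i.e.\ halting would be gapped, the opposite of what the theorem states. The paper deliberately flips this: it penalises the TM head hitting the right boundary marker (running out of tape, equivalently the clock running out — \cref{rem:loop}), so the \emph{non-halting} case always pays a penalty on every segment and has energy $\ge 0$, while in the \emph{halting} case segments of length $w_\mathrm{halt}$ incur no penalty yet still collect the marker bonus, driving the energy to $-\infty$. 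Your statement ``the halting penalty makes the $\Htriv$ configuration energetically unfavourable'' has no way to be realised in the direct-sum architecture you (and the paper) are using — $\HTM$ simply does not act on $\mathcal{H}_\mathrm{trivial}$. You correctly flag the quantitative spectral estimates as delicate; the paper indeed expends considerable effort there (bounding $\lmin(\Delta'_w)$ between $-1/2 - 2^{-w}$ and $-1/2 - 4^{-w}$, choosing the falloff exponent $f$ against the Poincar\'e-recurrence runtime bound so that $T^3(w) < 2^{f(w)}$, and the QPE truncation-detection penalty scaled by $\mu = 2^{-2|\ivar|}$) — but the missing marker mechanism and the inverted penalty mean the current plan would not close.
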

As in the 2D case, we emphasize that, since $\beta$ can be an arbitrarily small parameter, the theorem proves that even an arbitrarily small perturbation of a classical Hamiltonian can have an undecidable spectral gap in the thermodynamic limit. This also shows that even for classical Hamiltonians, the gapped phase is not stable in general and is susceptible to arbitrarily small perturbations.

There have been many previous results over the years relating undecidability to classical and quantum physics~\cite{Komar64,Anderson72,Pourel81,Fredkin82,Domany84,Omohundro84,Gu-Nielsen,Wang,Berger1966,Kanter90,Moore90,Bennett90,Eisert12,Wolf11,Morton12,Kliesch14,Delascuevas16,VandenNest08,Elkouss16,Bendersky16,Slofstra16,Lloyd,Lloyd94,Lloyd16,Cubitt16-Comment}.
We refer to the introduction of~\cite{Cubitt2015_long} for a detailed historical account of these previous results.

\begin{figure}
\centering
\includegraphics[width=8cm]{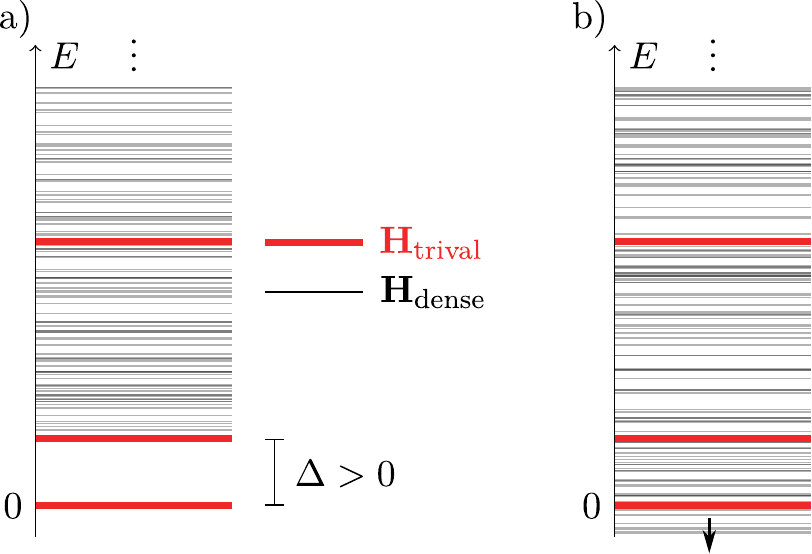}
\caption{Competing spectra of gapless versus gapped phase for the Hamiltonian
  $\op H=(\op H_C + \Hcont)\oplus 0 + 0\oplus \Htriv$.
  a) The system is gapped with $\Delta>0$ and unique product ground state. The thermodynamic limit is in a gapped phase.
  b) If and only if the encoded universal Turing machine halts, there exists a critical threshold system size after which the dense spectrum of $\op H_C+\Hcont$ is pulled towards $-\infty$ as the system size increases, covering up
  the gap in the spectrum of $\op H_\mathrm{trivial}$. The thermodynamic limit is in a gapless phase.}
\label{fig:gapped-vs-dense}
\end{figure}

So where is the difficulty in extending the two-dimensional result of \textcite{Cubitt2015} to one-dimensional systems?
One of the key ingredients in the 2D construction is a classical aperiodic tiling.
The particular tiling used in~\cite{Cubitt2015_long}, due to Robinson~\cite{Robinson1971}, exhibits a fractal structure, i.e.\ a fixed density of structures at all length scales.
This ingredient is crucial if one were to directly translate the original undecidability result to a one-dimensional system.

A Wang tile set~\footnote{We restrict to Wang tiles for simplicity. Slightly more general tiling rules are also possible, such as requiring complementary colours on abutting sides, but the same argument applies.} consists of a finite set of different types of square tiles, each tile type having one colour assigned to each of its four sides~\cite{Wang}. Translated to Hamiltonians, the computational basis state at each site indicates which tile is placed there; the interactions of the corresponding tiling Hamiltonian are diagonal projectors in the computational basis; each of these projectors constrains neighbouring sites to be in states that correspond to a matching tile configuration (i.e.\ where two tiles can only be placed next to each other if the colours of the abutting sides match).
A constant local dimension implies we can only have a constant number of tiles and thus of colors.
But in 1D, as soon as any tile occurs a second time along the chain, the entire pattern that followed that tile previously can repeat indefinitely. (Conversely, just as in 2D, if any finite segment cannot be tiled, then neither can the infinite chain.) Thus the Tiling problem in 1D is known to be decidable, even by a simple algorithm.

For this reason, an underlying tile set like the Robinson tiles used in 2D---with patterns of all length scales---is impossible in 1D, under the physical constraint of retaining a finite local dimension. 

Quantum mechanics can in principle circumvent this constraint, since entanglement can introduce long-range correlations, even in unfrustrated qudit chains \cite{Movassagh2010d}.
Yet even though it is known that one can obtain correlations between far-away sites that decay only polynomially, the resulting Hamiltonians are gapless \cite{Hastings2006,Movassagh2014}.

The key new idea is a 1D construction, which we denote the Marker Hamiltonian, that creates---within the system's ground state---a \emph{periodic} partition of the spin chain into segments, but whose length and period are related to the halting time of a Turing machine. This subtle interplay between the dynamics of a Turing machine, the periodic quantum ground state structure and the energy spectrum, plays the role of the classical aperiodic tilings of the 2D construction.

The paper is organized as follows. In Section \ref{sec:outline} we present a summary of the construction and how it differs from the 2D one.
In Section \ref{sec:marker} we detail the construction of the Marker Hamiltonian.
In Section \ref{sec:qpe} we present the modifications that are required to the encoding of UTM into Hamiltonian interactions due to this modified set-up. These two components will be combined in Section \ref{sec:combining}. The main result on the undecidability of the spectral gap will be proven in Section \ref{sec:undecidability}. Finally we present some extensions to our result in Section \ref{sec:extensions}.

\section{Outline of the construction}\label{sec:outline}
Let us now give an outline of how we circumvent the problems in extending the 2D construction to 1D chains, and present an overview of the different elements which will be required to construct $\op h^{(2)}(\ivar)$ and $\op h^{(1)}(\ivar)$.

We start by presenting some background on Turing machines and the Halting problem.
\subsection{Turing machines}
A (classical) Turing machine is a simple model of computation consisting of an infinite ``tape'' divided into cells, and a ``head'' which steps left or right along the tape.
The machine is always in one of a finite number of possible ``internal states'' $\{q_i\}_{i=1}^Q$.
There is one special internal state, denoted $q_f$, which tells the machine to halt when it enters this state.
Each cell can have one ``symbol'' written in it, from a finite set of possible symbols $\{\sigma_{i=1}^\Sigma\}$.
A finite table of ``transition rules'' determine how the machine should behave for each possible combination of symbol and internal state.
At each time step, the machine reads the symbol in the cell currently under the head and looks up this symbol and the current internal state in the transition rule table.
The transition rule specifies a symbol to overwrite in the current cell, a new internal state to transition to, and whether to move the head left or right one step along the tape.
The ``input'' to a Turing machine is whatever symbols are initially written on the tape, and the ``output'' is whatever is left written on the tape when it halts.

Despite its apparent simplicity, Turing machines can carry out any computation that it is possible to perform.
Indeed, Turing constructed a universal Turing machine (UTM): a \emph{single} set of transition rules that can perform any desired computation, determined solely by the input.
Given an input $\ivar$ to a universal Turing machine $M$, the Halting Problem asks whether $M$ halts on input $\ivar$.

\subsection{Encoding of the Halting problem}\label{sec:hist-state-intro}
We want to construct a Hamiltonian whose spectral gap encodes the Halting Problem.
More precisely, starting from a fixed UTM $M$, we want to construct the interactions $\op h^{(2)}(\ivar)$ and $\op h^{(1)}(\ivar)$  which define a 1D, translationally invariant, nearest-neighbour, spin chain Hamiltonian $\op H_N(\ivar) = \op H_N(M,\ivar)$ on the Hilbert space $\mathcal H=(\field C^d)^{\otimes N}$, such that $\op H_N(\ivar)$ is gapped in the limit $N\rightarrow\infty$ if $M$ halts on input $\ivar$, and gapless otherwise.

In the earlier 2D construction \cite{Cubitt2015_long}, this was accomplished by combining a trivial gapped Hamiltonian with one that has a dense spectrum (and thus gapless). The combined Hamiltonian has the property that the ground state energy is the smallest of the two.
The dense Hamiltonian is then modified such that, if $M$ halts on input $\ivar$, its lowest eigenvalue is pushed up by a large enough constant, revealing the gap present due to the trivial Hamiltonian. In a non-Halting instance the dense spectrum Hamiltonian has the lowest ground state energy and therefore the combined Hamiltonian remains gapless.

In order to modify the dense Hamiltonian in this fashion, we have to construct a Hamiltonian whose ground state energy is dependent on the outcome of a (quantum) computation. This is possible thanks to Feynman and Kitaev's history state construction, used ubiquitously throughout quantum complexity proofs~\cite{Feynman1986,Kitaev2002,Oliveira2008,Aharonov2009,Kempe2006,gottesman2009quantum,Breuckmann2013,Bausch2016,Bausch2017}.
In brief, this construction allows one to take a circuit $C$ with gates $\op U_1,\ldots,\op U_T$ acting on $m$ qubits, and embed it into a Hamiltonian on $n = m + \poly\log T$ qubits, such that the ground state is a superposition over histories of the computation, i.e.\ a state of the form $\ket\Psi\propto\sum_{t=0}^T\ket t\ket{\psi_t}$.
Every ``snapshot'' of the computation $\ket{\psi_t}$ is entangled with a so-called clock register $\ket t$.
For $T$ computational steps, one can implement such a clock with a local Hamiltonian using $\poly\log T$ qubits.
The state $\ket{\psi_0}$ is thus input to the circuit, and $\ket{\psi_t}=\op U_t\cdots\op U_1\ket{\psi_0}$ is the state of the circuit after $t$ gates.
A later construction due to \textcite{gottesman2009quantum} similarly encodes the evolution of a quantum Turing machine, instead of a quantum circuit.
As the transition rules of a Turing machine do not depend on the head location, a benefit of encoding Turing machines rather than circuits is that the resulting Hamiltonians are naturally translationally invariant.

By adding a projector to ``penalize'' a subset of the possible outcomes of the computation, as encoded in $\ket T\ket{\psi_T}$, the ground state in these cases is pushed up in energy by $\Theta(T^{-2})$.
We denote this circuit Hamiltonian with penalties with $\op H_C=\op H_C(M,\ivar)$---as it will be the only term dependent on the free parameter $\ivar$ and our chosen Turing machine $M$, set up such that $\ivar$ will serve as input to $M$---and the Hilbert space it acts on with $\mathcal H_C$.
The energy shift in $\op H_C$'s ground state can be exploited by combining this circuit Hamiltonian with three more Hamiltonians: $\Hcont$ with a non-negative and asymptotically dense spectrum on a Hilbert space $\mathcal H_\text{dense}$, and $\Htriv$ with a trivial zero energy ground state and gap $\ge 1$, on Hilbert space $\mathcal H_\text{trivial}$.
Then
$$
\op H_N(M,\ivar) \coloneqq (\op H_C(M,\ivar)+\Hcont)\oplus 0+0\oplus\Htriv + \Hguard
$$
is defined on the overall Hilbert space
$$
\mathcal H \coloneqq (\mathcal H_C \otimes \mathcal H_\text{dense}) \oplus \mathcal H_\text{trivial}.
$$
In order to ensure that the low-energy spectrum of $\op H_N$ is determined \emph{either} completely by $\Htriv$ \emph{or} by the sum $\op H_C+\Hcont$, we have added another local Hamiltonian $\Hguard$ acting on $\mathcal H$ with Ising-type couplings that penalize states with ``mixed'' support (explicitly spelled out in \cref{th:undecidability-1-gap}).

If the computation output in $\op H_C$ is penalized, the dense spectrum is pushed up, which in turn unveils the constant spectral gap of some trivial Hamiltonian $\Htriv$, as shown in  \cref{fig:gapped-vs-dense}.

Yet even though we can easily penalize an embedded Turing machine reaching a halting state in this way (i.e.\ by adding a penalty term for the head being in any terminating state $q_f$), a history state Hamiltonian is insufficient for the undecidability proof.
i) The energy penalty decreases as the embedded computation becomes longer~\cite{Bausch2016a}.
However, we require a constant energy penalty density across the spin chain.
ii) If we try to circumvent this problem by subdividing the tape to spawn multiple copies of the
Turing machine, we need to know the space required beforehand in which the computation halts, if it
halts---which is also undecidable.

\subsection{Amplifying the energy penalty}
\citeauthor{Cubitt2015_long} circumvent this problem by spawning a fixed \emph{density} of computations across an underlying Robinson lattice.
Like this, within every area $A$, the halting case obtains an energy penalty $\propto A$---the ground state energy density therefore differs by a constant for the Halting and non-Halting cases, allowing the ground state energy to diverge in the Halting case, which uncovers the spectral gap.
The fractal properties of the Robinson tiling further ensure that that every possible tape length appears with a non-zero density in the large system size limit, so knowledge of the Turing machine's required runtime space is unnecessary.

\begin{figure}%
	\centering
	\includegraphics[width=6.5cm]{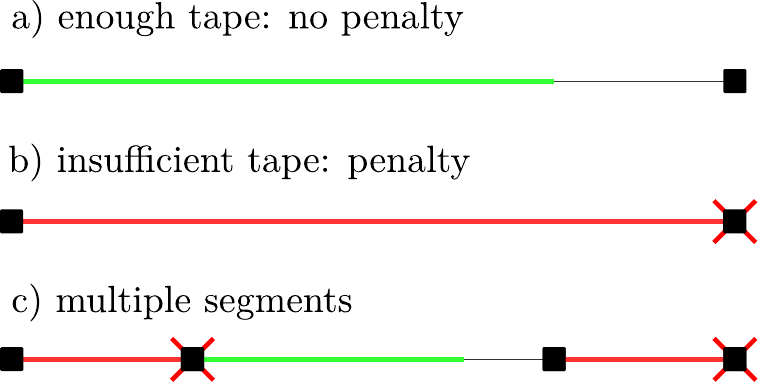}
	\caption{1D Robinson tiling analogue, the Marker Hamiltonian: penalty between halting and not halting for the TM is \emph{flipped}, i.e.\ we penalize \emph{not} halting (the TM head moves past the available tape, or equivalently the clock driving the TM runs out of space---see \cref{rem:loop}).
	a) If the tape---delimited by a black segment marker---is long enough for the TM to terminate, there is no penalty.
	b) If the tape is too short, a penalty will be inflicted due to the head running into the right segment marker.
	c) Mixed-length segments, each delimited with a segment marker.
	Those segments for which there is insufficient tape space pick up a penalty due to halting.
	The final construction introduces a small bonus for each segment, which shrinks the longer the segment is, and which is always smaller (in modulus) than the penalty that could be inflicted on the TM running on the available tape.
	In the halting case, this results in the lowest energy configuration being evenly-spaced segments with just enough tape for the TM to halt.
	In the non-halting case, a single segment is most favourable.
    }
	\label{fig:1d}
\end{figure}

We replace the fractal Robinson tiling with a 2-local ``marker'' Hamiltonian $\op H'$ on $(\field C^c)^{\otimes N}$, where
the markers---a special spin state $\ket\bd$---bound sections of tape used for the Turing machine.
$\op H'$ is diagonal with respect to boundary markers---i.e.\ $\op H'$ commutes with $\ketbra{\bd}{\bd}$.
Thus any eigenstate $\ket\psi$ has a well-defined \emph{signature} with respect to these boundaries, where the signature $\sig\ket\psi$ is defined as the binary string with 1's where boundaries are located, and 0's everywhere else.
We construct $\op H'$ in such a way that two consecutive markers bounding a segment will introduce an energy bonus that falls off quickly as the length of the segment increases: e.g.\ any eigenstate $\ket\psi$ with a signature
\[
	\sig\ket\psi=(\ldots,0,\underbrace{1,0,\ldots,0,1}_\text{length $w$},0,\ldots)
\]
will pick up a bonus of $\exp(-p(w))$ for some fixed polynomial $p$.
This bonus will be strictly smaller in magnitude than any potential penalty obtained from a computation running on the same segment of length $w$, i.e.\ when the TM head runs out of tape (see \cref{fig:1d}).

\subsection{Quantum phase estimation}
To the marker Hamiltonian, we add a history state Hamiltonian $\op H_\text{prop}(\phi,M)$.
Here
\begin{equation}\label{eq:phi-intro}
\phi=\phi(\ivar)=0.\ivar_11\ivar_21\ldots \ivar_{|\ivar|}00\ldots
\end{equation}
encodes an input parameter $\ivar\in\field N$ with $|\ivar|$ binary digits as binary fraction, where the digits of $\ivar$ are  interleaved by $1$s. The second parameter $M$ is a classical universal Turing machine.
We construct $\op H_\text{prop}$ to encode the following computation:
\begin{enumerate}
\item A Quantum Turing machine performs quantum phase estimation (QPE) on a single-qubit unitary that encodes the input $\phi$.
\item The classical universal TM $M$ uses the binary expansion of $\phi$ as input and performs a computation on it.
\end{enumerate}
Up to a slight modification for 1.\ which we will explain later, this is the same Turing machine construction as in~\cite[sec.~6]{Cubitt2015_long}.
The Hamiltonian $\op H_\text{prop}$ is set up to spawn one instance of the computation per segment, and we penalize the TM $M$ running out of available tape up to the next boundary marker with some local terms; as before we denote the resulting local Hamiltonian with $\op H_C$.
We finally add a trivial Hamiltonian $\Htriv$ with ground state energy $-1$ and constant spectral gap.
The overall Hamiltonian is then
\begin{align*}
    \op H_N=&\beta(\mu\op H'+\op H_C(M,\ivar)+\Hcont)\oplus 0\\
                        &+ 0\oplus\Htriv + \Hguard,
\end{align*}
where $\mu=2^{-|\phi|}$ is a small constant, defined for $\phi(\ivar)$ as given in \cref{eq:phi-intro} with $|\phi|=2|\ivar|$.
$\beta>0$ can be chosen arbitrarily small.

\begin{figure*}[t!]
	\centering
	\hspace*{-2cm}\includegraphics[]{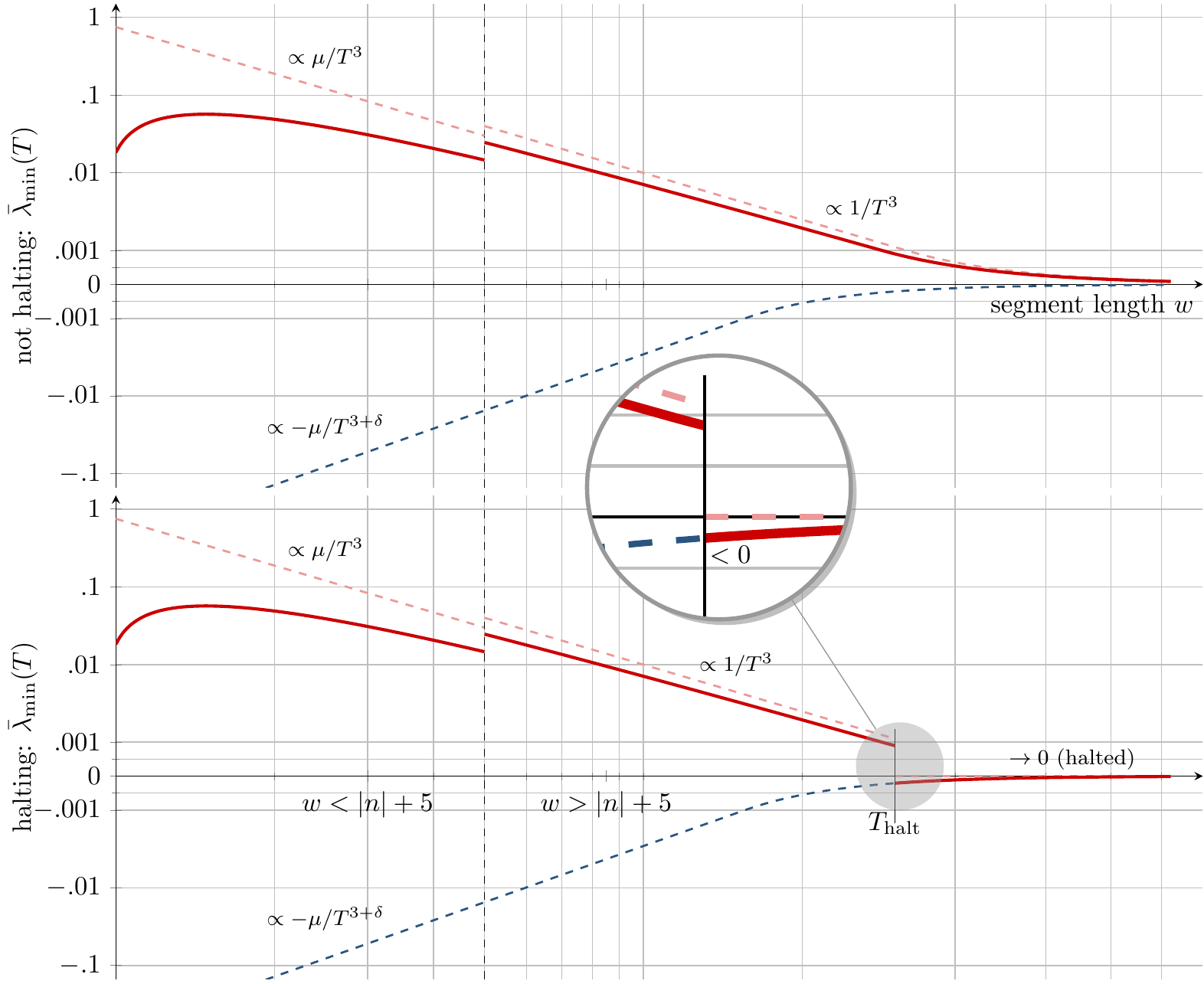}
    \caption{Energy contribution $\bar\lambda_\text{min}(T)$ from a single segment of length $w$ of the marker and TM Hamiltonian $\mu\op H'+\op H_C$ shown in red, where $T$ is the runtime of the encoded computation, bounded either by the segment length or by the halting time of the TM.
    The prefactor $\mu=2^{-2|\ivar|}$ is a small constant to compensate for the fact that on too short segments the phase estimation truncates the output, which we can only penalize with strength $\Omega(\mu/T^3)$. The dashed red line is the contribution of $\op H_C$, i.e.\ the energy penalty inflicted in case of the Turing machine running out of space. The dashed blue line is the bonus from $\mu\op H'$.}
    \label{fig:pen-and-bonux}
\end{figure*}

We will now explain how our construction differers during the QPE step.
QPE can be performed exactly when there is sufficient tape~\cite{Nielsen2010}.
In case there is insufficient space for the full binary expansion of the input parameter $\phi$, the output is truncated, and the resulting output state is not necessarily a product state in the computational basis anymore.

As in the 2D model, we have to allow for the possibility that the QPE truncates $\phi$, possibly resulting in the universal TM dovetailed to the QPE switching its behavior to halting.
In the 2D construction of~\cite{Cubitt2015_long}, one could circumvent this by simply subtracting off the energy contribution from truncated phase-estimation outputs; yet we cannot use this mechanism in our result, since it is not possible in the 1D construction, since we cannot \`a priori know the length of the segments on which the Turing machine runs.
Instead, we augment the QPE algorithm by a short program which verifies that the expansion has been performed in full, and otherwise inflicts a large enough energy penalty to offset the case that the UTM now potentially halts on the perturbed QPE output.

To this end, we make use of the specific encoding of $\phi$: the interleaved $1$s are flags indicating how many digits to expand.
Like this, before the inverse quantum Fourier transform, we know that the least-significant qubit is exactly in state $\ket+$ if the expansion was completed, and has overlap at least $\mu=2^{-|\phi|}$ with $\ket-$ otherwise.
By adding a penalty term to the Hamiltonian for said digit in state $\ket-$, we can penalize those segments with insufficient tape for a full expansion of the input, independently of whether the universal TM then halts or not on a faulty input.
This result manifests as a kink of the lower energy bound for a too-short segment of length $w$ in \cref{fig:pen-and-bonux}.
Yet since the marker Hamiltonian $\op H'$ is attenuated by $\mu$ as well, the energy remains nonnegative throughout for these segments.
Therefore, the only segments left to be analyzed are those for which the input can be assumed un-truncated.

\subsection{Ground state energy analysis}
When there is enough space for the QPE to be permormed, there are two possibilities for the ground state energy of $\op H_N$.
In case $M(\phi)$ does \emph{not} halt, any instance of the TM running on any tape length will
run out of tape space, incurring the penalty explained in \cref{fig:1d}. This halting
penalty will always dominate the bonus coming from the segment length, and we show the ground state energy to be $\lambda_\text{min}(\op H_N)\ge 0$.
In case the TM \emph{does} halt, there will be minimal segment length $w_\text{halt}$ above which
segments will not pick up the penalty from exhausting the tape.
Since the bonus given by the Marker Hamiltonian is decreasing with increasing segment length, the optimal energy configuration will therefore be achieved by partitioning the whole chain into segments of length $w_\text{halt}$, each of which picks up a tiny---but finite---negative energy contribution.
We prove $\lambda_\text{min}(\op H_N)<-\lfloor N / w_\text{halt} \rfloor \Omega(1/T_\text{halt}^3)$ in that case, where $T_\text{halt}$ is the number of computation steps till halting.
As the system size $N$ increases, the ground state energy will therefore diverge to $-\infty$.

The the claims of \cref{th:main} will then follow by combining the construction outlined with a trivial, gapped Hamiltonan and a dense spectrum, gapless Hamiltonian. The dense spectrum Hamiltonian will be modified to have ground state energy determined by the outcome of the computation of the QTM running on the tape segments defined by the Marker Hamitonian, so that the low-energy part of the spectum of the combined Hamiltonian will be gapped or gapless depending on whether the UTM halts or not.

\section{Marker Tiling}\label{sec:marker}
In this section we will give an explicit construction of the Marker Hamiltonian.
\subsection{Concept}
\DeclareDocumentCommand{\robinson}{ m m m m m }{
	\begin{tikzpicture}[
	scale=.55,
	x=3mm,
	y=3mm,
	nohalt/.style={green},
	halt/.style={red}
	]
	\clip (-.1,11.8) rectangle (47, 47);
	\foreach \s/\o/\n/\opacity/\tm in {%
		1/0/23/.4/#1, 4/1.5mm/5/.6/#2, 16/7.5mm/1/.8/#3, 64/-64.5mm/0/1./#4%
	} {
		\begin{scope}[
		xshift=-\o,
		yshift=-\o,
		opacity=\opacity
		]
		\begin{scope}[
		scale=\s
		]
		\foreach \x in {0,...,\n} {
			\foreach \y in {0,...,\n} {
				\draw (2*\x,2*\y) rectangle (2*\x+1,2*\y+1);
				\draw[line width=.5mm,\tm] (2*\x,2*\y+1) -- (2*\x+1,2*\y+1);
			};
		};
		\end{scope}
		\end{scope}
	}
	\fill[white,opacity=.75] (-1,42.5) rectangle (16.5,48);
	\node[right] at (.8,45) {#5};
	\end{tikzpicture}
}
\begin{figure}[hbp]
\robinson{nohalt}{nohalt}{nohalt}{nohalt}{a) non-halting}\\[.4cm]
\robinson{nohalt}{halt}{halt}{halt}{b) halting}\\[.4cm]
\tikz{\node at (0,0) {c) non-halting}; \node[white] at (\linewidth-2.33cm, 0) {.}; } \\
\includegraphics[width=8.1cm]{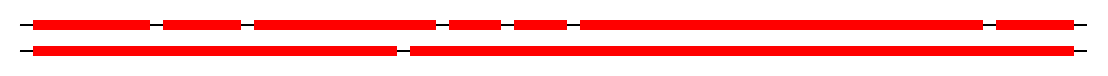}\\[.4cm]
\tikz{\node at (0,0) {d) halting}; \node[white] at (\linewidth-2cm, 0) {.}; }\\
\includegraphics[width=8.1cm]{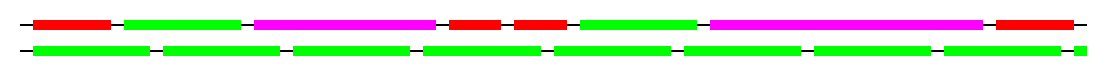}\\[.4cm]
\caption{2D Robinson tiling construction with instances of a Turing machine running on the upper edges of the fractal rectangles.
	Each edge represents the available tape for the Turing machine. In the non-halting case a), there will never be any halting penalty, no matter how much tape there is available.
	In the halting case b), there is a threshold side length after which each rectangle larger than the threshold contributes a penalty (red)---which yields a small but nonzero ground state energy density; the ground state energy diverges.\\
	In the 1D case we show the segments emerging from the Marker Hamiltonian from \cref{sec:marker} (cf.\ \cref{fig:1d}). In the non-halting case c), no segment length is long enough to contain the entire computation; all segments obtain a penalty (red).
	In the halting case d), there is an ideal segment length (green) with \emph{just} enough tape tor the TM to halt; as per \cref{fig:pen-and-bonux}, this segment has the maximum possible bonus. Segments too short (red) contribute a net energy penalty, whereas segments too long (magenta) do contribute a bonus, yet not one as large as the optimal segment length.
}
\label{fig:2d}
\end{figure}
In order to spawn a fixed density of computations in 1D without the aid of a fractal underlying structure, we need to know an optimal segment length to subdivide the spin chain into.
In the halting case, this should be just enough tape for the computation to terminate.
However, if we aim to construct a reduction from the Halting Problem, we cannot know the space required beforehand---which, in particular, could be uncomputably large, or infinite!
One way out is to spawn Turing machines on tapes of all possible lengths, \emph{and} do this with a fixed density.
In 2D this can be achieved using an underlying fractal tiling such as that due to Robinson~\cite{Robinson1971}, see \cref{fig:2d}.

The two-dimensional construction thus crucially depends on one's ability to create structures of all length scales, in order to define ``lines'' of all sizes~\footnote{In fact, sizes $4^n$ for all integer $n$~\cite{Cubitt2015_long}.}, which are then used as a tape for running a Quantum Turing machine: the key property of the fractal which makes the construction work is that every possible tape length indeed appears with a non-zero density in the large system size limit.

As already mentioned, constructing a fractal tiling with a fixed density of structures of all length scales seems impossible in one dimension.
We therefore replace the fractal Robinson tiling with a ``marker'' Hamiltonian, where
the markers bound sections of tape used for the Turing machine (just like the lower boundaries of the squares in \cref{fig:2d}).
We will construct the Hamiltonian in such a way that two consecutive markers bounding a segment will introduce an energy bonus that falls off quickly as the length of the segment increases.
This bonus will be weak enough to permit an executing QTM to ``extend'' the tape as needed, in the sense that the bonus due to the marker boundaries is strictly smaller in magnitude than the potential penalty introduced when the QTM head runs out of tape (see \cref{fig:1d}).

\subsection{The Marker Hamiltonian}
We now construct the Marker Hamiltonian. It will be a local Hamiltonian $\op H$ on a chain of qudits with a special spin state $\ket\bd$, which we call a \emph{boundary}, and which will separate the different tape segments.
For a product state $\ket\psi$, we define a \emph{signature} with respect to these boundaries as the binary string with 1's where boundaries are located, and 0's everywhere else, which we will denote by $\sig\ket\psi$.
The Hamiltonian we construct will leave the signature invariant, i.e.\ $\sig\ket\psi=\sig\op H\ket\psi$ for all $\ket\psi$.
This property allows us to block-diagonalize $\op H$ with respect to states of the same signature.
For a given block signature, say $(1,0,0,0,1,0,0,1)$, the Hamiltonian gives an energy bonus (i.e.\ a negative energy contribution) to each $1$-bounded segment, which is large when the boundary markers are close, and becomes smaller the longer the segment.
This introduces a notion of boundaries that are ``attracted'' to each other, and our goal is to have a falloff as $\sim -1/g(l)$ in the segment's length $l$, where $g$ is a function we can choose.
In brief, ``attraction'', in this context, simply means that the energy bonus given by $\op H$ to pairs of boundary symbols grows the closer they are to each other.

For reasons of clarity, we start by constructing a Hamiltonian where the falloff is a fixed function $g$ that is asymptotically bounded as $\Omega(2^l) \leq g \leq \BigO(4^l)$.
In a second step, we allow the falloff to be tuned, replacing $l$ by an arbitrary exponential in $l$, such that the falloff is \emph{doubly} exponential in the segment length.
\subsection{Construction}
We start with the following lemma.
\begin{lemma}\label{lem:elastic-1}
Let $\Hs\coloneqq (\field C^3)^{\otimes N}$ be a chain of qutrits of length $N$ with local computational basis $\{ \ket\bd, \ket\bl, \ket\hd \}$,
and for a product state $\ket\psi\in\Hs$, $\ket\psi=\ket{\psi_1}\cdots\ket{\psi_N}$, we define a ``boundary signature'' $\sig\ket\psi=(\braket\bd{\psi_1},\ldots,\braket{\bd}{\psi_N})$, extended linearly to $\Hs$.
Define two local Hamiltonian terms
\begin{align*}
	\op h_1 & \coloneqq   \ketbra{\hd}\otimes\big(
    \ket{\bl\bl} - \ket{\hd\bl}\big)\big(\bra{\bl\bl}-\bra{\hd\bl}\big) \\
	\op h_2 & \coloneqq   \big(\ket{\hd\bl} - \ket{\hd\hd}\big)\big(\bra{\hd\bl}-\bra{\hd\hd}\big)\otimes\ketbra{\bd}
\end{align*}
and set $\op h_\mathrm{walk} \coloneqq  \op h_1 + \op h_2$.
Let
\begin{align*}
    \op p &  \coloneqq  2\ketbra{\bd\bd} + 2\ketbra{\bl\hd} + 2\ketbra{\bd\bl}
\end{align*}
Then
\begin{align*}
    \op H_\mathrm{walk} + \op P \coloneqq &\sum_{i=1}^{N-2}\1_{\{1,\ldots,i-1\}}\otimes\op h_\mathrm{walk}\otimes\1_{\{i+3,\ldots,N\}}\\
    +&\sum_{i=1}^{N-1}\1_{\{1,\ldots,i-1\}}\otimes\op p\otimes\1_{\{i+2,\ldots,N\}}
\end{align*}
is a 3-local Hamiltonian which is positive semi-definite, and block-diagonal with respect to the subspaces spanned by states with identical signature $\sig$.
\end{lemma}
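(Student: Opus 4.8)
\emph{Proof strategy.} The statement bundles three claims — $3$-locality, positive semidefiniteness, and block-diagonality with respect to signature sectors — and each is essentially structural, so no estimates are involved. First I would dispatch $3$-locality by inspection: every summand of $\op H_\mathrm{walk}$ is $\op h_1+\op h_2$ supported on three consecutive sites, and every summand of $\op P$ is $\op p$ supported on two. For positive semidefiniteness, the plan is to exhibit each of $\op h_1,\op h_2,\op p$ as manifestly PSD and then invoke closure of the PSD cone under ampliation with the identity and under summation. Concretely: $\op p=2\ketbra{\bd\bd}+2\ketbra{\bl\hd}+2\ketbra{\bd\bl}$ is a nonnegative combination of (orthogonal) computational-basis projectors; $\op h_1=\ketbra{\hd}\otimes\big(\ket{\bl\bl}-\ket{\hd\bl}\big)\big(\bra{\bl\bl}-\bra{\hd\bl}\big)$ can be rewritten as the rank-one operator $ww^\dagger$ with $w=\ket{\hd\bl\bl}-\ket{\hd\hd\bl}$; and likewise $\op h_2=uu^\dagger$ with $u=\ket{\hd\bl\bd}-\ket{\hd\hd\bd}$. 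Hence $\op H_\mathrm{walk}+\op P\succeq 0$.

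For block-diagonality, I would reformulate the signature in operator terms. Let $\op N_i$ denote the single-site projector $\ketbra{\bd}$ on site $i$ (tensored with the identity elsewhere); these commute pairwise, and their joint eigenspaces, indexed by $\sigma\in\{0,1\}^N$, are precisely the signature sectors $\Hs_\sigma=\spn\{\ket{x_1\cdots x_N}\colon x_i=\bd\iff\sigma_i=1\}$. So it is enough to show $\op H_\mathrm{walk}+\op P$ commutes with every $\op N_i$, and by locality it suffices to check each local term against each $\op N_j$ meeting its support. The term $\op p$ is diagonal and commutes with all $\op N_j$ trivially. For $\op h_1=ww^\dagger$, the two basis vectors making up $w=\ket{\hd\bl\bl}-\ket{\hd\hd\bl}$ carry the identical boundary pattern $(0,0,0)$ on the relevant three sites, so $w$ lies in a single signature sector there and $ww^\dagger$ commutes with those $\op N_j$; the same holds for $\op h_2=uu^\dagger$ with $u=\ket{\hd\bl\bd}-\ket{\hd\hd\bd}$ and pattern $(0,0,1)$. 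Summing, the whole operator commutes with every $\op N_i$, i.e.\ it is block-diagonal over the $\Hs_\sigma$.

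I do not expect a genuine obstacle here; the one point to get right is the algebraic shape of the two walk terms — each is a rank-one $\ket v\bra v$ whose defining vector $\ket v$ is a \emph{difference} of two \emph{distinct} computational-basis states sharing the \emph{same} $\bd$-pattern, where distinctness gives positivity and the common pattern gives block-diagonality. Everything quantitative about the construction — identifying, inside a fixed sector $\Hs_\sigma$, the path-graph structure that $\op h_\mathrm{walk}$ induces on the ``legal'' configurations left unpenalized by $\op p$, and extracting from it the $\sim -1/g(\ell)$ energy bonus of a length-$\ell$ segment — is not needed for the three properties asserted here and is deferred to the lemmas that follow.
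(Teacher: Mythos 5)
Your proof is correct and takes essentially the same route as the paper's, which also argues by construction for $3$-locality and PSD, and for block-diagonality by noting that no local term affects the $\ket\bd$ subspace (hence preserves signatures). Your version is somewhat more carefully articulated — identifying $\op h_1=ww^\dagger$, $\op h_2=uu^\dagger$ explicitly for the PSD claim, and phrasing block-diagonality as commutation with the commuting family $\op N_i=\ketbra{\bd}_i$ whose joint eigenspaces are the signature sectors — but the underlying idea is the same.
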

\begin{proof}
The first two claims are true by construction.
The Hamiltonian $\op H_\text{walk} + \op P$ is further block-diagonal with respect to $\sig$ because $\sig(\op H_\text{walk} + \op P)\ket\psi=\sig\ket\psi\ \forall \ket\psi\in\Hs$,
as none of the local terms ever affect the subspaces spanned by the boundary symbol $\ket\bd$.
\end{proof}

As a second step, we employ a boundary trick by \textcite{gottesman2009quantum} to ensure that blocks not terminated by a boundary marker have a ground state energy at least $2$ higher than $\bd$-terminated blocks.
It is worth emphasizing that this is not achieved by a term that \emph{only} acts on the boundary, but in a translationally-invariant way, i.e.\ by adding the same one- and two-local terms throughout the chain.
In brief, it exploits the fact that while there are $N$ spins in the chain, there is only $N-1$ edges between them.
We state this rigorously in the following remark.
\begin{remark}[\textcite{gottesman2009quantum}]\label{rem:elastic-1}
  Give an energy \emph{bonus} of strength 4 to $\ket\bd$, and an energy \emph{penalty} of $2$ to $\ket\bd$ appearing next to any symbol (including $\bd$ itself.
  I.e.\ if $\ket\bd$ appears at the end of the chain there will be a net bonus of $2$, otherwise a net penalty of zero).
  Collect these terms in a Hamiltonian $\op P'$.
  Then, apart from positive semi-definiteness,
  \begin{equation}\label{eq:H-rem:elastic-1}
  \op H \coloneqq \op H_\text{walk} + \op P + \op P'
  \end{equation}
  where $\op H_\text{walk}$ and $\op P$ are defined in \cref{lem:elastic-1}, has the same properties claimed in \cref{lem:elastic-1}, but any block not terminated by a boundary will have energy $\ge -2$, while all properly-terminated blocks will have a ground state energy $-4$.
\end{remark}
\begin{proof}
	The first claim is straightforward, as $\op P'$ does not change the interaction structure of $\op H$.
	The last claim follows from the fact that the only way of obtaining a net bonus is to place a boundary symbol at the end of the spin chain, where it picks up a net bonus of $2$.
	The maximum possible bonus of any state is thus $4$, which will be achieved by signatures that are properly bounded on either side.
\end{proof}

From now on, when we talk of ``properly bounded'', we always mean a signature with boundary blocks $\bd$ at each end.
Individual cases where only one side carries a boundary will be mentioned as such explicitly then.

\subsection{Spectral Analysis}
In the following, the ``good'' blocks will therefore be those that have ground space energy $-4$, all of which aFre properly bounded.
\Cref{rem:elastic-1} allows us to analyze the blocks more closely, which we do in the following lemma.
\begin{lemma}\label{lem:elastic-2}
Let $\op H = \op H_\text{walk} + \op P + \op P'$ be as in \cref{eq:H-rem:elastic-1}.
If we write $\op H=\bigoplus_{s\in\{0,1\}^N} \op H_s$ as the block-decomposition of $\op H$, where $s$ denotes an arbitrary length $N$ binary string, then every properly bounded block will either
\begin{enumerate}
	\item have two consecutive boundaries, and thus a ground state energy $\ge -2$, or
	\item have signature of consecutive $1$-bounded segments of $0$s. In this case, $\op H_s$ further block-diagonalizes into $\op H_s=\op G_s\oplus\op R_s$, where $\op R_s$ is within the span of states penalized by $\op P$ in \cref{lem:elastic-1}, and $\op G_s$ in its kernel.
	\item The ground state energy of $\op R_s$ is $\ge-2$.
	\item The ground state energy of $\op G_s$ equals $-4$, and $\op G_s$ will be a sum of terms of the form $\1_{\{1,\ldots,l\}}\otimes\Delta_w\otimes\1_{\{l+w+1,\ldots,N\}}$, where $\Delta_w$ is the Laplacian of a path graph of length $w$ (i.e.\ a graph with vertices $\{1,\ldots,w\}$ and edges $\{(i,i+1):i=1,\ldots,w-1\}$).
	Here $l$ and $w$ depend on the signature $s$---more precisely, for every contiguous section of $0$s in $s$ surrounded by a pair of $1$s, $l$ marks the left $1$ and $w$ is the length of the section of $0$s.
\end{enumerate}
\end{lemma}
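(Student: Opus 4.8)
The plan is to reduce everything to a per-signature analysis. Since $\op H$ is block-diagonal with respect to $\sig$ (\cref{lem:elastic-1}) and $\op P+\op P'$ is diagonal while $\op h_1,\op h_2$ never involve a $\ket\bd$, each block $\op H_s$ (for a properly-bounded signature $s$) is well-defined. A key preliminary observation is that $\op P'$ depends only on the locations of the boundary symbols, so $\op P'|_s$ is a scalar: if $s$ has no two consecutive boundaries then each interior boundary contributes $-4+2+2=0$ and each of the two end boundaries contributes $-4+2=-2$, hence $\op P'|_s=-4\cdot\1$, while in general $\op P'|_s\ge -4\cdot\1$ by the maximal-bonus statement of \cref{rem:elastic-1}.

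If $s$ has two consecutive boundaries, then a $\ketbra{\bd\bd}$ term of $\op p$ equals the identity on the whole block, so $\op P|_s\ge 2\cdot\1$; combined with $\op h_\mathrm{walk}\ge 0$ and $\op P'|_s\ge -4\cdot\1$ this gives $\op H_s\ge -2\cdot\1$, which is the first case. Otherwise $s$ consists of single $1$s separated by runs of $0$s with a $1$ at each end. The forbidden patterns $\bl\hd$ and $\bd\bl$ of $\op p$ force every computational-basis state of $\ker\op P|_s$ to be a product, over the runs of $0$s, of segment configurations $\hd^{a}\bl^{w-a}$ with $1\le a\le w$, where $w$ is the length of that run (a segment must begin with $\hd$ and cannot return to $\hd$ once it has produced a $\bl$). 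I would then verify that $\op h_1$ and $\op h_2$ send such states to combinations of such states: the only triples on which they act nontrivially are $\hd\bl\bl\leftrightarrow\hd\hd\bl$ in the interior of a segment and $\hd\bl\bd\leftrightarrow\hd\hd\bd$ at a right boundary, and shifting the $\hd/\bl$ interface by one step never creates $\bl\hd$, $\bd\bl$ or $\bd\bd$. Hence $\op h_\mathrm{walk}$ preserves $\ker\op P|_s$ and (being Hermitian) its orthogonal complement, giving the decomposition $\op H_s=\op G_s\oplus\op R_s$ of the second case.

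For the third case, on $(\ker\op P|_s)^\perp$ the diagonal operator $\op P|_s$ has only eigenvalues $\ge 2$, so $\op R_s=\op h_\mathrm{walk}|_s+\op P|_s+\op P'|_s\ge 0+2-4=-2$ there. For the last case, on $\ker\op P|_s$ we have $\op P|_s=0$ and $\op P'|_s=-4\cdot\1$, so $\op G_s=\op h_\mathrm{walk}|_{\ker\op P|_s}-4\cdot\1$; since every triple hit by $\op h_1$ or $\op h_2$ lies within one segment, this operator is a sum of commuting terms, one per $0$-run, each acting as the identity outside its run. Restricted to the $w$ states $\hd^{a}\bl^{w-a}$, $a=1,\dots,w$, of a length-$w$ run, $\op h_1$ supplies the edge terms $(\ket a-\ket{a+1})(\bra a-\bra{a+1})$ for $a=1,\dots,w-2$ (moving the interface via $\hd\bl\bl\leftrightarrow\hd\hd\bl$), leaving the vertex $w$ detached, and $\op h_2$ supplies the missing edge $(\ket{w-1}-\ket{w})(\bra{w-1}-\bra{w})$ using the right boundary; together these are exactly the edge terms of the path-graph Laplacian $\Delta_w$. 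Thus $\op G_s+4\cdot\1=\bigoplus_{\text{runs}}\Delta_{w}$ in the stated $\1\otimes\Delta_w\otimes\1$ form, and since each connected path Laplacian has least eigenvalue $0$ (kernel spanned by the uniform vector), $\op G_s$ has ground state energy $-4$, consistent with \cref{rem:elastic-1}. The delicate—though routine—part is precisely this identification: enumerating the valid segment configurations and checking that $\op h_1$ yields a path that is one edge short while the boundary term $\op h_2$ closes it with the correct Laplacian weight, together with the verification in the second case that no generator ever leaves a segment or leaves $\ker\op P$.
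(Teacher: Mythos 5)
Your proof is correct and follows essentially the same approach as the paper: block-decompose by signature, split each properly bounded block via $\ker\op P|_s$ and its orthogonal complement, and identify the kernel restriction as a sum of terms $\1\otimes\Delta_w\otimes\1$ built from the allowed configurations $\hd^a\bl^{w-a}$ with $\op h_1$ supplying the interior edges and $\op h_2$ the final edge at the right boundary. You are in fact slightly more careful than the paper's own write-up of the first claim, which states $\bra\psi\op H\ket\psi\ge\bra\psi\op P\ket\psi\ge 2$ without explicitly folding in the $-4$ contribution from $\op P'$ that brings the bound down to the claimed $-2$.
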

\begin{proof}
If there are two neighbouring $1$s in the signature $s$, the penalty term $\ketbra{\bd\bd}$ picks up an energy contribution of 2.
Since $\op H_\mathrm{prop}$ is already positive semi-definite and block-diagonal with respect to signatures, any state $\ket\psi$ with support fully contained in the block corresponding to signature $s$ must thus necessarily satisfy $\bra\psi\op H\ket\psi \ge \bra\psi \op P \ket\psi\ge2$.
The first claim follows.

So let us assume that all $1$s are spaced away from each other with at least one $0$.
Within the 2-dimensional $0$ subspace spanned by the local basis states $\ket\hd$ and $\ket\bl$.
We note that the penalized substring $\ket{\bl\hd}$ is also an invariant, meaning that no transition rule can create or destroy this configuration.
Any state that, when expanded in the computational basis, has at least one expansion term with said substring will thus necessarily have \emph{all} terms with this specific substring.
The same arguments holds for the invariant substring $\ket{\bd\bl}$, and the second claim follows.

Since any eigenstate of $\op R_s$ picks up the full penalty contribution of $2$, the third claim follows.

If neither of the invariant substrings $\ket{\bl\hd}$ and $\ket{\bd\bl}$ occur, we can assume that all $1$-bounded segments of $0$s lie within the span of the states
\begin{align}
    \ket{\hd\bl\bl\cdots\bl\bl}, \ket{\hd\hd\bl\cdots\bl\bl}, \ldots \notag \\ \ldots, \ket{\hd\hd\hd\cdots\hd\bl}, \ket{\hd\hd\hd\cdots\hd\hd}. \label{eq:elastic-states}
\end{align}
Since there is no penalty acting on any of those states, the ground state energy of $\op G_s$ equals $-4$.

Each such segment of contiguous $0$s thus defines a separate path graph, where the vertices are precisely these states, linked by the transition rules given in $\op H_\mathrm{walk}$ in \cref{lem:elastic-1}.
Denote the path graphs corresponding to these segments with $G_1,\ldots,G_n$, where we assume that there are $n$ 1-bounded segments of 0s in signature $s$.
As each segment is independent of the others,
the overall graph spanned by these individual paths is the Cartesian product of the individual paths, i.e.\ $G=G_1\square G_2\square\ldots\square G_n$.
This is precisely a hyperlattice with side lengths uniquely determined by the lengths of the individual segments.

The transition rules in $\op h_\mathrm{walk}$ therefore result in a block $\op G_s=\Delta_G$, i.e.\ the Hamiltonian is precisely the Laplacian of the graph of determined by the transition rules (for an extensive analysis see e.g.~\cite{Bausch2016}).
We further know that the Laplacian of a Cartesian product of graphs decomposes as
\begin{align}
    \Delta(G)&=\Delta(G_1)\otimes\1\otimes\ldots\otimes\1 \notag \\ &+ \1\otimes\Delta(G_2)\otimes\1\otimes\ldots\otimes\1 + \ldots \notag \\ \ldots
    &+ \1\otimes\ldots\otimes\1\otimes\Delta(G_n), \label{eq:cartesian-dec}
\end{align}
and the last claim follows.
\end{proof}
A more direct route to \cref{eq:cartesian-dec} is to note that $\op H_\mathrm{walk}$ is by definition the Laplacian of a graph with vertices given by strings of the alphabet $\{\bd,\bl,\hd\}$, and edges by the transition rules in \cref{lem:elastic-1}.
Those connected graph components that do not carry a penalty due to an invalid configuration (which either holds for \emph{all} vertices, or \emph{none}) are lattices in $n$ dimensions---where $n$ is the number of $1$-bounded segments---and side lengths determined by the segments' lengths.
\Cref{eq:cartesian-dec} is precisely the Laplacian of this grid graph.

For the sake of clarity, we will keep calling the segments of consecutive zeros bounded by $\bd$ on either side ``$1$-bounded segments'', and when talking about the entire string we use the term ``properly bounded''.
We will henceforth re-label the states in \cref{eq:elastic-states} as $\ket 1,\ldots,\ket w$, where $w$ denotes the length of the segment.
Our next step will be to add a 2-local bonus term which gives an energy bonus to the arrow appearing to the left of the boundary, i.e.\ to $\ket{\hd\bd}$.
\begin{lemma}\label{lem:elastic-3}
Define $\op H'\coloneqq \op H+\op P''+\op B$, where
\begin{itemize}
	\item $\op H$ is taken from \cref{eq:H-rem:elastic-1},
	\item $\op P''=1/2\sum_{i=1}^N\ketbra{\bd}_i$ gives a penalty of $1/2$ to any boundary term, and
	\item $\op B=-\sum_{i=1}^{N-1}\ketbra{\hd\bd}_{i,i+1}$ gives a bonus of 1 to states where the arrow has reached the right boundary.
\end{itemize}
Then
\begin{enumerate}
	\item $\op H'$ is still 3-local and block-diagonal in signatures, i.e.\ $\op H'\coloneqq \sum_s\op H'_s$.
	If $s$ is properly bounded and has no double $11$s, the corresponding block decomposes as $\op H'_s=\op G'_s\oplus \op R'_s$ similar to \cref{lem:elastic-2}, but such that the primed versions carry the extra penalties and bonus terms.
	\item For any such $s$, $\op R'_s\ge\op G'_s+2$.
	\item $\op G'_s$ breaks up into sum of terms of the form $\1\otimes\Delta'_w\otimes\1$, where $\Delta'_w$ is a perturbed path graph Laplacian $\Delta'_w\coloneqq \Delta_w - \ketbra{w}$ (where $\ket w$ labels the last of the basis states given in \cref{eq:elastic-states}, as mentioned).
\end{enumerate}
\end{lemma}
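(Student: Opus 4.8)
The plan is to reuse the block-structure established in \cref{lem:elastic-2}, observing that the two new terms $\op P''$ and $\op B$ are again diagonal with respect to the boundary symbol $\ket\bd$ and hence do not disturb the block decomposition. First I would note that $\op P''$ is a sum of one-site projectors onto $\ket\bd$ and $\op B$ is a sum of two-site projectors onto $\ket{\hd\bd}$; neither involves any transition rule, so $\sig$ is still preserved and $\op H'=\sum_s\op H'_s$ with the same block index set. For a properly bounded $s$ with no double $11$s, the same invariant-substring argument as in \cref{lem:elastic-2} (the substrings $\ket{\bl\hd}$ and $\ket{\bd\bl}$ remain invariant because we have added nothing that connects them to the unpenalized sector) gives the splitting $\op H'_s=\op G'_s\oplus\op R'_s$, where $\op R'_s$ lives in the span of states carrying the $\op P$-penalty and $\op G'_s$ in its kernel. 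This establishes claim~1.

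For claim~2, I would argue as follows. The term $\op B$ contributes a bonus of at most $1$ per $\bd$-terminated segment (in fact exactly $1$ on the single basis state $\ket{w}$ of each segment, and $0$ otherwise), while $\op P''$ contributes only non-negative penalties; so the effect of $\op P''+\op B$ can lower any energy by at most the number of segments — but more to the point, the \emph{gap} between the penalized sector $\op R'_s$ and the unpenalized sector $\op G'_s$ is controlled by the $\op P$-penalty of strength $2$ inherited from \cref{lem:elastic-2}, and the new terms $\op P''$ and $\op B$ act identically (or with only a relative shift bounded by the same constant) on both sectors because they touch only the boundary sites and the site immediately to their left, which are common to both. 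The cleanest route: any eigenstate of $\op R'_s$ still picks up the full $\op P$-penalty of $2$ relative to the $\op P$-kernel, and since $\op B\ge-\1$ contributes a bonus of at most one unit per segment on both $\op R'_s$ and $\op G'_s$ and $\op P''\ge0$ likewise, one checks $\op R'_s\ge\op G'_s+2$ by comparing the two sectors segment-by-segment. The main subtlety to get right here is bookkeeping the additive constants coming from $\op P''$ (which gives a fixed $-$ sorry, $+$ $\tfrac12$ per $\bd$, hence $+1$ per internal boundary shared by two segments) so that the claimed inequality holds uniformly; I expect this constant-chasing — making sure the $\op P''$ contributions on $\op R'_s$ and $\op G'_s$ differ by the same amount — to be the one genuinely fiddly point, though it is routine.

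For claim~3, I would restrict $\op G'_s$ to a single $1$-bounded segment of length $w$, using the Cartesian-product decomposition \eqref{eq:cartesian-dec} from \cref{lem:elastic-2} that already expresses $\op G_s$ as a sum of commuting terms $\1\otimes\Delta_w\otimes\1$. On the relabelled basis $\ket1,\ldots,\ket w$ of \eqref{eq:elastic-states}, the term $\op P''$ adds a constant $1$ per pair of adjacent segments (absorbed into the overall shift, matching the $-4$ baseline from \cref{rem:elastic-1}) and $\op B=-\ketbra{\hd\bd}$ restricted to this segment is exactly $-\ketbra{w}$, since $\ket{\hd\bd}$ at the boundary corresponds precisely to the arrow having reached position $w$. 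Hence each segment contributes $\Delta_w-\ketbra{w}=\Delta'_w$, and summing over segments gives $\op G'_s=\sum\bigl(\1\otimes\Delta'_w\otimes\1\bigr)$ up to the global additive constant, which is claim~3. The only thing to verify carefully is that the identification $\ket{\hd\bd}\leftrightarrow\ket{w}$ is exact on the unpenalized sector — but this is immediate from the list \eqref{eq:elastic-states}, whose last entry $\ket{\hd\hd\cdots\hd\bl}$ followed by the segment-closing $\ket\bd$ is the unique state with $\hd$ adjacent to the right boundary.
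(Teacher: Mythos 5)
Your overall structure matches the paper's: claims 1 and 2 by appeal to the block decomposition from \cref{lem:elastic-2}, claim 3 via the Cartesian-product decomposition \eqref{eq:cartesian-dec}. But two things need fixing.

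For claim~3, you describe the bonus'd state as ``$\ket{\hd\hd\cdots\hd\bl}$ followed by the segment-closing $\ket\bd$'', and say this is ``the unique state with $\hd$ adjacent to the right boundary''. That state ends in $\bl$, so the adjacent pair is $\bl\bd$, which does \emph{not} match $\ketbra{\hd\bd}$. The unique state in \cref{eq:elastic-states} with $\hd$ next to $\bd$ is the \emph{final} one, $\ket{\hd\hd\cdots\hd\hd}$, which is what the paper relabels $\ket w$. Your conclusion $\op B\mapsto -\ketbra w$ on each segment is right, but the sentence justifying it contradicts itself. The paper's own route is also slightly more abstract (it shows that under $G_1\square G_2$ a bonus on a single vertex becomes $\op B_1\otimes\1$, so the sum structure of \eqref{eq:cartesian-dec} is exactly preserved), but once you fix the state identification your per-segment argument is equivalent.

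For claim~2, the load-bearing assertion---that $\op P''$ and $\op B$ ``act identically (or with only a relative shift bounded by the same constant) on both sectors''---fails for $\op B$. While $\op P''$ is indeed a constant $+k/2$ on the whole signature block ($k$ the number of markers), $\op B$ is a non-constant diagonal that can also award its $-1$ bonus to a state in the \emph{penalized} sector. Concretely, take $s=(1,0,0,0,1)$ and the product state $\ket{\bd\hd\bl\hd\bd}$: it contains the invariant substring $\bl\hd$, so lies in $\op R'_s$, is annihilated by $\op H_\text{walk}$, and also has $\hd$ next to the right boundary; it therefore collects $\op P=+2$, $\op P'=-4$, $\op P''=+1$, and $\op B=-1$. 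So $\op P''+\op B$ is \emph{not} a sector-independent shift, and the desired separation cannot be read off the way you propose. It has to be argued case-by-case that the $\op P$-penalty an $\op R'_s$ state inherits outweighs any $\op B$-bonus it can collect---the ``constant-chasing'' you flag as fiddly is where the actual content lives, and you have not carried it out. (In fairness, the paper's own one-line appeal to \cref{lem:elastic-2} is also terse here, but the specific mechanism you offer---sector-independence of $\op B$---would not close the gap even with the bookkeeping done.)
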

\begin{proof}
The first two claims follow immediately from \cref{lem:elastic-2}, since all of the newly-introduced terms leave signatures and penalized substrings invariant, and are at most 2-local.

Since the Cartesian graph product is associative and commutative, it is enough to show the decomposition for the case of two graphs $G_1$ and $G_2$, and a single vertex $v\in G_1$ which we want to give a bonus of $-1$ to. Denote the bonus matrix for $G_1$ with $\op B_1$.
We have that the adjacency matrix $\op A_{G_1\square G_2}=\op A_{G_1}\otimes\1 + \1\otimes \op A_{G_2}$.
Vertex $v$ is thus mapped to a family of product vertices $(v,v')_{v'\in G_2}$, which are precisely the corresponding bonus'ed vertices in $G=G_1\square G_2$ that have to receive a bonus of $-1$.
The bonus term for $G$ is thus $\op B=\op B_1\otimes\1$, and the claim follows.
\end{proof}

We know that any Laplacian eigenvalues $\mu,\nu$ of two graphs $G_1,G_2$ combine to a Laplacian eigenvalue $\mu+\nu$ of $G_1\square G_2$ (see e.g.\ \cite[Ch.~1.4.6]{Brouwer2012}).
It is straightforward to extend this fact to the case of bonus'ed graphs,
which will allow us to analyse the spectrum of each signature block $\op H'_s$.

The reader will have noticed that in contrast to \cref{lem:elastic-2}, \cref{lem:elastic-3} does not make any claims about the ground state energy of the individual blocks.
Na\"ively, one could assume that the ground state energy of each block will diverge to $+\infty$ with the number of boundaries present, as each of them carries a penalty of $+1/2$---but how does this balance with the bonus of $-1$, which we apply to only a \emph{single} basis state in the graph Laplacian's ground space, and not on each vertex?

In order to answer this question, let us step back for a moment and develop a bound for the lowest eigenvalue of a modified path graph Laplacian $\Delta'_w$.
We will do this in a series of technical lemmas.
\begin{lemma}
    $\Delta'_w$ has precisely one negative eigenvalue.
\end{lemma}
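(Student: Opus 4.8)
The plan is to pin the negative eigenvalue count from above by eigenvalue interlacing under a rank-one perturbation, and from below by a one-line Rayleigh-quotient estimate with an explicit test vector. First I would record the relevant facts about $\Delta_w$: it is the Laplacian of the path on $w$ vertices, hence positive semi-definite, and since the path graph is connected its kernel is one-dimensional, spanned by the all-ones vector $u=(1,\dots,1)^T$. Thus, ordering eigenvalues increasingly, $\Delta_w$ has eigenvalues $0=\lambda_1<\lambda_2\le\dots\le\lambda_w$, with $0$ simple.

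Next, for the ``at most one'' half, write $\Delta_w=\Delta'_w+\ketbra{w}$, i.e.\ $\Delta_w$ is obtained from $\Delta'_w$ by adding the rank-one positive-semidefinite term $\ketbra{w}$. The standard interlacing inequality for rank-one updates (a consequence of Courant--Fischer / Weyl) then gives $\lambda_k(\Delta'_w)\le\lambda_k(\Delta_w)\le\lambda_{k+1}(\Delta'_w)$ for every $k$. Taking $k=1$ yields $\lambda_2(\Delta'_w)\ge\lambda_1(\Delta_w)=0$, so at most one eigenvalue of $\Delta'_w$ is negative.

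For the ``at least one'' half, I would simply evaluate the quadratic form of $\Delta'_w$ on $u$: since $\Delta_w u=0$ and the $w$-th component of $u$ equals $1$, we get $u^T\Delta'_w u = u^T\Delta_w u - |\langle w\vert u\rangle|^2 = 0-1=-1<0$. By the variational characterisation of the smallest eigenvalue, $\lambda_1(\Delta'_w)\le (u^T\Delta'_w u)/\|u\|^2=-1/w<0$. Combining the two halves, $\Delta'_w$ has exactly one (strictly) negative eigenvalue.

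I do not expect a genuine obstacle here; the only points needing a little care are invoking connectedness of the path so that $0$ is a \emph{simple} eigenvalue of $\Delta_w$ (this is precisely what blocks a second non-positive eigenvalue of $\Delta'_w$), and getting the direction of the interlacing right. If a quantitative lower bound on $|\lambda_1(\Delta'_w)|$ were needed in the subsequent lemmas, I would instead diagonalise $\Delta'_w$ directly: for a candidate negative eigenvalue $\lambda=-\kappa$ the bulk three-term recurrence becomes $x_{i+1}=(2+\kappa)x_i-x_{i-1}$, whose characteristic roots are $\ee^{\pm\theta}$ with $\cosh\theta=1+\kappa/2$, so the eigenvector is a hyperbolic-trigonometric profile; imposing the modified boundary conditions at the two ends ($x_2=(1+\kappa)x_1$ on the left, $x_{w-1}=\kappa x_w$ on the right) reduces to a single transcendental equation in $\kappa$ that is readily checked to have a unique positive solution, recovering the same conclusion with explicit control.
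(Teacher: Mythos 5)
Your proof is correct, and for the ``at most one'' half it is essentially the paper's argument: the paper's inline dimension count (a two-dimensional negative-definite subspace must meet the $(w-1)$-dimensional kernel of $\ketbra{w}$, giving a vector on which $\Delta'_w$ and $\Delta_w$ agree) is precisely the proof of the rank-one interlacing inequality you invoke. The one genuine difference is that you also establish the ``at least one'' direction via the all-ones test vector $u\in\ker\Delta_w$ with $u^T\Delta'_w u=-1$, whereas the paper's proof only rules out two or more negative eigenvalues and tacitly leaves existence to the subsequent lemma bounding $\lmin(\Delta'_w)\le -1/2-4^{-w}$; your version is therefore slightly more self-contained, at no extra cost.
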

\begin{proof}
Assume this is not the case. Then there exist at least two eigenvectors $\ket u,\ket v$ with negative eigenvalues, and any $\ket x\in\spn\{\ket u,\ket v\}$ satisfies $\bra x\Delta'_w \ket x<0$.  Since $\dim\ker\ketbra w=w-1$,
there exists a nonzero $\ket x\in\spn\{\ket u,\ket v\}$ such that $\ketbra w\ket x=0$. Therefore $0>\bra x\Delta'_w\ket x=\bra x\Delta_w\ket x$, contradiction, since $\Delta_w$ is positive semi-definite.
\end{proof}

As a next step, we will lower-bound the minimum eigenvalue of $\Delta'_w$.
\begin{lemma}
    The minimum eigenvalue of $\Delta'_w$ satisfies $\lambda \ge -1/2-2^{-w}$.
\end{lemma}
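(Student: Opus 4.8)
The plan is to explicitly construct a trial vector that witnesses the lower bound, or equivalently, to analyse the characteristic polynomial of $\Delta'_w = \Delta_w - \ketbra{w}$ directly. Recall that $\Delta_w$ is the Laplacian of the path on vertices $\{1,\dots,w\}$, so in the basis $\ket 1,\dots,\ket w$ it is the tridiagonal matrix with diagonal $(1,2,2,\dots,2,1)$ and off-diagonal entries $-1$; subtracting $\ketbra w$ changes the bottom-right entry from $1$ to $0$. Since we already know (previous lemma) that $\Delta'_w$ has exactly one negative eigenvalue $\lambda$, it suffices to show that $\Delta'_w - (-\tfrac12 - 2^{-w})\,\1 \succeq 0$, i.e.\ that $\Delta'_w + (\tfrac12 + 2^{-w})\,\1$ is positive semi-definite.

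**First I would** set $\varepsilon \coloneqq \tfrac12 + 2^{-w}$ and write $M \coloneqq \Delta'_w + \varepsilon \1$. This is again tridiagonal, with off-diagonals $-1$, and diagonal $(1+\varepsilon, 2+\varepsilon, \dots, 2+\varepsilon, \varepsilon)$ — note the last diagonal entry is only $\varepsilon < 1$, which is the single ``dangerous'' spot. To prove $M \succeq 0$ I would attempt an $LDL^\dagger$-type / continued-fraction factorisation: define the sequence $d_1 = 1+\varepsilon$ and $d_{k} = (2+\varepsilon) - 1/d_{k-1}$ for $k=2,\dots,w-1$, and finally $d_w = \varepsilon - 1/d_{w-1}$. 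Then $M \succeq 0$ iff all $d_k \ge 0$ (a standard fact for tridiagonal matrices with $-1$ off-diagonals, via Sylvester's criterion on leading minors, which equal $\prod_{j\le k} d_j$). The first $w-1$ of these are easy: one shows by induction that $d_k$ increases and stays in the interval $[1+\varepsilon, \tfrac{3+\varepsilon + \sqrt{(1+\varepsilon)(5+\varepsilon)}}{2})$ — in particular $d_k \ge 1+\varepsilon$ for all $k \le w-1$, since the fixed-point map $x \mapsto (2+\varepsilon) - 1/x$ is increasing and has its attracting fixed point above $1+\varepsilon$. The whole argument then reduces to the single final inequality $d_w \ge 0$, i.e.\ $\varepsilon \ge 1/d_{w-1}$.

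**The hard part will be** controlling $d_{w-1}$ sharply enough: we need $1/d_{w-1} \le \tfrac12 + 2^{-w}$, i.e.\ $d_{w-1} \ge \tfrac{1}{\,1/2 + 2^{-w}\,} = \tfrac{2^{w+1}}{2^w+2}$, which is slightly less than $2$ but converges to $2$ exponentially fast as $w\to\infty$. So the crux is a quantitative lower bound $d_{k} \ge 2 - c\cdot 4^{-k}$ (or similar) on the continued-fraction iterates when $\varepsilon = \tfrac12 + 2^{-w}$ — equivalently, that the iteration $d\mapsto (2+\varepsilon) - 1/d$ approaches its fixed point (which is $\approx 2 + \varepsilon$, comfortably above $2$) geometrically with ratio governed by the derivative $1/d^2 \approx 1/4$ at the fixed point. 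One proves this by a clean induction: writing $d_k = 2 + \varepsilon - e_k$ with $e_k > 0$, the recursion gives $e_{k+1} = \tfrac{1}{1+\varepsilon} \cdot \tfrac{\text{something}}{\,\dots\,}$ — more precisely $e_{k+1} = e_k / (d_k d_{k-1})$-type contraction, and since $d_j \ge 1+\varepsilon > 1$ throughout, $e_k$ shrinks at least geometrically, with enough room to spare that $e_{w-1}$ is well below $\varepsilon$ and hence $d_{w-1}$ clears the required threshold $\tfrac{2^{w+1}}{2^w+2}$. Small-$w$ cases ($w=1,2$) I would just check by hand.

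**Alternatively**, a slicker (if less self-contained) route is to diagonalise exactly: the eigenvalues of $\Delta_w$ are $2-2\cos(\pi j/w)$ for $j=0,\dots,w-1$, with known eigenvectors, and $\Delta'_w$ is a rank-one perturbation, so its eigenvalues are the roots of the secular equation $1 = \sum_j \frac{|\braket{w}{v_j}|^2}{\lambda - \mu_j}$; the single negative root $\lambda$ can then be bounded by plugging $\lambda = -\tfrac12 - 2^{-w}$ into the secular function and checking the sign, using $\mu_0 = 0$ and $|\braket{w}{v_0}|^2 = 1/w$ to extract the dominant $-1/(w\lambda)$ term and bounding the rest. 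I expect the continued-fraction / tridiagonal-minors approach to be cleaner to write rigorously, so that is the one I would carry out, with the geometric-contraction estimate on $d_k$ as the single technical heart of the proof.
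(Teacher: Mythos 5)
Your approach is genuinely different from the paper's and, as far as I can tell, it goes through. The paper produces a closed-form expression for the characteristic polynomial $p_w(\lambda)=\det(\Delta'_w-\lambda\1)$ from the continuant recurrence, evaluates $p_w(-1/2)=(-1)^{1+w}2^{-w}$, and then verifies that $p_w(-1/2-2^{-w})$ has the opposite sign; the intermediate value theorem locates the unique negative eigenvalue in $[-1/2-2^{-w},-1/2]$. Your route instead shifts $\Delta'_w$ by $\varepsilon=\tfrac12+2^{-w}$, runs the tridiagonal $LDL^{\dagger}$ pivots $d_1=1+\varepsilon$, $d_{k}=(2+\varepsilon)-1/d_{k-1}$, $d_w=\varepsilon-1/d_{w-1}$, and reduces positive semi-definiteness to the single inequality $d_{w-1}\ge1/\varepsilon$; that inequality follows from geometric contraction of the fixed-point iteration toward its attracting fixed point $x^*=\tfrac{(2+\varepsilon)+\sqrt{\varepsilon(4+\varepsilon)}}{2}$. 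The pivot approach is somewhat more transparent about \emph{why} the threshold sits at $-1/2$: for $\varepsilon=1/2$ exactly, $x^*$ coincides with $1/\varepsilon=2$, so the iterates never clear the bar; the $2^{-w}$ slack opens exactly the gap the contraction needs, and the whole argument lives or dies on that margin. The paper's closed-form $p_w$ is, by contrast, immediately reusable in the companion lemma (the matching \emph{upper} bound $\lambda\le-1/2-4^{-w}$), where your method would need a lower bound on the residual $e_{w-1}=x^*-d_{w-1}$, a noticeably different estimate.

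Two small corrections to your sketch. First, the fixed point of $x\mapsto(2+\varepsilon)-1/x$ is $\tfrac{(2+\varepsilon)+\sqrt{\varepsilon(4+\varepsilon)}}{2}$, not $\tfrac{3+\varepsilon+\sqrt{(1+\varepsilon)(5+\varepsilon)}}{2}$; this doesn't affect the argument. Second, the contraction you want is cleaner than the ``$e_{k+1}=e_k/(d_kd_{k-1})$-type'' statement: with $e_k\coloneqq x^*-d_k$ one gets exactly $e_{k+1}=e_k/(d_k x^*)$, so $e_{k+1}\le e_k/\big((1+\varepsilon)x^*\big)\le e_k/3$ uniformly, and $e_1=x^*-(1+\varepsilon)\le 1/2+O(2^{-w})$; combining this with the identity $x^*>1/\varepsilon\iff\varepsilon>1/2$ (and a short computation of the gap $x^*-1/\varepsilon=\Theta(2^{-w})$) closes the proof for all $w$ past a small, explicit threshold, with the remaining cases checked directly. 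As written your proposal is a plan rather than a proof — the contraction estimate is the genuine technical content and is only outlined — but the plan is sound and would yield the lemma.
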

\begin{proof}
    We first observe that $\Delta'_w$ is tridiagonal, e.g.
    $$
        \Delta'_5=\begin{pmatrix}
        	1  & -1 & 0  & 0  & 0  \\
        	-1 & 2  & -1 & 0  & 0  \\
        	0  & -1 & 2  & -1 & 0  \\
        	0  & 0  & -1 & 2  & -1 \\
        	0  & 0  & 0  & -1 & 0
        \end{pmatrix}.
    $$
    We can thus expand the determinant $p_w(\lambda)\coloneqq \det(\Delta'_w-\lambda\1)$ using the continuant recurrence relation (see \cite[Ch.~III]{Muir1882})
    \begin{align*}
    	f_0          & \coloneqq 1                             \\
    	f_1          & \coloneqq \lambda-1                     \\
    	f_i          & \coloneqq (\lambda-2) f_{i-1} - f_{i-2} \\
    	p_w(\lambda) & \coloneqq \lambda f_{w-1} - f_{w-2}
    \end{align*}
    As can be easily verified, a solution to this relation is given by the expression
    \begin{equation}\label{eq:pn}
    p_w(\lambda)=-\frac{2^{-w-1}}{\sqrt{\lambda -4}}\left(
        3\sqrt\lambda z_w^-(\lambda)
        +\sqrt{\lambda-4} z_w^+(\lambda)
    \right)
    \end{equation}
    where $z_w^+(\lambda)\coloneqq x_w(\lambda) + y_w(\lambda)$, $z_w^-(\lambda)\coloneqq x_w(\lambda) - y_w(\lambda)$, and
    \begin{align*}
    x_w(\lambda) &= \left(\lambda -\sqrt{\lambda -4} \sqrt\lambda-2\right)^w \\
    y_w(\lambda) &= \left(\lambda +\sqrt{\lambda -4} \sqrt\lambda-2\right)^w.
    \end{align*}
    There is of course no hope to resolve $p_w(\lambda)=0$ for $\lambda$ directly, so we go a different route.
    First note that $p_w(\lambda)$ is necessarily analytic, since it is the characteristic polynomial of $\Delta'_w$.
    We can calculate
    $
    p_w(-1/2)=(-1)^{1 + w} 2^{-w},
    $
    and thus know that $\sign p_w(-1/2)=1$ for $w$ odd, and $-1$ for $w$ even.
    If we can show that $p_w(-1/2-1/2^w)$ has the opposite sign, then by the intermediate value theorem we know there has to exist a root on the interval $[-1/2-1/2^w,-1/2]$, and the claim follows.

    First substitute $p_w(-1/2-1/2^w)=:A_w/B_w$, where
    \begin{align*}
    	B_w     & = 2^{w+1} \sqrt{2^{-w}+\frac{9}{2}},                                                       \\
    	A_w     & = -a_{1,w}(x'_w - y'_w) - a_{2,w}(x'_w + y'_w),                                              \\
    	a_{1,w} & = 3 \sqrt{2^{-w}+\frac{1}{2}},                                                            \\
    	a_{2,w} & = \sqrt{2^{-w}+\frac{9}{2}},                                                              \\
    	x'_w    & = \left(\sqrt{2^{-w}+\frac{9}{2}} \sqrt{2^{-w}+\frac{1}{2}}-2^{-w}-\frac{5}{2}\right)^w, \\
    	y'_w    & = \left(-\sqrt{2^{-w}+\frac{9}{2}} \sqrt{2^{-w}+\frac{1}{2}}-2^{-w}-\frac{5}{2}\right)^w.
    \end{align*}
    Then $B_w$, $a_{1,w}$ and $a_{2,w}$ are real positive for all $w$. We distinguish two cases.
    \paragraph{w even.} If $w$ is even, we need to show $p_w(-1/2-1/2^w)\ge0$, which is equivalent to
    \begin{align*}
    	                         & 0 \le \frac{A_w}{B_w}  \\
    	\Longleftrightarrow\quad & 0 \le A_w=- a_{1,w}(x'_w - y'_w) - a_{2,w}(x'_w + y'_w)                          \\
    	\Longleftrightarrow\quad & 0 \ge a (x'_w-y'_w) + (x'_w+y'_w) \\
    	\Longleftrightarrow\quad & \frac{a-1}{a+1}y'_w\ge x'_w,
    \end{align*}
    where we defined $a\coloneqq a_{1,w}/a_{2,w}\in[1,2]$.
    Now, for $w$ even, $y'_w\ge x'_w$, so it suffices to show
    $$
        \frac{a-1}{a+1}\left(\frac52+\frac32\right)^w\ge\left(\frac52-\frac32 \right)^w
        \quad\Longleftrightarrow\quad
        \frac{a-1}{a+1}\ge \frac{1}{4^w},
    $$
    which is true for all $w\ge2$.

    \paragraph{w odd.} Unlike the even case now we have $y'_w\le x'_w$, and it suffices to show
    $$
        \frac{a-1}{a+1}\left(\frac52\right)^w\le\left(\frac52 \right)^w
                \quad\Longleftrightarrow\quad
                \frac{a-1}{a+1}\le 1,
    $$
    which also holds true for all $w\ge0$.
    This finishes the proof.
\end{proof}

And finally, using a similar approach, we will obtain an upper bound for the minimum eigenvalue of $\Delta'_w$.
\begin{lemma}
    The minimum eigenvalue of $\Delta'_w$ satisfies $\lambda \le -1/2-4^{-w}$.
\end{lemma}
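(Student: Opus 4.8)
The plan is to prove this upper bound variationally, by a single well-chosen trial vector for the Rayleigh quotient of $\Delta'_w$. Since $\Delta'_w=\Delta_w-\ketbra{w}$, for any real vector $\ket x=\sum_{i=1}^w x_i\ket i$ one has $\bra{x}\Delta'_w\ket{x}=\sum_{i=1}^{w-1}(x_i-x_{i+1})^2-x_w^2$, and because $\Delta'_w$ is Hermitian, $\lambda_{\min}(\Delta'_w)\le\bra{x}\Delta'_w\ket{x}/\braket{x}$ for every nonzero $\ket x$. It therefore suffices to exhibit one trial vector whose Rayleigh quotient is at most $-1/2-4^{-w}$.

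The right vector is dictated by the interior recurrence of $\Delta'_w$: at $\lambda=-1/2$ the characteristic roots of $t^2-(2-\lambda)t+1$ are exactly $2$ and $1/2$, so the growing interior mode is $x_i=2^i$. I would take $\ket x=\sum_{i=1}^w 2^i\ket i$, whereupon everything collapses to geometric sums: $\sum_{i=1}^{w-1}(2^i-2^{i+1})^2=\sum_{i=1}^{w-1}4^i=(4^w-4)/3$, $x_w^2=4^w$, and $\braket{x}=\sum_{i=1}^w 4^i=(4^{w+1}-4)/3$. This yields a Rayleigh quotient of $-(4^w+2)/\bigl(2(4^w-1)\bigr)=-1/2-3/\bigl(2(4^w-1)\bigr)$.

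What remains is the elementary inequality $3/\bigl(2(4^w-1)\bigr)\ge 4^{-w}$, which rearranges to $3\cdot 4^w\ge 2(4^w-1)$, i.e.\ $4^w\ge -2$, and so holds for every $w\ge 1$ (including $w=1$, where $\Delta'_1=-1$ and the same formula returns $-1\le-3/4$). Combining this with $\lambda_{\min}(\Delta'_w)\le -1/2-3/\bigl(2(4^w-1)\bigr)$ finishes the proof. There is no serious obstacle on this route; the only care needed is bookkeeping the path-graph Laplacian quadratic form and the two geometric series.

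For parallelism with the preceding lemma one could instead argue through the characteristic polynomial: evaluate $p_w(-1/2-4^{-w})$ using the closed form \eqref{eq:pn}, show it has the same sign as $p_w(-1/2)$ --- equivalently the opposite sign to $p_w(-1/2-2^{-w})$, whose sign is already known --- and then the intermediate value theorem together with the uniqueness of the negative eigenvalue of $\Delta'_w$ places $\lambda_{\min}(\Delta'_w)$ in $[-1/2-2^{-w},\,-1/2-4^{-w}]$. The catch there is that the two competing quantities are both of order $4^{-w}$, so the resulting inequality is tight and requires Bernoulli-type estimates on $(1-c\cdot 4^{-w})^w$ rather than crude uniform bounds --- exactly the fiddliness the variational argument sidesteps.
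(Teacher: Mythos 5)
Your proposal is correct, and it takes a genuinely different — and cleaner — route than the paper. The paper proceeds via the characteristic polynomial: it evaluates $p_w(-1/2-4^{-w})$ using the closed form \eqref{eq:pn}, shows after a careful even/odd case split and some tight algebraic estimates that the sign matches $\sign p_w(-1/2)$, and then invokes the intermediate value theorem (together with the earlier uniqueness of the negative eigenvalue) to locate the root to the left of $-1/2-4^{-w}$. You instead apply the variational principle directly: since $\Delta'_w$ is Hermitian, a single trial vector upper-bounds $\lambda_{\min}$, and the vector $x_i=2^i$ — motivated by $2$ being the growing characteristic root of $t^2-(2-\lambda)t+1$ at $\lambda=-1/2$ — collapses the Rayleigh quotient to two geometric sums, yielding $\lambda_{\min}\le -1/2-3/\bigl(2(4^w-1)\bigr)$, which is in fact slightly stronger than the lemma's claim and reduces to the trivially true $4^w\ge -2$. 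Your arithmetic checks out (including the $w=1$ edge case), and the variational argument sidesteps exactly the delicate sign bookkeeping the paper has to do; its one limitation is that it only gives the upper bound, so the companion lower-bound lemma $\lambda\ge -1/2-2^{-w}$ would still need the characteristic-polynomial machinery or some other lower-bound technique.
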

\begin{proof}
    The idea is to extend the area around $-1/2$ for which $p_w$ is positive for $w$ odd, and negative for $w$ even, respectively.
    We start with $p_w$ from \cref{eq:pn}, and substitute $p_w(-1/2-1/4^w)=:A_w/B_w$, where---almost as above, but replacing $2^{-w}$ by $4^{-w}$---we have
    \begin{align*}
    	B_w     & = 2^{w+1} \sqrt{4^{-w}+\frac{9}{2}},\quad\quad\text{\footnotesize(the $2^{w+1}$ is \emph{not} a typo)} \\
    	A_w     & = -a_{1,w}(x'_w - y'_w) - a_{2,w}(x'_w + y'_w),                                                          \\
    	a_{1,w} & = 3 \sqrt{4^{-w}+\frac{1}{2}},                                                                        \\
    	a_{2,w} & = \sqrt{4^{-w}+\frac{9}{2}},                                                                          \\
    	x'_w    & = \left(\sqrt{4^{-w}+\frac{9}{2}} \sqrt{4^{-w}+\frac{1}{2}}-4^{-w}-\frac{5}{2}\right)^w,             \\
    	y'_w    & = \left(-\sqrt{4^{-w}+\frac{9}{2}} \sqrt{4^{-w}+\frac{1}{2}}-4^{-w}-\frac{5}{2}\right)^w.
    \end{align*}
    Then $B_w$, $a_{1,w}$ and $a_{2,w}$ are real positive for all $w$. We distinguish even and odd cases.
   \paragraph{w even.}
   If $w$ is even, we want to show that $p_w(-1/2-1/4^w)\le0$, which is equivalent to
   \begin{align*}
       & 0 \ge \frac{A_w}{B_w} \\
       \Longleftrightarrow\quad & 0 \ge - A_w = - a_{1,w}(x'_w-y'_w) - a_{2,w}(x_w' + y_w') \\
       \Longleftrightarrow\quad & 0 \le a (x'_w-y'_w) + (x'_w + y'_w) \\
       \Longleftrightarrow\quad & \frac{a-1}{a+1}y'_w \le x'_w.
   \end{align*}
   Where again we defined $a\coloneqq a_{1,w} / a_{2,w}\in[1,2]$.
   For $w$ even, $y'_w\ge x'_w$ as before, so we cannot continue as before. Note that, for all $w\ge0$,
   $$
       \frac 12 \left(4^{-w}+\frac{5}{2}\right) \leq \sqrt{4^{-w}+\frac{1}{2}} \sqrt{4^{-w}+\frac{9}{2}}\leq 4^{-w}+\frac{5}{2}.
   $$
   and therefore
   \begin{align*}
       y_w'&\le 2^w\left(4^{-w}+\frac52\right)^w \\
       x_w'&\ge 2^{-w}\left(4^{-w}+\frac52\right)^w.
   \end{align*}
   It thus suffices to show
   \begin{align*}
       \frac{a-1}{a+1}\times2^w\left(4^{-w}+\frac52\right)^w &\le 2^{-w}\left(4^{-w}+\frac52\right)^w\\
       \Longleftrightarrow\quad
       \frac{a-1}{a+1} &\le \frac{1}{4^w}.
   \end{align*}
   It is straightforward to verify that this inequality holds for all $w$.

   \paragraph{w odd.}
   For odd $w$, $y'_w\le x'_w$.  Analogously to before one can show
   \begin{align*}
       y_w'&\ge -2^w\left(4^{-w}+\frac52\right)^w \\
       x_w'&\le -2^{-w}\left(4^{-w}+\frac52\right)^w.
   \end{align*}
   Canceling the minus signs flips the inequality sign, and reduces the odd case to what we have shown for $w$ even. The claim follows.
\end{proof}

We summarize these findings in the following corollary.
\begin{corollary}\label{cor:elastic}
The spectrum of $\Delta'_w$ is contained in $(-1/2 - 1/2^w, -1/2 - 1/4^w) \cup [0,\infty)$.
\end{corollary}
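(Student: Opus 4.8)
The plan is to assemble \Cref{cor:elastic} directly from the two preceding eigenvalue bounds together with the structural lemma that $\Delta'_w$ has exactly one negative eigenvalue. First I would note that $\Delta'_w = \Delta_w - \ketbra{w}$ is a rank-one perturbation of the positive-semidefinite path-graph Laplacian $\Delta_w$, so all of its eigenvalues lie in $[-1,\infty)$; since the unique negative eigenvalue $\lambda_{\min}$ has already been pinned down to the interval $[-1/2 - 1/2^w,\, -1/2 - 1/4^w]$ by the two bounding lemmas, and all other eigenvalues are non-negative, the only remaining task is to argue that the non-negative part of the spectrum is bounded away from $-1/2$ and in fact starts at $0$ or above — but that is immediate, since it consists precisely of eigenvalues $\ge 0$.

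The key steps, in order, are: (i) invoke the lemma giving exactly one negative eigenvalue, so the spectrum splits as $\{\lambda_{\min}\} \cup S$ with $S \subseteq [0,\infty)$; (ii) invoke the lower bound $\lambda_{\min} \ge -1/2 - 2^{-w}$ and the upper bound $\lambda_{\min} \le -1/2 - 4^{-w}$, so that $\lambda_{\min} \in [-1/2 - 1/2^w,\, -1/2 - 1/4^w] \subset (-1/2 - 1/2^w,\, -1/2 - 1/4^w)$ once one checks the endpoints are not attained — actually it is cleaner to simply state $\lambda_{\min} \in (-1/2 - 1/2^w, -1/2 - 1/4^w)$ is implied up to closing the interval, or to present the corollary with the half-open/closed convention matching the lemma statements; (iii) conclude $\spec \Delta'_w \subseteq (-1/2 - 1/2^w, -1/2 - 1/4^w) \cup [0,\infty)$.

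I would write this as a two- or three-line proof. The only subtlety — and the one place worth a sentence of care — is reconciling the open/closed endpoints: the two bounding lemmas give non-strict inequalities $-1/2 - 2^{-w} \le \lambda_{\min} \le -1/2 - 4^{-w}$, whereas the corollary as stated uses an open interval. This is harmless: either one loosens the corollary's interval to a closed one (the downstream use only needs that $\lambda_{\min}$ is strictly negative and strictly bounded away from the non-negative spectrum, both of which hold), or one observes that for $w \ge 1$ the strict versions follow because $\Delta'_w$ has rational entries so its characteristic polynomial cannot vanish at the irrational-looking... no — actually the endpoints $-1/2 \pm 2^{-w}$ are rational, so this argument fails; the honest statement is the closed interval, and I would phrase the corollary's proof to note that the displayed open interval contains the closed one up to this cosmetic point. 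I do not anticipate any real obstacle here: this corollary is purely a bookkeeping summary of the three lemmas just proved.

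\begin{proof}
By the first of the three preceding lemmas, $\Delta'_w$ has exactly one negative eigenvalue $\lambda_{\min}$; all other eigenvalues are non-negative, hence lie in $[0,\infty)$. By the remaining two lemmas, $-1/2 - 2^{-w} \le \lambda_{\min} \le -1/2 - 4^{-w}$, so $\lambda_{\min}$ lies in the interval $(-1/2 - 1/2^w,\, -1/2 - 1/4^w)$ (enlarging to an open interval if needed). Combining the two observations gives $\spec \Delta'_w \subseteq (-1/2 - 1/2^w,\, -1/2 - 1/4^w) \cup [0,\infty)$, as claimed.
\end{proof}
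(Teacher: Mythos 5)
Your proposal is correct and matches the paper's (implicit) argument: the corollary is a bookkeeping summary of the three preceding lemmas, and the paper does not supply a separate proof. Your observation about the open-versus-closed endpoint mismatch is a fair catch — the two bounding lemmas as stated give non-strict inequalities, so what is literally established is $\lambda_{\min}\in[-1/2-2^{-w},\,-1/2-4^{-w}]$, and the corollary's open interval is a cosmetic overstatement that is harmless for all downstream uses (which only need $\lambda_{\min}$ strictly negative and bounded away from zero by roughly $2^{-w}$).
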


Let us now analyse what this means for the spectrum of $\op H'$.
We are only interested in those blocks $\op G'_s$ which correspond to modified grid Laplacians---all other cases are bounded away by a constant in \cref{lem:elastic-3}.
In brief, the answer will be that the negative energy shift of $-1/2$ in \cref{cor:elastic} will be precisely offset by the shift of $1/2$ for any occurrence of the boundary state $\ket\bd$.

Combining \cref{lem:elastic-3} with \cref{cor:elastic}, we obtain the following theorem.
\begin{theorem}\label{th:elastic}
Let $\op H'$ be as in \cref{lem:elastic-3}.
If $\op H'=\bigoplus_{s\in\{0,1\}^N} \op H'_s$ is the decomposition of $\op H'$ into signature blocks, the following holds.
\begin{enumerate}
\item If $s$ is not properly bounded, i.e.\ where one or both ends have no boundary marker, adding a $\bd$ there (either by adding one explicitly, or moving one from a site one away from the end) yields a signature $s'$ such that $\op H'_s\ge\op H'_{s'} + 1$.
\item If $s$ has two consecutive boundaries, one can always delete one of them and obtain a signature $s'$ such that $\op H'_s\ge \op H'_{s'} + 1$.
\item If $s$ is bounded and without consecutive boundaries, $\op H'_s=\op G'_s+\op R'_s$ as in \cref{lem:elastic-3}.
  In that case, the minimum eigenvalue $\lambda$ of $\op G'_s$ satisfies $-\sum_i 1/2^{w_i} \le \lambda+7/2 \le -\sum_i 1/4^{w_i}$, where $w_i$ is the length of the $i$\textsuperscript{th} contiguous $0$-segments in the signature $s$.
  In that case, furthermore, $\op G'_s$ has a spectral gap of size $\ge1/2$.
\end{enumerate}
\end{theorem}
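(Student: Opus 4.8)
The plan is to treat the three cases of \cref{th:elastic} separately, with case~3 carrying essentially all the content and cases~1--2 being ``repair'' estimates built on the boundary bookkeeping already set up in \cref{rem:elastic-1} and \cref{lem:elastic-3}. The block decomposition $\op H'=\bigoplus_{s}\op H'_s$ is immediate from \cref{lem:elastic-3}(1), since every term of $\op H'$ commutes with $\sig$. Throughout, I read the inequality $\op H'_s\ge\op H'_{s'}+1$ as the statement $\lmin(\op H'_s)\ge\lmin(\op H'_{s'})+1$ about ground-state energies, which is exactly what is needed when minimising over signatures in the later sections.

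For case~3 (properly bounded $s$ with no $11$), \cref{lem:elastic-3} already hands us $\op H'_s=\op G'_s\oplus\op R'_s$ with $\op R'_s\ge\op G'_s+2$ and $\op G'_s$ a sum of terms $\1\otimes\Delta'_{w_i}\otimes\1$. The first step is to pin down the additive constant: on such a block the positions of the $\bd$'s are fixed by $s$, so (by \cref{rem:elastic-1}) $\op P'$ contributes the constant $-4$ (a net $-2$ at each of the two chain ends, nothing in the interior), $\op P$ contributes $0$ on $\op G_s$, and $\op P''$ contributes $\tfrac12(n+1)$, where $n+1$ is the number of boundary markers, i.e.\ $n$ is the number of $0$-segments. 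Hence $\op G'_s=\bigl(-4+\tfrac{n+1}{2}\bigr)\1+\sum_{i=1}^{n}\1\otimes\Delta'_{w_i}\otimes\1$. The second step combines the fact that the spectrum of a sum of mutually commuting operators of this form is the Minkowski sum of the individual spectra with \cref{cor:elastic}: each $\Delta'_{w_i}$ has exactly one negative eigenvalue, lying in $[-\tfrac12-2^{-w_i},\,-\tfrac12-4^{-w_i}]$, and all of its other eigenvalues are $\ge0$. Writing that eigenvalue as $-\tfrac12-\epsilon_i$ with $4^{-w_i}\le\epsilon_i\le2^{-w_i}$, we get $\lmin(\op G'_s)=-4+\tfrac{n+1}{2}-\tfrac n2-\sum_i\epsilon_i=-\tfrac72-\sum_i\epsilon_i$, which is precisely the claimed bracketing $-\sum_i2^{-w_i}\le\lmin(\op G'_s)+\tfrac72\le-\sum_i4^{-w_i}$. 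The third step is the gap: the ground state of $\op G'_s$ is the non-degenerate tensor product of the negative-eigenvalue vectors of the $\Delta'_{w_i}$, and the first excited state is obtained by promoting exactly one factor to its next eigenvalue, which is $\ge0$; since the negative eigenvalue it replaces is $<-\tfrac12$, the gap is $>\tfrac12$.

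Cases~1 and~2 are energy bookkeeping: one specifies the repaired signature $s'$, compares the contributions of $\op P'$, $\op P''$ and the path-graph Laplacians $\Delta'_w$ (which already incorporate the $\op B$-bonus) term by term, and checks the net change is at least $1$. For case~2, deleting one member of a consecutive pair $11$ removes the forced penalty $2$ from the $2\ketbra{\bd\bd}$ term of $\op P$ and one $\tfrac12$ from $\op P''$; against this, the two (possibly empty) segments flanking the deleted marker merge into a single segment, which by \cref{cor:elastic} raises the combined $\Delta'$-contribution by at most $\approx\tfrac12$, for a net decrease $\ge1$. For case~1, relocating the nearest $\bd$ to an unbounded end (or, failing that, inserting a fresh $\bd$ there) lets that terminal marker collect the net $-2$ bonus of $\op P'$ guaranteed by \cref{rem:elastic-1}; the opposing changes are an increase of at most $\tfrac12$ in $\op P''$ and a shift of at most $2^{-w}\le\tfrac12$ in the eigenvalue of the affected $\Delta'_w$ (or, when a dangling $0$-tail gets absorbed into a new segment, a change that only \emph{lowers} the energy), again netting $\ge1$.

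The step I expect to be the real work is the exhaustive sub-case verification in cases~1 and~2: one must confirm that reshuffling boundary markers never creates a new $\op P$-penalty in the repaired block, and that in every configuration (the pair at a chain end vs.\ in the interior, flanking segments empty or not, one end unbounded vs.\ both) the competing changes to $\op P'$, $\op P''$ and the $\Delta'_w$'s leave the net decrease at least $1$. Everything else reduces to the eigenvalue estimates of \cref{cor:elastic} and the additivity of Laplacian spectra under Cartesian products, both of which are already in hand.
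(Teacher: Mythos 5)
Your proposal takes essentially the same route as the paper's own proof: case 3 is handled by combining the block structure from \cref{lem:elastic-3} with the eigenvalue bracket of \cref{cor:elastic}, and cases 1--2 are handled by local ``repair'' bookkeeping of the $\op P'$, $\op P''$, $\op P$ and $\Delta'_w$ contributions. Your explicit accounting of the constant in case 3 ($-4$ from $\op P'$, $+\tfrac{n+1}{2}$ from $\op P''$, $-\tfrac n2$ from the $\Delta'_{w_i}$ shifts, totalling $-\tfrac72$) matches the paper's one-line explanation that each right boundary's $+\tfrac12$ offsets its segment's $-\tfrac12$, leaving $-4+\tfrac12$; the Minkowski-sum/Cartesian-product argument for the spectrum and the gap is the same. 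Cases 1 and 2 in the paper are similarly terse case analyses (e.g.\ $s=011\cdots$ vs.\ $s=010\cdots01\cdots$); your estimates track the same cancellations, and your net bounds are consistent with the paper's, so this is the same proof.
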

\begin{proof}
  Claim 1 can be shown by explicitly considering an arbitrary signature, but with one missing boundary.
  We will only discuss the left boundary.
  The right then immediately follows from the fact that one could at most \emph{gain} an extra bonus there from $\op B$ in \cref{lem:elastic-3}.

  First consider the case that the left boundary looks like $s=01\cdots$.
  By moving the boundary from the site to its right, we either break up a double boundary (in case $s=011\cdots$), or enlarge a segment (in case $s=010\cdots01\cdots$).
  In the first case, we obtain i) a net bonus of $2$ by \cref{rem:elastic-1}, ii) a net bonus of $2$ from breaking up a double boundary from \cref{lem:elastic-1}, iii) a bonus $>0$ from creating a $1$-bounded segment.
  In the second case, we also obtain i), but decrease the bonus from the segment to its right.
  This can at most be a penalty of $1/2$, though, and the claim follows.

  Claim 2 can be broken up in cases as well.
  Assume the double boundary is either on the left, or right (e.g.\ $s=110\cdots$).
  By deleting the second site boundary, one obtains a net bonus of at least 1.
  The same holds true for a site in the middle, as can be easily seen.

  Claim 3 follows from \cref{cor:elastic,lem:elastic-3}.
  Every $1$-bounded segment is terminated by a boundary, whose penalty of $1/2$ from \cref{lem:elastic-3} precisely offsets the $-1/2$ shift of the ground state of $\Delta'_w$.
  The leftover overall energy shift of $-7/2$ stems from the original $-4$ ground state from \cref{rem:elastic-1}, and the single penalty of the left boundary of magnitude $1/2$.
  The gap claim follows from \cref{lem:elastic-3} (i.e.
  that $\op R'_s\ge\op G'_s+2$) and the spectral gap of $\Delta'_w$.
\end{proof}

\subsection{A Marker Hamiltonian with a Quick Falloff}
The transition rules in \cref{lem:elastic-1} are those of a unary counter, as depicted in \cref{eq:elastic-states}.
It is clear that if we allow for an increase in the local dimension we can use more complicated transition rules---and assume that they are 2-local---to model the evolution of a more sophisticated calculation (e.g.\ the binary counter construction of~\cite{Cubitt2015_long}, or the Quantum Thue System constructions of~\cite{Bausch2016}).
Instead of the linear exponential dependence on the segment length $w$ in \cref{th:elastic}, we then have the following theorem.
\begin{theorem}[Marker Hamiltonian]\label{th:elastic!}
Take a Hamiltonian $\op H'$ as in \cref{th:elastic}, but with $2$-local transition rules describing a path graph evolution of length $f(w)$ on a segment of length $w$.
Furthermore, we add an energy shift of $7/2$ by adding a term $$7/2\sum_{i=1}^N\1_{\{i\}} - 7/2\sum_{i=1}^{N-1}\1_{\{i,i+1\}}.$$
Denote this Hamiltonian with $\Hel$.
Then $\Hel=\bigoplus_s\Hel_s$ as before.
We have $\Hel_0\ge0$, and either $\Hel_s\ge1/2$, or its minimum eigenvalue satisfies
$$
-\sum_i 1/2^{f(w_i)} \le \lambda \le -\sum_i 1/4^{f(w_i)},
$$
where $w_i$ is the $i$\textsuperscript{th} segment length.
\end{theorem}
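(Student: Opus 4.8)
The plan is to observe that the entire spectral analysis leading to \cref{th:elastic} never used that the rules in $\op h_\mathrm{walk}$ implement a \emph{unary} counter: all that was used is that the rules are (i)~$2$-local, (ii)~never create or destroy a boundary symbol $\ket\bd$ (so the signature is preserved and $\op H'$ block-diagonalises as $\op H'=\bigoplus_s\op H'_s$), and (iii)~on each $1$-bounded segment of zeros the connected component of the transition graph carrying no configuration penalty is a single path, with one distinguished terminal vertex---``head at the right marker''---on which $\op B$ puts its $-1$ bonus. First I would check that the generic $2$-local rules posited in the hypothesis (of the type used in \cite{Cubitt2015_long,Bausch2016}) still satisfy (i)--(iii), the only change being that the path on a segment of length $w$ now has $f(w)$ vertices rather than $w$. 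Granting this, \cref{lem:elastic-3} holds verbatim with $\Delta'_w$ replaced by $\Delta'_{f(w)}$, and hence \cref{cor:elastic} applies with $w\mapsto f(w)$: $\spec\Delta'_{f(w)}\subseteq(-1/2-2^{-f(w)},\,-1/2-4^{-f(w)})\cup[0,\infty)$.

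Next I would dispose of the new term. Since $\sum_{i=1}^N\1_{\{i\}}=N\1$ and $\sum_{i=1}^{N-1}\1_{\{i,i+1\}}=(N-1)\1$, the added term is exactly $(7/2)(N-(N-1))\1=(7/2)\1$: it is merely a global energy shift of $+7/2$, written in a translationally invariant, at-most-$2$-local form so as not to disturb anything else. Thus every eigenvalue of $\Hel_s$ equals the corresponding eigenvalue of the block $\op H'_s$ from \cref{th:elastic}, shifted up by $7/2$.

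It then remains to run through the three cases of \cref{th:elastic} and add $7/2$. For a properly bounded signature with no consecutive boundaries, the third claim of \cref{th:elastic}---now with $w_i\mapsto f(w_i)$ by the first paragraph---gives that the minimum eigenvalue $\lambda$ of $\op G'_s$ obeys $-\sum_i 2^{-f(w_i)}\le\lambda+7/2\le-\sum_i 4^{-f(w_i)}$, so after the shift the minimum eigenvalue of $\op G'_s$ lies in exactly the stated window; since $\op R'_s\ge\op G'_s+2$, the minimum eigenvalue of the whole block $\Hel_s$ coincides with that of its $\op G'_s$-part and hence with the window. For the all-zero signature, a boundary symbol never occurs, so $\op P'$, $\op P''$, $\op B$ contribute nothing while $\op H_\mathrm{walk}+\op P\ge0$, giving $\op H'_0\ge0$ and therefore $\Hel_0\ge7/2\ge0$. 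Finally, for a signature that is not properly bounded, or that has two consecutive boundaries, the first two claims of \cref{th:elastic} let me repair it to a properly bounded, consecutive-boundary-free signature at an energy cost of at least $1$; together with the $+7/2$ shift this gives $\Hel_s\ge1/2$ for such blocks.

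The one genuinely new point, and the step I expect to need the most care, is the first: arguing that an arbitrary path-graph evolution of prescribed length $f(w)$ can be realised by $2$-local, signature-preserving rules that keep the ``single distinguished terminal vertex'' structure intact, so that it is precisely $\Delta'_{f(w)}$ (and not a more intricate perturbed Laplacian) that appears on each segment. Once that is in place the rest is bookkeeping---substituting $f(w)$ for $w$ in \cref{cor:elastic} and adding the constant $7/2$.
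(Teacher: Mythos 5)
Your overall approach is the one the paper takes: the paper's proof of \cref{th:elastic!} is a one-sentence reference back to \cref{th:elastic} plus the observation that $7/2\sum_i\1_{\{i\}}-7/2\sum_i\1_{\{i,i+1\}}$ is the constant $(7/2)\1$, and you flesh this out by substituting $\Delta'_{f(w)}$ for $\Delta'_w$ and adding $7/2$ to each of the three cases of \cref{th:elastic}. You are also right to single out as the ``genuinely new point'' the question of whether generic $2$-local, signature-preserving rules still give exactly a perturbed path-graph Laplacian on each segment; the paper likewise does not re-prove this here but defers it to \cref{rem:elastic!}.

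There is, however, one step where your bookkeeping does not actually close. For a block $s$ that is not properly bounded (or has a double boundary), you write that repairing $s$ to a good $s'$ at a cost of at least $1$, ``together with the $+7/2$ shift gives $\Hel_s\ge1/2$.'' That inference needs $\lmin(\op H'_{s'})\ge -4$, but this lower bound comes from \cref{rem:elastic-1}, which controls $\op H=\op H_{\mathrm{walk}}+\op P+\op P'$, not the full $\op H'=\op H+\op P''+\op B$. By \cref{th:elastic} claim 3 the repaired good block satisfies only $\lmin(\op H'_{s'})\ge -7/2-\sum_i 2^{-f(w_i)}$, and the sum can be $\Theta(\text{number of segments})$ when the segments are short; after adding $1$ and shifting by $7/2$ this does not yield $\Hel_s\ge 1/2$. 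Concretely, take $s=(0,1,0,1,\dots,0,1,0)$ on $N=17$ sites with boundaries at all even positions: for the state with every non-boundary site in $\ket\hd$, $\op H_{\mathrm{walk}}$, $\op P$ and $\op P'$ all vanish, $\op P''$ contributes $+4$, and $\op B$ contributes $-8$, so $\lmin(\op H'_s)=-4$ and $\Hel_s=-1/2<1/2$. So ``cost of at least $1$ per repair'' on its own is not enough; the paper's terse proof glosses over the same point, and a complete argument here would need either an \emph{a priori} lower bound on segment lengths (as is later imposed via \cref{rem:min-and-even}) or an explicit treatment of how the $\op P''+\op B$ contributions scale with the number of boundaries. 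Everything else in your write-up (the global-shift reading of the $7/2$ term, the $w\mapsto f(w)$ substitution through \cref{lem:elastic-3} and \cref{cor:elastic}, the window bound for good blocks, and $\Hel_0\ge 7/2\ge 0$) is sound.
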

\begin{proof}
Precisely the same argument as in the proof of \cref{th:elastic}, taking into account an energy shift of $+7/2$ due to the mismatch in the number of one-local and two-local couplings available in a system with open boundary conditions, see \cref{rem:elastic-1}.
\end{proof}

We conclude with the following two remarks.
\begin{remark}\label{rem:elastic!}
On a spin chain with nearest neighbour interactions and local dimension $d$ (including the boundary symbol~$\bd$), one can obtain a path graph evolution length $f(w)=(d-c_1)w$, or alternatively $f(w)=(d-c_2)^w$, where $c_1$ and $c_2$ are constant.
Each signature block $\Hel_s$ of the corresponding Hamiltonian thus has a unique lowest-energy eigenvalue
$$
-\sum_i 1/2^{(d-1)w_i} \le \lambda \le -\sum_i 1/4^{(d-1)w_i}
$$
or
$$
\ \ -\sum_i 1/2^{(d-5)^{w_i}} \le \lambda \le -\sum_i 1/4^{(d-5)^{w_i}},
$$
respectively, with a spectral gap $\ge1/2$,
where $w_i$ is the $i$\textsuperscript{th} segment length.
\end{remark}
\begin{proof}
A unary counter does not require any special head symbols (see e.g.~\cite{Kitaev2002})
It is further known that one can construct an arbitrary base counter with four additional symbols (see e.g.~\cite{gottesman2009quantum}).
Breaking either of the constructions down to 2-local at most adds a constant overhead.
The rest follows from \cref{th:elastic!}.
\end{proof}

\begin{remark}\label{rem:min-and-even}
    Increasing the local dimension by a constant factor $d_1$ allows us to add two-local penalty terms to $\Hel$, which enforce that the only blocks $\Hel_s$ with negative ground state energies as in \cref{th:elastic!} have minimum segment length $w_i\ge d_1$.
    Similarly, increasing the local dimension by another constant factor $d_2$ allows us to assume segment lengths $w_i \equiv 0\pmod{d_2}$.
\end{remark}
\begin{proof}
	In the first case, we impose that each boundary term is followed by a sequence of states $\ket 0,\ket 1,\ldots,\ket{d_1}$, the latter of which we allow to be followed by $\ket{d_1}$ only.
	Now penalize a boundary term to the right of anything but $\ket{d_1}$.

	The second proof is similar, where instead of counting once we count modulo $d_2$, and penalize the boundary state to appear to the right of anything but $\ket{d_2}$.
\end{proof}

\section{\label{sec:qpe}Augmented Phase Estimation QTM}
\subsection{Phase estimation}
Just as in the two-dimensional case, we will use a phase estimation QTM to extract the input to a universal TM from the phase of a specific gate.
This is the only ingredient we will require from \textcite{Cubitt2015_long}; yet in addition to the original construction, we will need to be able to detect and penalize the case where the phase estimation does not terminate with the full binary expansion.
This can be done with a slight modification to the original procedure from~\cite[sec.~6]{Cubitt2015_long}.

For completeness and for self-consistency we state the relevant results from \cite[sec.~6]{Cubitt2015_long} in the following.

\begin{theorem}[Phase-estimation QTM (\textcite{Cubitt2015_long})]\label{th:phase-QTM}
There exists a family of QTMs $P_\ivar$ indexed by $\ivar\in\field N$, all with identical internal states and symbols but differing transition rules, with the property that on input $N\ge|\ivar|$ written in unary, $P_\ivar$ halts deterministically after $\BigO(\poly(N)2^N)$ steps, uses $N+3$ tape, and outputs the binary expansion of $\ivar$ padded to $N$ digits with leading zeros.
\end{theorem}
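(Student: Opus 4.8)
\emph{Proof plan.}\quad The statement is the quantum phase-estimation QTM of \textcite{Cubitt2015_long}, and the plan is to recall and reconstruct it: a machine $P_\ivar$ that dovetails three deterministic, reversible QTM subroutines on a working tape of $N+3$ cells, with $\ivar$ entering only through the rotation angles of a small number of single-qubit phase gates (diagonal unitaries, so the local amplitudes always extend to a well-formed QTM), while the set of internal states and tape symbols is fixed across the family.

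First, the machine reads the unary input $1^N$, marks off the $N+3$ available cells, writes the $N$-qubit computational register as $\ket{0}^{\otimes N}$ together with an eigenstate qubit $\ket{1}$, and sweeps the head once applying a Hadamard to each register cell, preparing $\ket{+}^{\otimes N}\otimes\ket{1}$ in $\BigO(\poly N)$ steps. Second comes the controlled-power stage. Since the eigenstate $\ket{1}$ is fixed and never modified, the controlled gate $\Lambda(U^{2^{j-1}})$ with control cell $j$ and target the eigenstate is just the diagonal single-qubit gate $\mathrm{diag}(1,\ee^{2\pi\ii\,2^{j-1}\varphi(\ivar)})$ on cell $j$, which is obtained by applying the base gate $U=\mathrm{diag}(1,\ee^{2\pi\ii\varphi(\ivar)})$ to cell $j$ exactly $2^{j-1}$ times (diagonal phases add). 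So for $j=1,\dots,N$ the machine applies $U$ to cell $j$ a total of $2^{j-1}$ times, managed by an $\BigO(N)$-bit binary counter on the scratch cells; head travel and counter bookkeeping cost $\BigO(\poly N)$ per application, and the total number of applications is $\sum_{j=1}^{N}2^{j-1}=2^N-1$, giving the claimed $\BigO(\poly(N)\,2^N)$ running time. After this stage the register is in the product state $\bigotimes_{j=1}^{N}\tfrac{1}{\sqrt2}\big(\ket{0}+\ee^{2\pi\ii\,2^{j-1}\varphi(\ivar)}\ket{1}\big)$.

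Third, the machine applies the inverse quantum Fourier transform to the $N$-qubit register; since $\varphi(\ivar)$ has an exact binary expansion of length $m=\BigO(|\ivar|)$ and $N\ge|\ivar|$, the register then deterministically holds the computational-basis state recording the first $N$ binary digits of $\varphi(\ivar)$, and a final reversible pass reorders these bits into the binary expansion of $\ivar$ padded to $N$ digits with leading zeros, whereupon the head returns to a canonical cell and enters $q_f$. Determinism and the absence of any branching hold because each subroutine is a deterministic reversible computation and, owing to the product-state form above, the inverse QFT lands on a single basis state with certainty. The point that makes stage three realizable with a \emph{finite} transition table shared across the family is that, on a product input of this form, the inverse QFT only ever invokes the controlled phase rotations $\mathrm{diag}(1,\ee^{\pm\ii\pi/2^{k-1}})$ for $k\le m=\BigO(|\ivar|)$ — the register cells beyond the $m$-th carry phase $0$ and contribute only Hadamards — so these $\BigO(|\ivar|)$ rotations, together with $U$, are exactly the $\ivar$-dependent transition rules, leaving the internal states and symbols identical for all $\ivar$.

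The main obstacle is precisely this last observation together with the tight tape budget: one must verify that the inverse QFT collapses to $\BigO(|\ivar|)$ fixed rotation angles (and not $\Theta(N)$ of them, which could not be hard-coded), and that the binary counter, the scratch cells used by the QFT, and the register simultaneously fit within $N+3$ cells — both follow from a careful but routine accounting, using that counting to $2^{j-1}$ needs at most $N-1$ bits and that cells can be reclaimed once a stage finishes. The single ingredient I would import wholesale rather than re-derive is the explicit reversible-QTM realization of the elementary gates (Hadamard and the fixed phase rotations) and of controlled rotations driven by a tape-encoded control cell, which is the construction in Section~6 of \cite{Cubitt2015_long} (see also \cite{Nielsen2010}).
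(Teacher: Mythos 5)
The paper does not contain a proof of this theorem: it is cited directly from \textcite{Cubitt2015_long} ("For completeness and for self-consistency we state the relevant results..."). So I am comparing your reconstruction against the sketch surrounding the statement and against the cited construction it points to.

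Your outline of the three stages (preparation, controlled-power, inverse QFT), the diagonal-gate simplification of the controlled-$U^{2^{j-1}}$ stage, the $\sum_j 2^{j-1}=2^N-1$ application count, and the reordering/determinism remarks are all consistent with the description in the paper and with the intended construction.

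There is, however, a genuine gap in your handling of the inverse QFT stage, and it is precisely the point the theorem statement is sensitive to. You claim the inverse QFT "only ever invokes the controlled phase rotations $\mathrm{diag}(1,\ee^{\pm\ii\pi/2^{k-1}})$ for $k\le m=\BigO(|\ivar|)$... so these $\BigO(|\ivar|)$ rotations, together with $U$, are exactly the $\ivar$-dependent transition rules." But the theorem requires the family $P_\ivar$ to share an \emph{identical} set of internal states and tape symbols; only the transition amplitudes may depend on $\ivar$. Since a QTM's transition function has domain $Q\times\Sigma$, the total number of distinct transition rules is bounded by $|Q|\cdot|\Sigma|=\BigO(1)$, independent of $\ivar$. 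You cannot hard-code $\Theta(|\ivar|)$ distinct $\ivar$-dependent gates into such a machine: triggering them at distinct steps would require $\Theta(|\ivar|)$ distinct $(q,\sigma)$ pairs, breaking the fixed state/symbol requirement. The mechanism in \textcite{Cubitt2015_long} (and indicated by the caption of \cref{fig:inverse-qft} in this paper) is different and avoids this: one introduces a \emph{single} $\ivar$-dependent rotation $\op U_\alpha$ with $\alpha=2^{-|\phi|}$, and realizes every nontrivial rotation in the inverse QFT as an integer power of $\op U_\alpha$, applying it a counter-controlled number of times; rotations with angles below $\pi\alpha$ never occur because the remaining control bits are deterministically zero. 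This makes the number of $\ivar$-dependent transition rules $\BigO(1)$, consistent with a fixed state/symbol set, while still achieving the exact (not approximate) QFT on valid inputs. Your correct observation—that only angles bounded below by $\sim 2^{-|\phi|}$ ever matter—is the key insight, but it has to be implemented via powers of one fixed small-angle gate rather than by enumerating $\BigO(|\ivar|)$ separate gates.

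A smaller point worth tightening: your claim of fitting everything into $N+3$ cells by "reclaiming" scratch depends implicitly on the multi-track tape structure of \textcite{Cubitt2015_long} (the binary counter lives on a classical track over the same cells), which is fine but should be said explicitly, since otherwise a separate $\BigO(N)$-bit counter would exceed the stated tape budget.
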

As the authors state, it is crucial that $N$ does not determine the binary expansion that is written to the tape, only the number of digits in the output.
The authors construct this family of QTMs explicitly, in three parts:
\begin{enumerate}
\item Apply the controlled $\op U^k$-gates, where $\op U$ is the phase gate encoding $\ivar$ (see \cref{fig:phase-estimation}).
\item Detect the least significant bit.
\item Perform an inverse quantum Fourier transform (see \cref{fig:inverse-qft}).
\end{enumerate}

The problem with using this series of steps unchanged is linked to the fact that we cannot apply the standard inverse quantum Fourier transform, for two reasons.
First, we need the result of the QFT to be exact---so using approximate QFT is not an option.
This in turn would imply we need an infinite local dimension, as we need a potentially infinite set of controlled phase gates.
In the 2D construction, it suffices for the authors to provide a phase gate with minimum rotation $\alpha=2^{-|\ivar|}$, since the case of too-short-segments can be independently detected there (see \cite[sec.\ 5.3]{Cubitt2015_long} for an extensive discussion).

However, in 1D, we cannot \`a priori know whether there is enough tape space for the full expansion, so finding the least significant bit is not always possible.
A simple solution is as follows.
By \cref{rem:min-and-even}, we can always assume that the tape has length at least $10$, and $\equiv 0\pmod 2$.
We can then encode the input $\ivar$ as follows:
\begin{equation}\label{eq:phi-encoding}
\ivar= \ivar_1 \ivar_2\cdots \ivar_{|\ivar|} \xmapsto{\text{enc}} \phi(\ivar) = \phi\coloneqq  \ivar_11 \ivar_21\cdots1 \ivar_{|\ivar|}0,
\end{equation}
i.e.\ we interleave the bits of $\ivar$ with $1$s.
In this way, by always reading pairs of bits, we know that once the second bit is $0$, all digits of $\phi$ have been extracted.
In the following, we will assume that all inputs $\phi$ are always in the form \cref{eq:phi-encoding}.

The quantum phase estimation procedure can then be modified as follows.
\newcommand{\bolditem}{\normalfont\bfseries\origitem}
\begin{enumerate}
\item Apply the controlled $\op U^k$-gates, where $\op U$ is the phase gate encoding $\ivar$ (see \cref{fig:phase-estimation}).
\textbf{\item Move the head to the least significant bit on the tape, and transition to a unique head symbol there.}
\item Detect the least significant bit.
\item Perform the inverse quantum Fourier transform (see \cref{fig:inverse-qft}).
\end{enumerate}

Steps 1, 3 and 4 are unchanged.
In the next two sections we will rigorously show how this modification suffices to signal expansion success, and penalize all segments with insufficient space for the full expansion.

\subsection{Expansion-Success-Signalling Quantum Phase Estimation}
As a first step, we consider the requirement that the input $N$ written in unary on the tape is longer than $|\phi|+3$.
The tape is the space between two boundary symbols on a segment.
As such, the segment length determines the maximum unary number $N$ that we can write on the tape initially.
Since we cannot \`a priori lower-bound the segment length to guarantee that $N\ge|\phi|+3$, we have to consider the case $N<|\phi|+3$.

We will analyze the behaviour of this by going through the explicit construction of~\cite[sec.~6]{Cubitt2015_long} step by step, and analyse how a too-small $N$ affects the program flow.
The phase estimation QTM is defined on the tape, but such that the tape has multiple tracks: a quantum track, where the quantum operations are performed, as well as classical tracks which are used for the control logic of the QTM---we refer the reader to~\cite[sec.\ 6.1.1\&6.2]{Cubitt2015_long} for details.
The QTM follows five steps.
\paragraph{Preparation Stage.}
The first cell of the quantum track is the ancilla qubit for the phase estimation, and the following $N$ cells are the output qubits for the phase estimation.
\begin{enumerate}
\item Copy the quantum track's unary $1\cdots1$ to a separate input track, in binary.
This TM can work within a length $N+1$ tape (\cite[lem.~30]{Cubitt2015_long}), so there is no issue with this step.
We can thus assume that the separate input track contains the number $N$ written in binary, and padded with $0$s.
\item The $N+1$ qubits in the quantum track are then initialized to $\ket1(\ket+)^{\otimes N}$.
Again, there is no issue.
\end{enumerate}

\paragraph{Control-Phase Stage.}
\newcommand{\ggate}[1]{\gate{\makebox(28,14){$#1$}}}
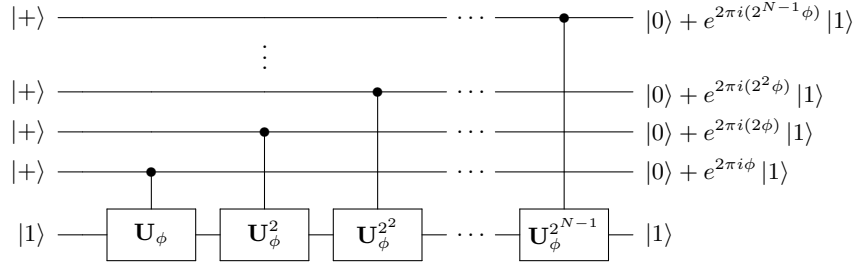
\begin{figure*}
  \centering
  \mbox{
    \Qcircuit @C=1em @R=1.4em {
      \lstick{\ket{+}} & \qw & \qw              & \qw                & \qw                    & \push{\makebox[2em]{$\,\dots$}} \qw & \ctrl{5}                  &  \rstick{\ket{0} + e^{2\pi i(2^{N-1}\phi)}\ket{1}} \qw \\
      \raisebox{4pt}{\hspace{5.5cm}\vdots} &  \\
      \lstick{\ket{+}} & \qw & \qw              & \qw                & \ctrl{3}               & \push{\makebox[2em]{$\,\dots$}} \qw & \qw                       &  \rstick{\ket{0} + e^{2\pi i(2^2\phi)}\ket{1}} \qw \\
      \lstick{\ket{+}} & \qw & \qw              & \ctrl{2}           & \qw                    & \push{\makebox[2em]{$\,\dots$}} \qw & \qw                       &  \rstick{\ket{0} + e^{2\pi i(2\phi)}\ket{1}} \qw \\
      \lstick{\ket{+}} & \qw & \ctrl{1}         & \qw                & \qw                    & \push{\makebox[2em]{$\,\dots$}} \qw & \qw                       & \rstick{\ket{0} + e^{2\pi i\phi}\ket{1}} \qw \\
      \lstick{\ket{1}} & \qw & \ggate{\op U_\phi} & \ggate{\op U_\phi^2} & \ggate{\op U_\phi^{2^2}} & \push{\makebox[2em]{$\,\dots$}} \qw & \ggate{\op U_\phi^{2^{N-1}}}& \rstick{\ket{1}} \qw
    }
    \hspace{4em}
  }
  \caption{Quantum phase estimation, controlled phase gate stage. Figure taken from~\cite{Cubitt2015_long}, but with Hadamards already applied.}
  \label{fig:phase-estimation}
\end{figure*}

\newcommand{\sggate}[1]{\gate{\makebox(19,14){$#1$}}}
\begin{figure*}
  \centering
  \hspace{1cm}\mbox{
    \Qcircuit @C=.5em @R=1em {
      \lstick{\ket{0} + e^{2\pi i(2^{N-1}\phi)}\ket{1}}
        & \qw & \qw & \qw & \qw & \qw & \qw & \qw & \qw & \qw
        & \sggate{\op U_\alpha^1}
        & \sggate{\op U_\alpha^2}
        & \push{\makebox[2em]{$\,\dots$}} \qw
        & \sggate{\op U_\alpha^{2^{N}}}
        & \sggate{\makebox[2em][c]{$\op H$}}
        & \rstick{\ket{j_1}} \qw \\
      \lstick{\ket{0} + e^{2\pi i(2^{N-2}\phi)}\ket{1}}
        & \qw & \qw & \qw & \qw & \qw
        & \sggate{\op U_\alpha^2}
        & \sggate{\op U_\alpha^{2^N}}
        & \push{\makebox[2em]{$\,\dots$}} \qw
        & \sggate{\makebox[2em][c]{$\op H$}}
        & \qw & \qw & \qw & \ctrl{-1} & \qw
        & \rstick{\ket{j_2}} \qw \\
      \raisebox{5pt}{\vdots}\\
      \lstick{\ket{0} + e^{2\pi i(2\phi)}\ket{1}}
        & \qw & \qw
        & \sggate{\op U_\alpha^{2^{N}}}
        & \sggate{\makebox[2em][c]{$\op H$}}
        & \push{\makebox[2em]{$\,\dots$}} \qw
        & \qw & \ctrl{-2} & \qw & \qw & \qw & \ctrl{-3} & \qw & \qw & \qw
        & \rstick{\ket{j_{N-1}}} \qw \\
      \lstick{\ket{0} + e^{2\pi i\phi}\ket{1}}
        & \qw
        & \sggate{\makebox[2em][c]{$\op H$}}
        & \ctrl{-1} & \qw
        & \push{\makebox[2em]{$\,\dots$}} \qw
        & \ctrl{-3} & \qw & \qw & \qw & \ctrl{-4} & \qw & \qw & \qw & \qw
        & \rstick{\ket{j_{N}}} \qw
    }
  }
  \caption{Quantum phase estimation, inverse Fourier transform stage.
    Here, $\alpha=2^{-|\phi|}$, as in~\cite{Cubitt2015_long}.
    This allows us to only have a finite set of gates in the Hamiltonian, instead of requiring an arbitrarily small gate with angle $2^{-N}$.
    Note that this crucially depends on the ability to detect the least significant bit from the control-phase stage.}
  \label{fig:inverse-qft}
\end{figure*}
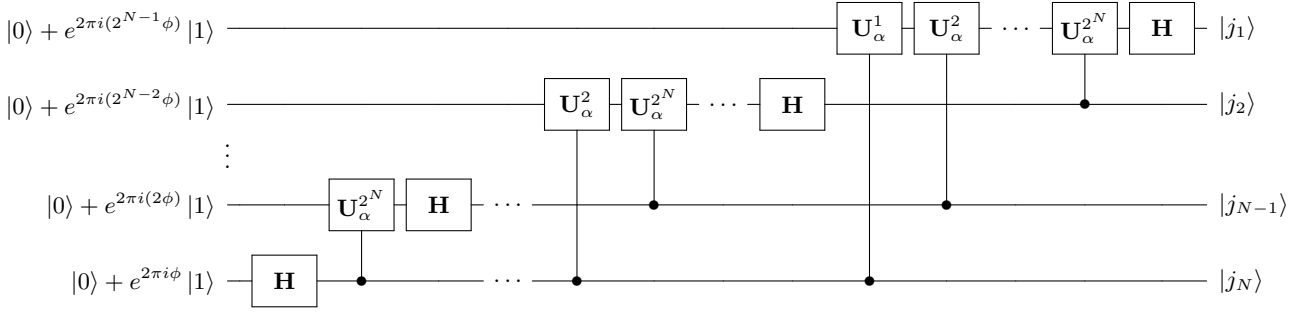

This stage applies the first part of the phase estimation algorithm shown in \cref{fig:phase-estimation}.
It is crucial to note here that just because the input size $N$ is not long enough to do the full phase estimation, the algorithm which is applied is still run as intended for $N$ steps.

If $\phi$ has binary expansion $\phi=0.\phi_1\cdots\phi_{|\ivar|}$, then the output on the first $N$ qubits is
\begin{align}\label{eq:controlled-u-state}
\ket\Phi=\frac{1}{2^{N/2}} \prod_{j=1}^N\left(\ket 0+\ee^{2\pi\ii 2^{N-j}\phi}\ket 1\right).
\end{align}

\paragraph{Signalling Expansion Success}
Since we only want to consider the full binary expansion of $\phi$ as a good input for the dovetailed universal TM, we need to have a way of signaling whether the full expansion has been delivered, or only a truncated version.
We know that in \cref{eq:controlled-u-state}, the first qubit will be in state $\ket+$ if and only if the expansion happened in full.
This is captured in the following lemma.
\begin{lemma}\label{cor:phase-signal}
	If we assume the phase $\phi$ in \cref{th:phase-QTM} to be interleaved with $1$s and terminating with a $0$ as in \cref{eq:phi-encoding}, and if $N$---the number of expansion bits---was even,
	the state post the controlled-$\op U^k$ stage, \cref{eq:controlled-u-state}, has the following properties:
	\begin{enumerate}
		\item If $N\ge|\phi|+1$, then $|\bra{-}(\ket 0+\ee^{2\pi\ii 2^{N-1}\phi}\ket 1)|^2=0$.
		\item Otherwise---if the phase estimation truncated $\phi$---then $|\bra{-}(\ket 0+\ee^{2\pi\ii 2^{N-1}\phi}\ket 1)|^2=\Omega(2^{-|\phi|})$.
	\end{enumerate}
\end{lemma}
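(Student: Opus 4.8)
The plan is to reduce the statement to an elementary estimate for $\sin^2$ of a binary fraction. The starting point is the single-qubit identity $\bra{-}\bigl(\ket 0 + \ee^{\ii\theta}\ket 1\bigr) = \tfrac{1}{\sqrt 2}\bigl(1 - \ee^{\ii\theta}\bigr)$, which gives
\[
  \bigl|\bra{-}(\ket 0 + \ee^{2\pi\ii\, 2^{N-1}\phi}\ket 1)\bigr|^2 = 1 - \cos\bigl(2\pi\, 2^{N-1}\phi\bigr) = 2\sin^2\bigl(\pi\, 2^{N-1}\phi\bigr).
\]
Since $\sin^2$ has period $\pi$, only $2^{N-1}\phi$ modulo $1$ matters, so the whole statement reduces to controlling the fractional part $\{2^{N-1}\phi\}$. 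By the encoding in \cref{eq:phi-encoding}, $\phi$ has the finite binary expansion $\phi = 0.\phi_1\cdots\phi_{|\phi|}$ with $\phi_m = 0$ for $m > |\phi|$; multiplying by $2^{N-1}$ only shifts the binary point, so $\{2^{N-1}\phi\} = 0.\phi_N\phi_{N+1}\cdots\phi_{|\phi|}$ is precisely the tail of the expansion from position $N$ onward, and everything else is an analysis of this tail.

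For the first claim, if $N \ge |\phi|+1$ then all digits $\phi_N, \phi_{N+1},\dots$ vanish, hence $2^{N-1}\phi \in \field Z$ and $\ket 0 + \ee^{2\pi\ii\, 2^{N-1}\phi}\ket 1 = \ket 0 + \ket 1 \propto \ket{+}$, which is orthogonal to $\ket{-}$; equivalently $2\sin^2(\pi\cdot 0) = 0$. (The terminating $0$ in \cref{eq:phi-encoding} would in fact already give this at $N = |\phi|$, but the weaker statement is all that is claimed.)

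For the second claim I would exploit the interleaving together with the parity of $N$. Since $N$ is even and strictly below $|\phi| = 2|\ivar|$, position $N$ carries one of the interleaved flag bits, so $\phi_N = 1$ and therefore $\{2^{N-1}\phi\} = 0.1\,\phi_{N+1}\cdots\phi_{|\phi|}$ lies in $[\tfrac12, 1)$; in particular it is bounded away from $0$. To keep it away from $1$ as well, note that $\phi_{|\phi|} = 0$, so this tail is at most $1 - 2^{-(|\phi|-N)}$. Writing $y \coloneqq 1 - \{2^{N-1}\phi\} \in (0,\tfrac12]$ and using $\sin(\pi y) \ge 2y$ on $[0,\tfrac12]$ with $\sin^2(\pi\{2^{N-1}\phi\}) = \sin^2(\pi y)$ gives
\[
  \bigl|\bra{-}(\ket 0 + \ee^{2\pi\ii\, 2^{N-1}\phi}\ket 1)\bigr|^2 = 2\sin^2(\pi y) \ge 8\,y^2 \ge 8\cdot 4^{-(|\phi|-N)},
\]
and since \cref{rem:min-and-even} lets us take segment lengths, hence $N$, bounded below by a constant, the right-hand side is $\Omega(2^{-|\phi|})$, as claimed.

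The step I expect to be the main obstacle is the second half of Claim 2: ruling out that $\{2^{N-1}\phi\}$ sits arbitrarily close to $1$, so that $\ket 0 + \ee^{\ii\theta}\ket 1$ approaches $\ket{+}$ "from the far side" and its overlap with $\ket{-}$ collapses. This is exactly where the precise shape of the encoding is needed — the interleaved $1$s acting as flags and the single terminating $0$ — and where one has to be careful about which $N$ genuinely count as truncated and about how the minimum segment length feeds into the precise form of the final bound; the parity assumption on $N$ is what forces $\phi_N = 1$, confining the fractional part to $[\tfrac12,1)$ instead of letting it collapse to $0$.
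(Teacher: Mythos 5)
Your route matches the paper's: reduce the overlap to $1-\cos\bigl(2\pi\{2^{N-1}\phi\}\bigr)=2\sin^2\bigl(\pi\{2^{N-1}\phi\}\bigr)$ and then use the encoding \cref{eq:phi-encoding} --- specifically that position $N$, being even and strictly below $|\phi|$, carries a flag bit $1$ --- to pin $\{2^{N-1}\phi\}$ inside $[\tfrac12,1)$, with the terminating $0$ keeping it away from $1$. The paper phrases this as two ``extreme cases'' (tail $0.10\cdots$, giving exactly $\ket{-}$, and tail $0.1\cdots10$, giving the worst overlap) rather than your uniform argument, but the content is the same.

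However, the last line of your Claim~2 does not deliver the exponent you state. From $y=1-\{2^{N-1}\phi\}\ge 2^{-(|\phi|-N)}$ you obtain $8y^2\ge 8\cdot 4^{-(|\phi|-N)}$, and when $N$ is merely bounded below by a fixed constant (which is all \cref{rem:min-and-even} provides, while $|\phi|=2|\ivar|$ is unbounded) this is $\Theta(4^{-|\phi|})$, \emph{not} $\Omega(2^{-|\phi|})$; you would need $N\ge|\phi|/2-O(1)$, which fails precisely for the short truncating segments the lemma is about. Worth noting: the paper's own displayed estimate appears to carry the same slip, since $\bigl(2\pi(\sum_{i=1}^{2|\ivar|}2^{-i}-1)\bigr)^2=(2\pi\cdot 2^{-|\phi|})^2=4\pi^2\cdot 4^{-|\phi|}$, not the $4\pi^2\cdot 2^{-|\phi|}$ written there. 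So your calculation actually reproduces the correct order $\Omega(4^{-|\phi|})=\Omega(2^{-4|\ivar|})$; this still depends only on $\ivar$ and serves perfectly well as the constant $\mu$ in the downstream argument (with $\mu=2^{-4|\ivar|}$ in place of $2^{-2|\ivar|}$), but the step ``hence $\Omega(2^{-|\phi|})$'' as you wrote it is not justified by what precedes it.
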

\begin{proof}
The first claim follows since the least significant non-zero digit of $\phi$ is 1 by assumption, so $2\pi 2^{N-1}\phi = 0 \pmod{2\pi}$.

For the second claim there are two extreme cases of $\phi$ to analyze; all others can easily be seen to be bounded by those.
The first case is if there is only one more bit of $1$ past where the expansion happened, i.e.\ a single $1$ that is cut off: $2\pi 2^{N-1}\phi = 0.1\phi_{|\ivar|}0\cdots \pmod{2\pi}$, and $\phi_{|\ivar|}=0$.
Then $(\ket 0+\ee^{2\pi\ii 2^{N-1}\phi}\ket 1)/\sqrt2=\ket-$.
The other case is $2\pi 2^{N-1}\phi = 0.1\cdots10\cdots \pmod{2\pi}$, with $\le |\phi|$ 1s.
Then
\begin{widetext}
\[
\left| \bra-(\ket 0+\ee^{2\pi\ii 2^{N-1}\phi}\ket 1)/\sqrt2 \right|^2 =\frac12 \left| 1-\ee^{2\pi\ii 2^{N-1}\phi} \right|^2 
= 1-\cos(2\pi 0.1\cdots10) 
\ge \left(2\pi\left(\sum_{i=1}^{2|\ivar|}\frac{1}{2^i}-1\right)\right)^2 
=4\pi^2\times 2^{-|\phi|}.
\qedhere
\]
\end{widetext}
\end{proof}

In order to temporarily transition to a specific head state $q_?$ over the leftmost qubit which we just showed to have large overlap with $\ket-$ in case of a truncated output, we dovetail the controlled phase stage with the following trivial machine.
The head state $q_?$ together with the underlying qubit will later allow us to discriminate between the two cases in \cref{cor:phase-signal}.
\begin{lemma}\label{lem:success-signalling-TM}
We can dovetail the Controlled Phase QTM with a QTM $M_s$ with the following properties.
\begin{enumerate}
\item The head sweeps all the way to the end of the tape.
\item The head moves one step to the left.
\item The head changes to a special internal state $q_?$ and moves left.
\item The head changes out of $q_?$ and moves right.
\item The head moves all the way back to the left.
\end{enumerate}
\end{lemma}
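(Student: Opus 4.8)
The plan is to exhibit $M_s$ by writing down its (deterministic) transition table and then to observe that the well-formedness conditions are automatic, because $M_s$ never alters the tape contents: it only shuttles the head and updates its own internal state. A machine of this kind merely permutes computational-basis configurations, so it is trivially unitary, and the only thing one must verify is that the induced map on configurations is injective, which is read off directly from the rules. Splicing $M_s$ between the Control-Phase QTM and the least-significant-bit detection of the next stage is then the standard dovetailing of reversible Turing machines~\cite{gottesman2009quantum}.

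Concretely, $M_s$ needs only a handful of auxiliary internal states beyond the distinguished state $q_?$ of the statement. Starting from the fixed configuration in which the Control-Phase QTM leaves the head, $M_s$ moves right on every non-boundary symbol, so the head sweeps to the right end of the segment; the end is recognised by the boundary spin $\ket\bd$ furnished by the Marker Hamiltonian, and on reading $\ket\bd$ the head turns around, taking one step left, then one further step left while entering the state $q_?$, then one step right while leaving $q_?$, and finally moving left on every non-boundary symbol until it meets the left boundary, where --- again recognising $\ket\bd$ --- it steps right into the start state of the next stage, so that the composite phase-estimation QTM stays dovetailed. This realises the five requirements verbatim. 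By \cref{rem:min-and-even} every relevant segment has length at least $10$, so the two single-cell left hops and the single-cell right hop in the middle never cross a boundary and there are no short-segment corner cases. By construction the only cell on which the head ever carries $q_?$ sits at a fixed constant offset from the boundary, which is exactly the ``least significant bit'' cell whose single-qubit state is analysed in \cref{cor:phase-signal}, and it is occupied for precisely one computational step --- which is what the subsequent energy analysis requires, since it will attach a single diagonal penalty to the configuration ``head in $q_?$ with that qubit in $\ket{-}$'' to flag a truncated expansion.

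I do not anticipate a genuine obstacle: the content of the lemma is bookkeeping, and it is isolated as a separate statement only because the later estimates depend on the precise one-step appearance of the configuration $(q_?,\text{designated qubit})$ in the history state. The one point worth stating carefully is reversibility: since no tape symbol is ever overwritten and each phase above uses a fresh internal state, reading the rules backwards defines the inverse map on configurations, so $M_s$ is reversible; being a permutation of basis configurations it is automatically unitary, and it can be put into the unidirectional normal form required for a well-formed QTM at the cost of a constant number of extra internal states by a standard construction, after which dovetailing with the surrounding machines is routine.
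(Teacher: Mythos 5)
Your overall plan --- exhibit an explicit deterministic, reversible sweeper and check well-formedness --- is exactly the paper's approach, and your discussion of reversibility, unidirectionality, and dovetailing is fine. But one central mechanism in your construction is not available in the framework: you have $M_s$ turn around ``on reading $\ket\bd$.'' The boundary symbol $\ket\bd$ is not a tape symbol in $\Hs_\mathrm{q}$ that the QTM can read; it lives in the Marker Hamiltonian register $\Hs_\mathrm{el}$, and the combined local term in \cref{lem:HTM} is $\ketbra{\bd}^\perp\otimes\op q + \op h\otimes\1$, so the QTM transition rules are \emph{suppressed} at $\bd$ sites. The entire ``TM runs out of tape'' penalty in $\op P$ works precisely because the head \emph{cannot} transition past or react to $\bd$; giving $M_s$ a rule ``on $\bd$, turn around'' would defeat that mechanism. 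The paper's machine instead detects the turn-around point from the \emph{input track} content: after the reset stage the input track holds the unary count $1\cdots1$ right-padded with $0$s, and the head sweeps right in a state $q_1$ over $1$s, turns around on reading the first $0$, and the $q_?$ step then sits over the designated least-significant output cell. With that substitution your argument goes through.

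Two smaller discrepancies. First, your claim that $M_s$ ``never alters the tape contents'' is stronger than what the paper's concrete machine does (and stronger than necessary): its transition table writes $\#$ over the starting cell on the first step ($q_0$ reading $1$ writes $\#$, moves right) and restores the $1$ only at the end ($q_3$ reading $\#$ writes $1$), which is the standard reversible way to recognise the left end of the return sweep without needing a boundary symbol there either. Second, you appeal to a generic construction to obtain well-formed/unidirectional/proper/normal form, whereas the paper simply writes out the six-state partial transition table and verifies those properties directly; both are fine, but you should be aware that the verification here is an explicit table check rather than an invocation of a black-box theorem.
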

\begin{proof}
	Observe that after the reset stage in \cite[Sec.~6.7]{Cubitt2015_long}, the input track is in its original configuration, containing $N$ 1s and right-padded with zeros.
	We give the following partial transition table for the Turing machine.
	\begin{center}
	\begin{tabular}{r|ccc}
		         &             \#             &              0               &              1               \\ \hline
		   $q_0$ &                            &                              &  $\ket{\#}\ket{q_1}\ket{R}$  \\
		   $q_1$ &                            &  $\ket{0}\ket{q_?}\ket{L}$   &  $\ket{1}\ket{q_1}\ket{R}$   \\
		   $q_?$ &                            &                              &  $\ket{1}\ket{q_2}\ket{R}$   \\
		$q_{2}$ &                            & $\ket{0}\ket{q_{3}}\ket{L}$ &                              \\
		$q_{3}$ & $\ket{1}\ket{q_f}\ket{N}$  &                              & $\ket{1}\ket{q_{3}}\ket{L}$ \\
		   $q_f$ & $\ket{\#}\ket{q_0}\ket{N}$ &  $\ket{0}\ket{q_0}\ket{N}$   &  $\ket{1}\ket{q_0}\ket{N}$
	\end{tabular}
	\end{center}
	It is easy to check that the rules define a well-formed (orthogonal transition functions where each non-zero transition probability is 1, see \cite[Thm.~19]{Cubitt2015_long}), unidirectional (each state can only be entered from one side, see \cite[Def.~17]{Cubitt2015_long}), proper and normal form (forward transitions from the final state go to the initial state, not moving the head, and not altering the tape, see \cite[Def.~15]{Cubitt2015_long}) QTM.
\end{proof}

\paragraph{Inverse Fourier Transform Stage.}
The inverse Fourier transform is applied to the output of the phase estimation.
It is crucial to observe again that the control flow for the application of the Fourier transform TM does not change behaviour simply because the tape is too short to contain all $|\phi|$ digits of $\phi$.

The trouble is that since we cannot necessarily locate the least significant bit if the expansion was truncated, we possibly apply the ``wrong'' inverse QFT.
Thus, from hereon, we cannot guarantee that the output is related to the input in any way to keep the dovetailed UTM halting, if it were to halt on the fully-expanded $\phi$, or likewise non-halting.
As we have mentioned before, we note that we do not need to care about this problem: we already have an independent state we can penalize ($q_?$ over $\ket-$) in case the QPE truncated the expansion.

\subsection{On Proper QTM Behaviour}
As in the two-dimensional construction, we have to ensure that one can write a valid history state Hamiltonian from the defined quantum Turing machine.
One requirement is that when the QTM is specified by a partial isometry for the transition rules, they can be uniquely completed to a unitary transition function.
In \textcite{Cubitt2015_long}'s case, the authors ensured this by requiring that the QTM was \emph{proper}, as defined in \cite[Def.~20]{Cubitt2015_long}---meaning that the QTM head moves deterministically on a subset of good inputs.
This not only means that there should never be an explicit transition for a head state into a superposition, but also that any intermediate superposition on the quantum tape does not result in the head splitting up into distinct states.
For TM tapes that were too short, the authors could not guarantee this property (just as we cannot here).
This was not an issue in the 2D construction, since the energy contribution from these cases could be obtained by exact diagonalization (the binary length of $\phi$ is known, hence also an upper bound on the too-short-segment length) and subtracted from the final Hamiltonian.

The reason for proper behaviour in the good case---i.e.\ long enough tape---is more subtle.
Assume for now we have a non-halting instance $\phi$.
If the QTM head were to move in some superposition, it could be that on some long but finite track, one head path reaches the boundary.
Since there is no more tape, the clock moves this head to an idling tape.
This head path is thus not able to interfere back with the other head paths.
The other head paths could now think that one has a halting instance, skewing the result.
It is therefore crucial that the QTM we design behaves properly for long enough tapes.
\begin{remark}\label{rem:proper-qtm}
	On a segment which is long enough the QTM plus dovetailed sweeper in \cref{lem:success-signalling-TM} we use is proper, in the sense of \cite[Def.~20]{Cubitt2015_long}.
\end{remark}
\begin{proof}
	The phase estimation terminates with success probability of $1$ if the tape is long enough, and we refer the reader to~\cite[sec.~6]{Cubitt2015_long} for a discussion of the proper QTMs they use, and whose existence we can thus assume.
\end{proof}

We point out that for us it suffices that for too short tapes, we can inflict an independent penalty on the head state $q_?$ in \cref{lem:success-signalling-TM}.
Whatever happens after that (since the tape is left in superposition) we do not care about, as we will discuss in the next section.
So, as in the 2D case, we do not need to ensure that the QTM behaves properly in this case.

\section{Combining the Marker with the Quantum Turing Machine}\label{sec:combining}
We know how to translate the QPE QTM dovetailed with the universal classical TM from the last section---denoted $M$---into a local history state Hamiltonian $\HTM=\HTM(M,\ivar)$; see \cref{sec:hist-state-intro} and \cite[Thm.~33]{Cubitt2015_long}.
(For brevity, we will refer to this dovetailed QPE QTM and universcal classical TM as the ``universal QTM'' $M$.)
We also assume that we have the Marker Hamiltonian $\Hel$ from \cref{th:elastic!} with an asymptotic falloff exponent $f$ to be specified in due course.

\begin{lemma}\label{lem:HTM}
	Let $\op h$ be the local terms of $\Hel$, and $\op q$ be the local terms of $\HTM$.
	Then on the combined local Hilbert space $\Hs = (\Hs_\mathrm{el}\otimes\Hs_\mathrm{q})^{\otimes N}$, where $N$ is the length of the spin chain, we can define the local Hamiltonian
	\[
		\op h_\mathrm{tot} \coloneqq  \ketbra{\bd}^\perp\otimes\op q + \op h \otimes \1.
	\]
	Then there exists a Hamiltonian $\op H_\mathrm{init}$, such that $\op H \coloneqq  \op H_\mathrm{init} + \sum_i \op h_{\mathrm{tot},i}$ has the following properties:
	\begin{enumerate}
		\item $\op H = \bigoplus_s \op H_s$ block-decomposes like $\Hel$.
		\item All blocks $\op H_s$ of signature $s$, where $s=0$ or $\Hel_s\ge 1$ in \cref{th:elastic!}, have energy $\ge 0$.
		\item On a block of signature $s$ not covered by the previous case $s$ has consecutive 1-bounded segments of length $w_i$.
		\item On a single segment $w_i$, the ground state of $\op H_s$ in $\Hs_\mathrm{q}$ is given by the QTM history state on a tape of length $w_i$,
		$$
		\ket\Psi = \sum_{t=0}^T\ket t\ket{\psi_t}.
		$$
		$\ket{\psi_0}$ is correctly initialized.
		Furthermore, for some $T_1$, $\ket{\psi_{T_1}}$ has overlap $\ge \Omega(2^{-2|\ivar|})$ with a head state $q_?$ from \cref{lem:success-signalling-TM} over a tape qubit in state $\ket-$ on the quantum tape if and only if $w_i<|\phi|+3$ (i.e.\ when the phase estimation truncated).
	\end{enumerate}
\end{lemma}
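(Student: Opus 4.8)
The plan is to let $\op H_\mathrm{init}$ collect the local initialisation penalties of the history-state Hamiltonian, one group attached to each boundary marker---penalising the distinguished initial head state appearing immediately to the right of a $\bd$ over anything but a correctly prepared fresh tape cell, with the remaining fresh-tape conditions supplied, as usual, by the propagation rules (cf.\ \cite[Thm.~33]{Cubitt2015_long})---so that every $\bd$ seeds an independent, correctly initialised run of the QPE-plus-universal QTM on the segment to its right. One then verifies the four claims by reducing everything to a single segment. For claim~1 one checks that $\op H_\mathrm{init}$ is diagonal with respect to $\ketbra{\bd}$ on each $\Hs_\mathrm{el}$ factor, hence preserves the boundary signature; the marker terms $\op h$ preserve it by \cref{th:elastic!}; and $\ketbra{\bd}^\perp\otimes\op q$ is likewise block-diagonal with respect to $\ketbra{\bd}$ on each $\Hs_\mathrm{el}$ factor. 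Hence $\op H=\op H_\mathrm{init}+\sum_i\op h_{\mathrm{tot},i}$ commutes with every signature projector and $\op H=\bigoplus_s\op H_s$ block-decomposes exactly like $\Hel$.

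Next observe that $\HTM$ is a Feynman--Kitaev / Gottesman--Hassidim history-state Hamiltonian and is therefore positive semidefinite, and $\op H_\mathrm{init}$ is a sum of projectors. Fix a block $s$. Every local term of $\ketbra{\bd}^\perp\otimes\op q$ or of $\op H_\mathrm{init}$ whose support meets a site carrying a boundary in $s$ is annihilated by the $\ketbra{\bd}^\perp$ factors, so within $\op H_s$ no QTM or initialisation term straddles two boundaries, and
\[
\op H_s=\Hel_s\otimes\1+\1\otimes{\textstyle\bigoplus_i}\big(\HTM^{(w_i)}+\op H_\mathrm{init}^{(w_i)}\big),
\]
the direct sum running over the $1$-bounded $0$-segments $i$ of $s$, where $\HTM^{(w_i)}$ and $\op H_\mathrm{init}^{(w_i)}$ are the restrictions to a tape of length $w_i$ and all remaining factors carry the identity. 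The second summand is positive semidefinite, so $\op H_s\ge\Hel_s\otimes\1$; if $s=0$ or $\Hel_s\ge1$ in the sense of \cref{th:elastic!} this gives $\op H_s\ge0$ (claim~2), and in the remaining case \cref{th:elastic!} forces $s$ to be properly bounded without consecutive boundaries, i.e.\ a concatenation of $1$-bounded $0$-segments of lengths $w_i$ (claim~3).

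For claim~4 I restrict $\op H_s$ to one segment of length $w_i$, where it equals $\Delta'_{f(w_i)}\otimes\1+\1\otimes(\HTM^{(w_i)}+\op H_\mathrm{init}^{(w_i)})$ with the two summands acting on disjoint tensor factors---the perturbed path-graph Laplacian of \cref{th:elastic!} on $\Hs_\mathrm{el}$, the history-state terms on $\Hs_\mathrm q$. Its ground state is therefore the product of the Marker segment ground state with the unique zero-energy state of $\HTM^{(w_i)}+\op H_\mathrm{init}^{(w_i)}$, which by \cite[Thm.~33]{Cubitt2015_long} is the history state $\ket\Psi=\sum_{t=0}^T\ket t\ket{\psi_t}$ of the QTM run on a tape of length $w_i$, with $\ket{\psi_0}$ pinned to the correct input by $\op H_\mathrm{init}^{(w_i)}$; thus in $\Hs_\mathrm q$ the ground state is as claimed. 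To read off the overlap property, take $T_1$ to be the step at which the sweeper of \cref{lem:success-signalling-TM} moves the head into the internal state $q_?$ over the flagged qubit: by \cref{cor:phase-signal} that qubit is exactly $\ket+$ whenever there is room for the full expansion---by the tape-usage bound of \cref{th:phase-QTM} this is precisely when $w_i\ge|\phi|+3$---so $\ket{\psi_{T_1}}$ is then orthogonal to the configuration ``$q_?$ over a tape qubit in state $\ket-$'', whereas otherwise \cref{cor:phase-signal} gives overlap $\ge\Omega(2^{-|\phi|})=\Omega(2^{-2|\ivar|})$ since $|\phi|=2|\ivar|$; this is claim~4.

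The main obstacle is the interface between the QTM's tape end and the marker $\bd$: one must check that the $\ketbra{\bd}^\perp$-truncated propagation Hamiltonian still has a \emph{unique} zero-energy history state on a \emph{finite} segment---i.e.\ that the QTM, and in particular its clock (cf.\ \cref{rem:loop}), idles or terminates gracefully when the head reaches the boundary instead of demanding a transition into a nonexistent cell---and that the improper QTM behaviour on too-short segments (where the wrong inverse QFT may be applied) does not destroy frustration-freeness; both are inherited from the QTM constructions of \cite{gottesman2009quantum,Cubitt2015_long}, the faulty outputs being dealt with separately through the $q_?$-penalty that \cref{lem:success-signalling-TM} makes available. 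A secondary point is confirming that $\op H_\mathrm{init}$ is genuinely $2$-local and signature-preserving while still seeding a fresh, independent computation in each segment.
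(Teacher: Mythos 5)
Your proof is correct and follows essentially the same route as the paper's (very terse) proof: the paper also argues claims 1--2 from positivity of $\op q$ and commutativity of the two summands of $\op h_\mathrm{tot}$, derives claim 3 from the earlier marker-block analysis, and defers claim 4 to the history-state construction of Cubitt--P\'erez-Garc\'ia--Wolf together with \cref{lem:success-signalling-TM} and \cref{cor:phase-signal}; you have simply spelled out what the paper dismisses as ``obvious'' and flagged the same boundary-interface subtleties that the reference handles.
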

\begin{proof}
	The first two claims are obvious, since the $\op q$ are positive semi-definite, and the two terms in $\op h_\mathrm{tot}$ commute.
	The third claim follows from \cref{rem:elastic-1,lem:elastic-2}.
	The last claim is the same argument as in the proof in \cite[Thm.~33 and Lem.~51]{Cubitt2015_long}, and the overlap follows from \cref{lem:success-signalling-TM,cor:phase-signal}.
\end{proof}

\subsection{Energy Penalty for Not Halting}
In contrast to the 2D undecidability result, we give an energy penalty to the universal QTM \emph{not} halting.
Since the universal QTM contains a universal TM after the QPE, we have to worry about the case that the universal TM enters a looping state, and runs forever.
Note that by Rice's theorem, we cannot easily exclude this case from all possible inputs that the QPE expands, as deciding whether or not a TM loops is already undecidable.
Luckily this is not an issue in our case, as the following remark shows.
\begin{remark}\label{rem:loop}
If the universal TM enters loops forever, the history state Hamiltonian implementing it will eventually enter a state that can be penalized with a local term.
\end{remark}
\begin{proof}
The way the evolution of the universal TM is encoded in a history state is by performing one computational step every time a counter is incremented.
This counter is itself a classical TM, which is guaranteed to never cycle.
One can therefore easily detect when the counter runs out of space (see sec.\ \cite[4.4]{Cubitt2015_long}), which is when the TM head runs into the right boundary $\bd$ in a state that indicates the incrementing is not terminated yet.
For a base-$\zeta$ counter, this will happen after $\zeta^w$ steps.
\end{proof}
For our purposes a cycling UTM is thus equivalent to one running out of space.

A two-local projector suffices to penalize the QTM head symbol to the left of a boundary marker $\bd$.
We furthermore give a penalty to the head $q_?$ over a $\ket -$ on the quantum tape in \cref{lem:success-signalling-TM} indicating that the phase estimation truncated the expansion prematurely.
We denote the local Hamiltonian term inflicting these penalties with $\op P=\sum_i \ketbra{h_i\bd} +\ketbra{q_?;-}$, where $\{ h_i\}$ is the set of head states we wish to penalize next to the boundary, i.e.\ all QTM states, and the clock TM states indicating that the increment step is not finished yet.

\begin{theorem}\label{th:single-segment}
Let $s=(1, 0, \ldots, 0, 1)$ be a signature of length $w$, and take $\op H_s^{(f)}$ from \cref{rem:elastic!} with a bonus falloff exponent $f$, the universal QTM Hamiltonian $\HTM(\ivar)$ from \cref{lem:HTM}, and the halting penalty term $\op P$.
Further define $\mu=2^{-2|\ivar|}$.
We write $\op H_s=\mu\op H_s^{(f)}+\op H_C$, where $\op H_C \coloneqq \HTM(\ivar) + \op P$ is the circuit Hamiltonian plus non-halting penalty (consistent with \cref{sec:hist-state-intro}).
Then either
\begin{enumerate}
\item $w<|\phi|+5$, i.e.\ the phase estimation truncates the input~\footnote{Truncation happens on less than $|\phi|+3$ tape, as explained in \cref{th:phase-QTM}; here we include the two boundary markers, hence $w<|\phi|+5$.}.
  Then the minimum eigenvalue of $\op H_s$ satisfies $\lmin(\op H_s)>0$, and is strictly monotonically decreasing as $w$ increases.
\item $w\ge|\phi|+5$, i.e.\ the phase estimation finishes exactly, and the universal TM does \emph{not} terminate within the space given. Then, as in the first case, $\lmin(\op H_s) \rightarrow 0$ from above as $w$ grows.
\item $w\ge|\phi|+5$, and the UTM \emph{does} halt after consuming $w_\mathrm{halt}<w$ tape. Then the ground state energy $\lmin(\op H_s)<-\Omega(1/4^{f(w_\mathrm{halt})})$, which in particular is independent of $w$.
\end{enumerate}
\end{theorem}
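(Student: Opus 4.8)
The plan is to work entirely inside the single-segment signature block $\op H_s$ with $s=(1,0,\dots,0,1)$ of length $w$, and to reduce all three cases to one estimate. Since this block pins the two boundary markers to the ends of the segment, the two-local ``head next to a boundary'' projectors in $\op P$ collapse to projectors acting only on the Turing-machine register of the rightmost tape cell; combined with the fact (\cref{lem:HTM}) that the marker terms act on $\Hs_\mathrm{el}$ while the Turing-machine terms act on $\Hs_\mathrm{q}$, the block is a sum of two commuting operators, so that $\lmin(\op H_s)=\mu\,\lmin(\op H_s^{(f)})+\lmin(\op H_C)$ \emph{exactly}. By \cref{th:elastic!,rem:elastic!} the marker part delivers a tunable bonus $-1/2^{f(w)}\le\lmin(\op H_s^{(f)})\le-1/4^{f(w)}$, with $f$ as small as linear in $w$; and $\op H_C=\HTM(\ivar)+\op P\ge0$ has ground energy $0$ exactly when the history state $\ket\Psi=\sum_t\ket t\ket{\psi_t}$ from \cref{lem:HTM} avoids every penalty in $\op P$, and a strictly positive ground energy otherwise. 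Everything thus reduces to (i) lower bounding $\lmin(\op H_C)$ via a standard projection (Kitaev) lemma whenever a penalty fires, and (ii) fixing the falloff $f$, once and for all, so that this lower bound beats the bonus $\mu\,|\lmin(\op H_s^{(f)})|\le\mu/2^{f(w)}$, while a net bonus survives when no penalty fires. By \cref{rem:min-and-even}, segments of length $<10$ carry no bonus, so $\lmin(\op H_s)\ge0$ there trivially and we may assume $w\ge10$.

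\textbf{Cases 1 and 2.} Here a penalty fires, forcing $\lmin(\op H_s)>0$. If $w<|\phi|+5$ the phase estimation truncates, so by \cref{cor:phase-signal} (note $2^{-|\phi|}=\mu$) and \cref{lem:HTM} the snapshot $\ket{\psi_{T_1}}$ has squared overlap $\Omega(\mu)$ with the penalized configuration $\ket{q_?;-}$; hence $\bra\Psi\op P\ket\Psi\ge\Omega(\mu)/(T+1)$, where $T=\BigO(\poly(w)2^w)$ is the phase-estimation runtime (\cref{th:phase-QTM}). As the history Hamiltonian has a unique ground state with spectral gap $\Omega(1/T^2)$ and $\|\op P\|=\BigO(1)$, the projection lemma gives $\lmin(\op H_C)\ge\Omega(\mu/T^3)$. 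Since $\mu$ appears on both sides, $\Omega(\mu/T^3)>\mu/2^{f(w)}$ is equivalent to $2^{f(w)}>\Theta(T^3)=\Theta(\poly(w)8^w)$, which holds for a \emph{fixed} linear $f(w)=\Theta(w)$ with a large enough, $\ivar$-independent slope; hence $\lmin(\op H_s)>0$. If instead $w\ge|\phi|+5$ but the universal TM does not terminate within the available tape, the truncation penalty never fires, but the TM head---or, if the machine cycles, the head of the classical clock driving it (\cref{rem:loop})---eventually runs into the right boundary in a penalized state, so $\bra\Psi\op P\ket\Psi\ge1/(T+1)$ with $T$ at most exponential in $w$ (the clock capacity). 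The same estimate yields $\lmin(\op H_C)\ge\Omega(1/T^3)>\mu/2^{f(w)}$, hence $\lmin(\op H_s)>0$; and since $\lmin(\op H_C)=\BigO(1/T)$ and $\mu/2^{f(w)}$ both vanish as $w\to\infty$, $\lmin(\op H_s)\to0$ from above. Strict monotonicity in case 1 follows because $\lmin(\op H_C)$ tracks the penalty strength $\sim(\text{overlap})^2/T(w)^3$, which strictly decreases in $w$ while the bonus correction $\mu/4^{f(w)}$ is swamped; the quantitative statement comes from propagating the explicit $w$-dependence of the truncated phase-estimation output in \cref{cor:phase-signal}.

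\textbf{Case 3.} Here no penalty fires, giving $\lmin(\op H_s)<0$. Let $w_\mathrm{halt}$ be the least segment length on which the universal TM halts with the head never touching a boundary. On that segment the expansion completes (no $\ket{q_?;-}$) and the head halts strictly inside (no head-next-to-boundary penalty), so $\ket\Psi$ is penalty-free, $\lmin(\op H_C)=0$, and $\lmin(\op H_s)=\mu\,\lmin(\op H_s^{(f)})\le-\mu/4^{f(w_\mathrm{halt})}=-\Omega(1/4^{f(w_\mathrm{halt})})$, a strictly negative number fixed by the halting behaviour of the UTM on $\ivar$ and independent of the chain length. For $w>w_\mathrm{halt}$ the history state stays penalty-free, so $\lmin(\op H_s)=\mu\,\lmin(\op H_s^{(f)})\in[-\mu/2^{f(w)},-\mu/4^{f(w)}]$ remains negative but relaxes towards $0$; thus the minimal halting length $w_\mathrm{halt}$ is the energetically optimal segment size, which is what the subsequent ground-state-energy analysis needs.

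\textbf{Main obstacle.} The crux is step (ii): the marker bonus and the weakest available penalty are \emph{both} exponentially small in $w$---$1/4^{f(w)}$ versus $\Omega(\mu/T(w)^3)$, with $T(w)$ the runtime on a length-$w$ tape---and one must show that a single falloff $f$, independent of $\ivar$, makes the bonus negligible. This works only because the prefactor $\mu=2^{-2|\ivar|}$, inserted for exactly this purpose, cancels between the bonus $\mu\op H_s^{(f)}$ and the $\Omega(2^{-|\phi|})$ overlap of \cref{cor:phase-signal}; without it $f$ would have to grow with $|\ivar|$. The remaining technical points are the exact additivity $\lmin(\op H_s)=\mu\,\lmin(\op H_s^{(f)})+\lmin(\op H_C)$---which relies on the boundary penalties acting only on the Turing-machine register once the signature fixes the markers, and on \cref{rem:proper-qtm} guaranteeing that on a long-enough tape $\ket\Psi$ really is the one-dimensional path-graph history state with gap $\Omega(1/T^2)$---and making the strict monotonicity in case 1 fully rigorous, which requires the explicit $w$-dependence from the phase-estimation analysis of \cref{th:phase-QTM,cor:phase-signal}.
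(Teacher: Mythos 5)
Your proposal matches the paper's proof in both structure and substance. You use the same block-decomposition into a commuting sum $\op H_s = \mu\op H_s^{(f)}\otimes\1 + \1\otimes\op H_C$ to obtain exact additivity of the ground state energies; the same $\Omega(\mu/T^3)$ (truncated) resp.\ $\Omega(1/T^3)$ (non-halting) lower bound for the penalized history-state part, with the runtime $T(w)$ bounded via the clock/Poincar\'e argument; the same cancellation of the $\mu$ prefactor between the $\Omega(2^{-|\phi|})$ overlap of \cref{cor:phase-signal} and the $\mu$-weighted marker bonus; and the same conclusion that a penalty-free history state on a segment of length $w_\mathrm{halt}$ yields $\lmin(\op H_s)=\mu\lmin(\op H_s^{(f)})\le-\Omega(\mu/4^{f(w_\mathrm{halt})})$. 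The one place you deviate is in the choice of falloff exponent: you argue a fixed \emph{linear} $f(w)=\Theta(w)$ with large enough $\ivar$-independent slope already beats $T^3(w)\le\poly(w)A^{3w}$, whereas the paper simply takes the overkill choice $f(w)=2^w$ (and, inconsistently, $f(w)=w^2$ later in \cref{th:TM-ham}); both are valid, and your observation that linear suffices is a minor sharpening rather than a different route. Your treatment of the halting case is also slightly more careful than the paper's somewhat loose wording (the paper's proof literally writes $\lmin(w)=-f(w)$ where it should write $E(w)$, and the theorem's ``independent of $w$'' claim is really about the optimal segment length $w_\mathrm{halt}$ rather than every $w>w_\mathrm{halt}$); you correctly identify that for $w>w_\mathrm{halt}$ the bonus $E(w)$ relaxes towards $0$ and that the $-\Omega(1/4^{f(w_\mathrm{halt})})$ bound is attained at $w=w_\mathrm{halt}$, which is exactly what the subsequent multi-segment analysis in \cref{lem:multiple-segments,th:TM-ham} needs.
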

\begin{proof}
We first note that a history state Hamiltonian encoding a computation of length $T$ that picks up at least one energy penalty, has ground state energy $\lmin\in\Theta(1/T^2)$---see~\cite{Bausch2016a}.
A safe asymptotic lower bound $\bar\lambda_\mathrm{min}<\lmin$ is thus given by $\bar\lambda_\mathrm{min}\coloneqq 1/T^3$.

Furthermore, the runtime of the TM $T$ on the limited space will depend on the available tape space $w$, and on the potential halting time $T_\mathrm{halt}$.
We thus write $T=T(w)$ indicating that the runtime $T$ will be bounded by the tape in the case that the TM cannot terminate within the available space (if it terminates at all).
A trivial runtime bound for $T(w)$ can be derived from Poincaré recurrence.
Since we demand that the TM be reversible, no two configurations of tape and TM head ever repeat.
For $Q$ internal symbols, and $A$ symbols on the tape of length $w$ (where both $Q$ and $A$ are constant), we obtain
\begin{equation}\label{eq:TM-runtime-bound}
T(w)< Q \times w \times A^w=:T_{\max}(w)
\end{equation}
i.e.\ the product of internal states times the possible head positions times all possible tape configurations.
\Cref{eq:TM-runtime-bound} allows us to choose a falloff exponent $f$ such that
\begin{equation}
\frac{1}{\bar\lambda_\mathrm{min}}=T^3(w)<T_{\max}^3(w)<2^{f(w)},
\label{eq:falloff-1}
\end{equation}
e.g.\ $f(w)=2^w$ for a choice of $d=7$ in \cref{rem:elastic!}.

We can lower bound the ground state energy of the history state Hamiltonian plus penalty part of $\op H_s$, i.e.\ $\HTM(\ivar) + \op P$ \emph{without} the energy bonus inflicted within $\op H_s^{(f)}$, in relation to the segment length $w$ by $\bar\lambda_\mathrm{min}(T)=\bar\lambda_\mathrm{min}(T(w))$ as shown in \cref{fig:segment-energy}.
\begin{figure*}
\includegraphics{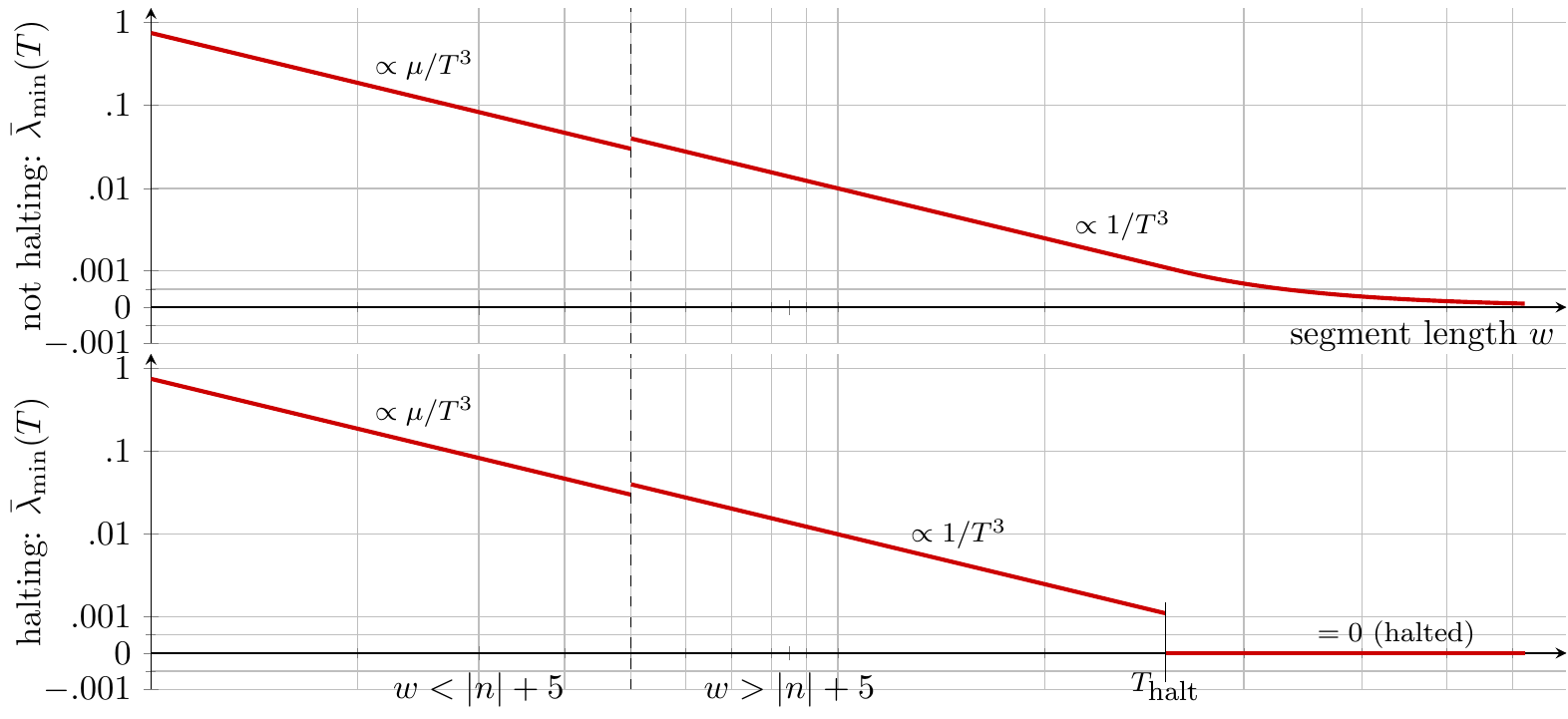}
\caption{\label{fig:segment-energy}Lower bounds to the groundstate energy of the QTM history state Hamiltonian on a single segment, as a function of the segment length, in the halting and non-halting cases.}
\end{figure*}
The top panel shows the case for which the dovetailed universal TM will not halt.
Depending on the segment length $w$, we have the following two cases:
\begin{enumerate}
\item For $w<|\ivar|+5$, there is not enough tape for the phase expansion.
By \cref{lem:success-signalling-TM}, we know that with probability $\ge\mu$, the phase estimation results in a string where the head symbol $q_?$ is over a tape qubit $\ket-$, which shows the phase estimation truncated the output.
Therefore, the head will be penalized by $\op P$ with overlap $\ge\mu$.
In order to account for the fact that the part of the computation following on from the garbage state coming out of the interrupted phase estimation could well halt, even if $\phi$ encodes a non-halting instance, we scale the lower bound in this area down by a factor $\mu$---which is still non-negative, as $\mu$ is just a constant prefactor.
Observe that it is not essential that we inflict the penalty term at the end of the history state (see e.g.\ \cite[Cor.~44]{Bausch2016}).
\item For $w\ge|\ivar|+5$
  the phase estimation finishes exactly, and the universal TM retrieves the complete input on which it will not halt; the energy penalty $\op P$ applies as well.
\end{enumerate}
In either case, the history state evolution is of length $T=T(w)$, i.e.\ the runtime of the computation until the head bumps into the right marker or the clock \emph{driving} the computation runs out of time, both of which depends on the segment length $w$.
In both cases, the last step of the computation will be completely penalized.
This pushes the corresponding associated Hamiltonian's ground state energy up by $\Theta(1/T^2)$.

In case the dovetailed universal TM \emph{does} halt, there is no further forward transition~\footnote{In the 2D result, the Turing machine then entered a time wasting operation, where the head would simply idle until the clock runs out of time. This is, strictly speaking, not necessary: the history state evolution can stop at any point, while keeping the computation reversible---see e.g.~\cite{Bausch2016}. If in doubt, it is of course always possible to use the traditional way such that once the clock runs out of space, the penalty is only inflicted if the TM is not yet in a halting configuration.}.
The TM head will not feel the penalty $\op P$, and the ground state energy is that of an unfrustrated history state Hamiltonian, i.e.\ zero.
Observe that this happens at a point $T_\mathrm{halt}$ which is obviously independent from $w$.
The precise statement is that once there is enough tape such that the entire evolution of the (halting) TM can be contained, no halting penalty will be felt.
This happens once $w$ is such that $T(w) \ge T_\mathrm{halt}$.
Define this segment length to be $w_\mathrm{halt}$.

After including the Marker Hamiltonian $\op H_s^{(f)}$ in $\op H_s$, we obtain the ground state energy bounds shown in \cref{fig:pen-and-bonux}.
The dashed blue line shows an upper bound on the negative magnitude of the energy bonus $E(w)$ induced by the Marker Hamiltonian $\mu\op H_s^{(f)}$ with $f(w)=2^w$.
Note that we chose the loose bound $-\mu/T^{3+\delta}$ for visualization purposes.
By \cref{rem:elastic!}, we know that this bonus in fact satisfies $-\mu/2^{f(w)} \le E(w)$.
With \cref{eq:falloff-1}, we know that
\begin{align}
T^3(w) &< 2^{f(w)} %\nonumber\\
\Longleftrightarrow \frac{1}{T^3(w)} > \frac{1}{2^{f(w)}} \ge -E(w),\nonumber\\
\intertext{and thus clearly}
\lmin + E(w) &\ge \bar\lambda_\mathrm{min} + E(w) > 0.
\label{eq:with-bonus-still-positive}
\end{align}

Observe that $\op H_s^{(f)}$ commutes with both $\HTM(\ivar)$ and $\op P$, so the resulting ground state energy of $\op H$ for the block of segment length $w$ will simply be
\[
\lmin(\op H_s)=E(w) + \lmin(\HTM(\ivar) + \op P).
\]
The solid red line shows the lower bound achieved by subtracting the smaller attractive contribution $E(w)$ from the lower bound for $\lmin(\HTM(\ivar) + \op P)\ge\bar\lambda_\mathrm{min}$.
We again consider each case separately.

If the dovetailed UTM does not halt, we subtract $|E(w)|$ (or, equivalently, add $E(w)$, since $E(w)$ is negative) from the lower bound we proved before.
The ground state energy $\lmin+E(w)>0$ by \cref{eq:with-bonus-still-positive}.

If the UTM \emph{does} halt, on the other hand, there exists a halting time $T_\mathrm{halt}$ such that $\bar\lambda_\mathrm{min}(w)=0$ for all $w>w_\mathrm{halt}$ (see magnified area).
This immediately implies that
\[
\lmin(w)
\begin{cases}
\ge \bar\lambda_\mathrm{min}(w)+E(w)>0 & \text{for all $w<w_\mathrm{halt}$} \\
=-f(w)<0 & \text{otherwise.}
\end{cases}
\]
This proves the last claim.
\end{proof}

We observe that in the halting case, the energy is smallest when the segment length is precisely $w_\mathrm{halt}$, as $|E(w)|$ is strictly monotonically decreasing.

In light of \cref{rem:elastic!}, i.e.\ the fact that $\Hel$ breaks into signature blocks $\Hel_s$, we want to extend \cref{th:single-segment} to the case where the signature $s$ is not just a single segment, but a series of segments of varying length.
We capture this in the following lemma.

\begin{lemma}\label{lem:multiple-segments}
Let the notation be as in \cref{th:single-segment}, but take a signature $s$ with potentially multiple segment lengths $(w_i)_i$ as in \cref{th:elastic!}.
Let $\nu(w_i)$ be the energy of the ground state of a block segment of length $w_i$.
Then
\[
    \lmin(\op H_s)=\sum_i \nu(w_i).
\]
\end{lemma}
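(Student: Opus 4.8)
The plan is to show that on a fixed signature block the operator $\op H_s$ is an operator sum of mutually commuting pieces, one per segment, each unitarily equivalent to the single-segment Hamiltonian of \cref{th:single-segment} on a segment of the appropriate length; the identity $\lmin(\op H_s)=\sum_i\nu(w_i)$ then reduces to the elementary fact that the ground-state energy of a sum of operators acting on disjoint tensor factors equals the sum of the individual ground-state energies, with the joint ground state being the tensor product of the per-factor ground states.

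First I would unpack the Hilbert space of the block. Since $s$ is properly bounded and has no consecutive boundaries, \cref{th:elastic,th:elastic!}---through the Cartesian-product decomposition \eqref{eq:cartesian-dec}---identify the part of the $\Hs_\mathrm{el}$-register carrying the ground space of $\mu\Hel_s$ with a tensor product of one path-graph register per segment, on which $\mu\Hel_s=\sum_i \1\otimes\mu\Hel_{s_i}\otimes\1$, where $s_i=(1,0,\ldots,0,1)$ is the length-$w_i$ single-segment signature and each summand is supported on a single segment (the global energy shift of \cref{th:elastic!} being distributed as exactly one boundary's worth per segment, matching the single-segment bookkeeping already used in \cref{th:single-segment}). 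For the $\Hs_\mathrm{q}$-register the key observation is that the local term $\op h_\mathrm{tot}=\ketbra{\bd}^\perp\otimes\op q+\op h\otimes\1$ of \cref{lem:HTM} annihilates the $\op q$ contribution whenever it straddles a boundary marker $\ket\bd$; hence on the block no QTM term---and, one checks, no term of $\op H_\mathrm{init}$, which is itself supported within single segments where it initialises each tape---couples sites across a marker, so $\HTM(\ivar)$ restricted to the block equals $\sum_i\HTM^{(i)}(\ivar)$, the QTM history-state Hamiltonian of segment $i$ acting on its own length-$w_i$ tape. Finally $\op P=\sum_i\ketbra{h_i\bd}+\ketbra{q_?;-}$ is manifestly a sum of one- and two-local terms each supported inside a single segment, so it too splits as $\sum_i\op P_{s_i}$ on the block.

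Putting these together, $\op H_s=\sum_i\op H_s^{(i)}$ with $\op H_s^{(i)}\coloneqq \mu\Hel_{s_i}+\HTM^{(i)}(\ivar)+\op P_{s_i}$ acting on the mutually disjoint tensor factor built from the registers of segment $i$; the $\op H_s^{(i)}$ pairwise commute, and by construction each $\op H_s^{(i)}$ is a copy of the length-$w_i$ single-segment Hamiltonian $\op H_{s_i}$ of \cref{th:single-segment}, whose ground-state energy is by definition $\nu(w_i)$. Hence $\lmin(\op H_s)=\sum_i\lmin(\op H_s^{(i)})=\sum_i\nu(w_i)$. The penalised marker configurations inside the block---the $\op R'_s$-type summands of \cref{lem:elastic-3}, and states where a QTM head or the flag $q_?$ is present but not yet penalised---only raise the energy, by \cref{th:elastic!} and \cref{lem:HTM}, so they do not affect $\lmin$. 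The step I expect to be the main obstacle is precisely this complete \emph{decoupling across boundary markers}: one has to verify that every term of $\op H_s$---in particular $\op H_\mathrm{init}$ and whatever pins the $\Hs_\mathrm{q}$-register on the marker sites themselves---is either diagonal in the signature (hence a constant on the block, absorbable into one of the $\op H_s^{(i)}$) or genuinely supported within a single segment, and that the energy shift of \cref{th:elastic!} distributes one boundary per segment with no residual global constant, so that the single-segment energies $\nu(w_i)$ add up verbatim.
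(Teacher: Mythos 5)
Your proposal is correct and follows essentially the same route as the paper: the paper's (very terse) proof says that $\HTM(\ivar)$ and $\op P$ commute with $\op H_s^{(f)}$ and invokes the Cartesian graph-product decomposition of \cref{lem:elastic-3}, which is exactly the per-segment tensor-factor decomposition you unpack. The bookkeeping worry you flag at the end is real but resolves cleanly: the constant $-7/2$ in \cref{th:elastic} arises from the boundary trick ($-4$) plus the one extra unmatched boundary penalty ($+1/2$), and both of these are \emph{independent of the number of segments} $n$ (since $n$ segments have $n+1$ boundaries, each with penalty $+1/2$, against $n$ copies of the $-1/2$ shift from $\Delta'_{w_i}$); after the $+7/2$ shift of \cref{th:elastic!} there is therefore no residual additive constant, so the per-segment energies $\nu(w_i)$ add verbatim exactly as you state.
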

\begin{proof}
Both $\HTM(\ivar)$ and $\op P$ commute with $\op H_s^{(f)}$, and we use the same Cartesian graph product argument for the latter as in \cref{lem:elastic-3}.
\end{proof}

This leads us to the main technical theorem.
\begin{theorem}\label{th:TM-ham}
For any Turing machine $M$ and input $\ivar\in\field N$ to $M$, we can explicitly construct a sequence of 1D, translationally invariant, nearest-neighbour Hamiltonians $\op H_N(\ivar, M)$ on the Hilbert space $(\field C^d)^{\otimes N}$ with the property that either
\begin{enumerate}
\item $M(\ivar)$ does not halt, and $\lmin(\op H_N)\ge0$ for all $N$, or
\item $M(\ivar)$ halts, and
\[
\lmin(\op H_N) \begin{cases}
<-\lfloor N / w_\mathrm{halt}\rfloor \Omega(1/T_\mathrm{halt}^{3}) & \text{$N>w_\mathrm{halt}$} \\
\ge 0 & \text{$N\le w_\mathrm{halt}$,}
\end{cases}
\]
where $T_\mathrm{halt}$ is the time needed for $M(\ivar)$ to halt, and $w_\mathrm{halt}$ is the length of the tape accessed during the computation.
\end{enumerate}
\end{theorem}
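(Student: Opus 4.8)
The plan is to assemble $\op H_N(\ivar,M)$ from the pieces already built and then read off the two cases from its signature block structure. First I would fix, once and for all, a falloff exponent $f$ for the Marker Hamiltonian of \cref{th:elastic!} so that \eqref{eq:falloff-1} holds for every admissible segment length (say $f(w)=2^w$, realised with local dimension $d=7$ via \cref{rem:elastic!}), and invoke \cref{rem:min-and-even} so that only segments with $w_i\ge d_1$ and $w_i\equiv0\pmod{d_2}$ can carry negative energy. On top of $\mu\Hel$ (with $\mu=2^{-2|\ivar|}$) I would place the universal-QTM history state Hamiltonian $\HTM(M,\ivar)$ conditioned off the boundary sites, together with the initialising term $\op H_\mathrm{init}$ and the truncation/non-halting penalty $\op P=\sum_i(\ketbra{h_i\bd}+\ketbra{q_?;-})$, combining them exactly as in \cref{lem:HTM}; after the standard $2$-local reduction this is a translationally-invariant nearest-neighbour Hamiltonian on $(\field C^d)^{\otimes N}$ with $d$ independent of $N$, and I set $\op H_N(\ivar,M):=\mu\Hel+\HTM(M,\ivar)+\op P+\op H_\mathrm{init}$.

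Every summand leaves the boundary signature invariant, so $\op H_N=\bigoplus_s\op H_{N,s}$. Items 2 and 3 of \cref{lem:HTM} split the blocks into those with $s=0$ or $\Hel_s\ge1$ — covering every signature that is not a clean concatenation of $1$-bounded segments (unbounded ends, adjacent $\bd$'s, forbidden lengths) — for which $\op H_{N,s}\ge0$ outright; and properly bounded signatures with segments of lengths $(w_i)_i$, for which \cref{lem:multiple-segments} gives $\lmin(\op H_{N,s})=\sum_i\nu(w_i)$, where $\nu(w)$ is the single-segment ground energy of \cref{th:single-segment}. For conclusion 1, note that if $M(\ivar)$ does not halt — including, by \cref{rem:loop}, the case that it loops, which the clock penalises exactly as running out of tape — then cases 1 and 2 of \cref{th:single-segment} give $\nu(w)>0$ for every $w$: either the truncation penalty on $(q_?,\ket-)$ dominates the $\mu$-attenuated marker bonus, or the head reaches the right marker and the resulting $\Theta(1/T(w)^2)$ penalty beats the bonus by \eqref{eq:with-bonus-still-positive}. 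Hence every block is nonnegative, so $\lmin(\op H_N)\ge0$ for all $N$.

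For conclusion 2, suppose $M(\ivar)$ halts after $T_\mathrm{halt}$ steps, accessing $w_\mathrm{halt}$ tape cells, and let $\tilde w=w_\mathrm{halt}+\BigO(1)$ be the smallest admissible segment length with $\tilde w\ge w_\mathrm{halt}$. If $N\le w_\mathrm{halt}$, no signature can hold a segment long enough to contain the whole halting run, so every block is analysed exactly as in the non-halting case and $\lmin(\op H_N)\ge0$. If $N>w_\mathrm{halt}$, I would bound $\lmin(\op H_N)$ from above variationally by evaluating it on the signature $s^\ast$ that tiles the chain into $\lfloor N/\tilde w\rfloor$ segments, all of length $\tilde w$ except one absorbing the $\BigO(\tilde w)$ leftover sites (hence still $\ge w_\mathrm{halt}$). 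By \cref{lem:multiple-segments} and case 3 of \cref{th:single-segment} — where the halting TM never triggers $\op P$, leaving only the strictly negative marker bonus, of magnitude $\ge c:=\Omega(1/4^{f(w_\mathrm{halt})})$, a fixed positive quantity depending only on $(\ivar,M)$ — each such segment contributes $\nu(w_i)<-c$. Therefore $\lmin(\op H_N)\le\lmin(\op H_{N,s^\ast})=\sum_i\nu(w_i)<-\lfloor N/w_\mathrm{halt}\rfloor\,c$, which, using $2^{f(w_\mathrm{halt})}>T_\mathrm{halt}^3$ from \eqref{eq:falloff-1} to phrase the constant, is the claimed $-\lfloor N/w_\mathrm{halt}\rfloor\,\Omega(1/T_\mathrm{halt}^3)$.

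The hard part is not any single estimate — \cref{th:single-segment}, \cref{lem:multiple-segments} and \cref{lem:HTM} do the heavy lifting — but verifying that they compose cleanly: that adding $\op H_\mathrm{init}$ and conditioning $\HTM$ on $\ketbra{\bd}^\perp$ preserves the signature decomposition; that inside a block the $\mu$-rescaled Marker term still commutes with $\HTM+\op P$, so the per-segment additivity of \cref{lem:multiple-segments} survives verbatim; and that the finite-chain boundary defects of \cref{th:elastic} (missing end markers, adjacent markers) and \cref{rem:min-and-even} (forbidden segment lengths) only ever raise energies, hence cannot spoil the $\lmin\ge0$ bound in the non-halting and short-chain cases. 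The one genuinely new piece of work is the variational bound for $N>w_\mathrm{halt}$, where one must be careful to park the $\BigO(\tilde w)$ remainder sites inside a single over-long segment without losing the linear-in-$N$ count of $-c$ contributions.
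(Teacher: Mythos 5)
Your proof is correct and follows the same route as the paper: block-decompose by signature, reduce each block to the single-segment analysis of \cref{th:single-segment} via the additivity in \cref{lem:multiple-segments}, show every block is non-negative in the non-halting and short-chain cases, and take a variational signature tiled by minimal halting-length segments when $N>w_\mathrm{halt}$. One shared imprecision worth flagging: your rephrasing of the per-segment bonus $\Omega(1/4^{f(w_\mathrm{halt})})$ as $\Omega(1/T_\mathrm{halt}^3)$ via \cref{eq:falloff-1} goes the wrong way---$T_\mathrm{halt}^3<2^{f(w_\mathrm{halt})}$ gives $1/4^{f(w_\mathrm{halt})}<1/T_\mathrm{halt}^6$, not a lower bound of $\Omega(1/T_\mathrm{halt}^3)$---but the paper's own statement and proof contain the same slip (along with writing $f(w)=w^2$ and dropping the $\mu$ prefactor, both of which you correctly fix), and since all that matters downstream is that $\lmin(\op H_N)\to-\infty$ linearly in $N$, none of this affects the undecidability argument.
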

\begin{proof}
We set $\op H\coloneqq \op H_C(M,\ivar) + \Hel$ for $\op H_C(M,\ivar) = \HTM(M,\ivar)+\op P$, and with $f(w)=w^2$ as in \cref{lem:multiple-segments}, but with the full Marker Hamiltonian $\Hel$ instead of a single signature block.
We already know that $\op H$ is block diagonal, and by \cref{lem:multiple-segments} we know the spectrum of each block.
There are two cases.
\begin{enumerate}
\item $M(\ivar)$ does not halt.
By \cref{th:single-segment}, we know that the ground state energy contribution of a single segment is falling off monotonically with the segment length.
By \cref{lem:multiple-segments}, we know that the overall ground state energy is the sum of the individual segments.
The block with the lowest energy is thus the one with a single segment of length $N$, and in particular non-negative (or if we do not penalize the rightmost halting boundary then the ground state energy is zero).
\item $M(\ivar)$ halts after $T_\mathrm{halt}$ steps, having consumed $w_\mathrm{halt}$ tape.
If $N<w_\mathrm{halt}$ the same argument as above holds.
If $N>w_\mathrm{halt}$, we have space for at least $\lfloor N/w_\mathrm{halt} \rfloor$ segments of tape on which the TM terminates.
It is beneficial to have as many such segments as possible, as each of these contributes an energy $E(w_\mathrm{halt})<0$.
Ignoring the right-most segment of non-full length (which is a single constant energy penalty), the block with a signature where the shortest possible segments on which the TM can halt are left-aligned has the lowest energy $<-\Omega(1/T^3_\mathrm{halt})$.
Since there is only a single rightmost segment, but $O(N)$ bonus'ed segments, the asymptotic bound is $\lmin(\op H_N)<-\Omega(1/T_\mathrm{halt}^3)$.
\end{enumerate}
The claim follows.
\end{proof}
Note that the ground state energy of $\op H$ diverges to minus infinity in the halting case, but the ground state energy \emph{density} is bounded.

\section{Undecidability of the Spectral Gap}\label{sec:undecidability}
In order to obtain the full result, we will need to shift the energy spectrum of $\op H$ from \cref{th:TM-ham} up so that its ground state is either $\ge 1$, or diverges towards $-\infty$, add a trivial Hamiltonian with ground state energy $0$, and another Hamiltonian with continuous spectrum.
We begin by observing that an energy shift is readily achieved as follows.
\begin{lemma}\label{lem:shift-ham}
By adding at most two-local identity terms, we can shift the energy of $\op H$ from \cref{th:TM-ham} such that
\[
    \lmin(\op H) \begin{cases}
        \ge 1 & \text{if the TM does not halt,} \\
        \longrightarrow-\infty & \text{in the halting case.}
    \end{cases}
\]
\end{lemma}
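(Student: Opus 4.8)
The plan is to add a translationally-invariant combination of a single-site identity term and a nearest-neighbour identity term whose \emph{net} effect on $\op H_N$ is a constant upward shift by exactly $1$, independent of the chain length $N$. A naive single-site term $c\sum_{i=1}^N\1_i$ is useless here, since it contributes $cN$ and therefore grows with $N$; instead I would reuse the one-local/two-local counting trick already employed in \cref{rem:elastic-1} and \cref{th:elastic!}, namely that an open chain of $N$ sites carries $N$ vertices but only $N-1$ edges, so that
\[
  \sum_{i=1}^N\1_i \;-\; \sum_{i=1}^{N-1}\1_{i,i+1} \;=\; N\1-(N-1)\1 \;=\; \1
\]
holds identically as an operator on $(\field C^d)^{\otimes N}$.

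Concretely, first I would set
\[
  \op H' \;\coloneqq\; \op H \;+\; \sum_{i=1}^N\1_i \;-\; \sum_{i=1}^{N-1}\1_{i,i+1},
\]
with $\op H$ the Hamiltonian from \cref{th:TM-ham}. By the identity above, $\op H'=\op H+\1$, hence $\lmin(\op H')=\lmin(\op H)+1$ for every $N$. The terms added are identities, so they are at most two-local, commute with $\op H$, and in particular do not disturb the block-diagonal (signature) structure or any other property established in \cref{th:TM-ham}.

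Next I would simply invoke the dichotomy of \cref{th:TM-ham}. If $M(\ivar)$ does not halt, then $\lmin(\op H)\ge 0$ for all $N$, so $\lmin(\op H')\ge 1$ for all $N$. If $M(\ivar)$ halts, then for $N>w_\mathrm{halt}$ one has $\lmin(\op H)<-\lfloor N/w_\mathrm{halt}\rfloor\,\Omega(1/T_\mathrm{halt}^3)$, hence $\lmin(\op H')<1-\lfloor N/w_\mathrm{halt}\rfloor\,\Omega(1/T_\mathrm{halt}^3)$; since $w_\mathrm{halt}$ and $T_\mathrm{halt}$ are fixed finite quantities depending only on $M$ and $\ivar$ (not on $N$), the right-hand side diverges to $-\infty$ as $N\to\infty$. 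This is exactly the claimed behaviour.

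I do not anticipate a genuine obstacle in this step; the only point meriting explicit care is that the shift must be a true constant rather than an $N$-dependent quantity, and that is precisely what the edge-count deficit of the open boundary supplies. Everything else is immediate from \cref{th:TM-ham}.
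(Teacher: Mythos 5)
Your proof is correct and uses exactly the same mechanism as the paper: the paper's (one-line) proof simply invokes the Gottesman--Irani boundary trick from \cref{rem:elastic-1}, exploiting the $N$-versus-$(N-1)$ mismatch between one-local and two-local terms on an open chain, which is precisely the identity $\sum_{i=1}^N\1_i-\sum_{i=1}^{N-1}\1_{i,i+1}=\1$ that you wrote out explicitly before applying the dichotomy of \cref{th:TM-ham}.
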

\begin{proof}
Employ \textcite{gottesman2009quantum}'s boundary trick again (cf.\ \cref{rem:elastic-1}), which hinges on the fact that there is $N$ one-local but only $N-1$ two-local terms.
\end{proof}

The next step is to construct a simple Hamiltonian with a unique ground state of energy $0$, and a spectral gap of $1$.
\begin{lemma}\label{lem:trivial-ham}
There exists a one-local translationally-invariant Hamiltonian $\Htriv$ on $(\field C^2)^{\otimes N}$ which is diagonal in the computational basis, with unique zero-energy ground state $\ket{00\cdots0}$, and all other $\lambda\in\spec(\Htriv)$ satisfy $\lambda\ge1$.
\end{lemma}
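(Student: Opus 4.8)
The plan is to take the simplest possible choice and verify it directly: set the single-site term to be the rank-one projector onto $\ket1$ and add no two-local term at all, i.e.
\[
\Htriv \coloneqq \sum_{i=1}^N \ketbra{1}_i .
\]
This is manifestly one-local, translationally invariant (the same term $\ketbra1$ sits on every site), and diagonal in the computational basis.

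First I would note that on a computational basis state $\ket{x_1 x_2\cdots x_N}$, with $x_i\in\{0,1\}$, this Hamiltonian acts as multiplication by the Hamming weight $\sum_{i=1}^N x_i$. Hence $\spec(\Htriv)=\{0,1,\ldots,N\}$; the eigenvalue $0$ is attained by a basis state precisely when every $x_i=0$, so $\ket{00\cdots0}$ is the unique ground state and it has energy $0$; and every other basis state has Hamming weight a positive integer, so the corresponding eigenvalue is $\ge 1$. Since the spectrum consists exactly of these basis-state eigenvalues (the operator being diagonal), all three claimed properties follow at once, with the spectral gap above the ground state being exactly $1$.

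There is essentially no technical obstacle here; the only point worth flagging is consistency with the combined construction in \cref{sec:undecidability}. The outline mentions a trivial Hamiltonian with ground state energy $-1$, whereas the lemma produces ground state energy $0$; this mismatch is harmless because, exactly as in \cref{lem:shift-ham} and \cref{rem:elastic-1}, an overall energy shift can be implemented by adding at most two-local identity terms, so the ground state energy of $\Htriv$ can be relocated wherever the later argument requires it while preserving the unit gap and the unique product ground state. Any other diagonal, one-local, translationally invariant Hamiltonian with a unique product ground state and gap $\ge 1$ would serve equally well; the Hamming-weight choice is simply the cleanest to state and verify.
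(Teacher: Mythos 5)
Your proposal is correct and chooses exactly the same Hamiltonian as the paper, $\Htriv = \sum_{i=1}^N \ketbra{1}_i$; you merely spell out the Hamming-weight verification and the (harmless) energy-shift discrepancy with the outline, both of which the paper leaves implicit.
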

\begin{proof}
Take $\Htriv = \sum_{i=1}^N \ketbra 1_i$.
\end{proof}

Furthermore, we need a Hamiltonian with continuous spectrum in $[0,\infty)$ in the thermodynamic limit, which we call $\Hcont$.
%There are many choices for $\Hcont$ in the literature, e.g.\ the 1D critical XY model~\cite{Lieb1961a}.
With this, we can prove the following theorem.
\newcommand{\Htot}{\op H_\mathrm{tot}}
\begin{theorem}\label{th:undecidability-1-gap}
  Take $\op H$ from \cref{th:TM-ham} with shifted energy as in \cref{lem:shift-ham}, and let $\Hs_C$ denote the Hilbert space on which it acts.
  Take $\Hcont$ as defined and denote the Hilbert space on which it acts $\Hs_\mathrm{dense}$.
  Finally, let $\Htriv$ be the trivial ground-state-energy $0$ Hamiltonian from \cref{lem:trivial-ham} with Hilbert space $\Hs_\mathrm{trivial}$.
  Then we can construct a Hamiltonian $\Htot=\Htot(\op H,\Hcont,\Htriv)$ on $\Hs\coloneqq (\Hs_C\otimes\Hs_\mathrm{dense})\oplus\Hs_\mathrm{trivial}$ as in \cref{sec:hist-state-intro} such that
\[
    \spec(\Htot) = \{0\} \cup (\spec(\op H) + \spec(\Hcont)) \cup G,
\]
where $G\subset[1,\infty)$.
\end{theorem}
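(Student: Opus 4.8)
The plan is to build $\Htot$ as a direct sum of two pieces living on the two summands of $\Hs = (\Hs_C\otimes\Hs_\mathrm{dense})\oplus\Hs_\mathrm{trivial}$, together with a ``guard'' term $\Hguard$ that energetically forbids any state with support spread across both summands. Concretely, I would set
\[
  \Htot \coloneqq \bigl((\op H\otimes\1 + \1\otimes\Hcont)\oplus 0\bigr) + \bigl(0\oplus\Htriv\bigr) + \Hguard,
\]
where the first parenthesis acts on $\Hs_C\otimes\Hs_\mathrm{dense}$, the second puts the trivial Hamiltonian on $\Hs_\mathrm{trivial}$, and $\Hguard$ is a sum of two-local Ising-type projectors penalizing any pair of neighbouring sites in which one is in the ``$C\otimes\mathrm{dense}$'' sector and the other in the ``$\mathrm{trivial}$'' sector. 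The point of $\Hguard$ is that, although $\Hs$ is defined as a direct sum at the level of the whole chain, a translationally-invariant local construction realizes it as a subspace of $(\Hs_\mathrm{loc})^{\otimes N}$ for a suitable local Hilbert space $\Hs_\mathrm{loc}$, and without the guard the low-energy spectrum could contain ``domain-wall'' states mixing the two sectors. With $\Hguard$ tuned so that any such mixed configuration costs energy $\ge 1$ (and the purely-trivial and purely-$C\otimes\mathrm{dense}$ configurations are in its kernel), the low-energy part of $\spec(\Htot)$ decomposes cleanly.

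The key steps, in order, are: (i) Verify that $\Hguard$ is zero on the two ``pure'' subspaces and bounded below by $1$ on their orthogonal complement; this is the standard Ising-domain-wall argument — assign energy penalty to each edge whose two endpoints lie in different sectors, and note any mixed state has at least one such edge. (ii) On the pure-trivial subspace, $\Htot$ restricts to $\Htriv$, whose spectrum by \cref{lem:trivial-ham} is $\{0\}\cup[1,\infty)$ with unique ground state $\ket{00\cdots0}$; collect the $\ge 1$ part into $G$. (iii) On the pure-$C\otimes\mathrm{dense}$ subspace, $\Htot$ restricts to $\op H\otimes\1+\1\otimes\Hcont$, whose spectrum is exactly $\spec(\op H)+\spec(\Hcont)$ since the two factors act on distinct tensor components and hence commute. (iv) Any remaining eigenvalues — those of eigenvectors with nontrivial support in the mixed subspace — are $\ge 1$ by (i) (after checking that the block structure of $\Htot$ with respect to the sector decomposition is respected, i.e.\ $\Htot$ commutes with the projector onto mixed configurations, which holds because all terms are either sector-diagonal or, in the case of $\Hguard$, block-diagonal with respect to the number of domain walls — or more simply, one uses a variational/min-max bound to place these eigenvalues in $[1,\infty)$); fold them into $G$. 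Assembling (ii)–(iv) gives $\spec(\Htot) = \{0\}\cup(\spec(\op H)+\spec(\Hcont))\cup G$ with $G\subset[1,\infty)$.

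The main obstacle I anticipate is making step (iv) rigorous: $\Hguard$ does not literally commute with the projector onto the ``pure'' subspaces, because a two-local term straddling a sector boundary can have off-diagonal action within the mixed sector, so one cannot naively block-diagonalize. The clean fix is to observe that $\Hguard$ (being diagonal in the sector-label basis at each site) does commute with the total operator counting how many sites lie in the trivial sector, hence with the projectors $\Pi_0$ (all sites non-trivial), $\Pi_N$ (all sites trivial), and $\Pi_\mathrm{mixed} = \1 - \Pi_0 - \Pi_N$; and the remaining terms $\op H\otimes\1+\1\otimes\Hcont$ and $\Htriv$ act purely within $\Pi_0$ and $\Pi_N$ respectively, so they too commute with this triple of projectors. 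Therefore $\Htot = \Htot\Pi_0 \oplus \Htot\Pi_N \oplus \Htot\Pi_\mathrm{mixed}$ genuinely block-decomposes, the first block contributes $\spec(\op H)+\spec(\Hcont)$, the second contributes $\{0\}\cup([1,\infty)\cap\spec\Htriv)$, and the third is bounded below by $\lmin(\Hguard\!\restriction_{\Pi_\mathrm{mixed}})\ge 1$; everything outside the first block and the isolated $0$ goes into $G\subset[1,\infty)$. The remaining work is purely bookkeeping: choosing the penalty strengths in $\Hguard$, checking the norm bounds claimed for the local terms survive, and confirming that $0$ is genuinely attained (via the state $\ket{00\cdots 0}$ in the trivial sector).
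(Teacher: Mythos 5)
Your construction and argument coincide with the paper's: the paper likewise builds $\Htot$ by adding an Ising-type guard Hamiltonian $\Hguard=\sum_i(\1_{1,2}^{(i)}\otimes\1_3^{(i+1)}+\1_3^{(i)}\otimes\1_{1,2}^{(i+1)})$ penalizing neighbouring sites lying in different sectors, and then just states that "the claim follows." Your more explicit block-diagonalization via the three sector-label projectors $\Pi_0,\Pi_N,\Pi_\mathrm{mixed}$ (all of which commute with $\Htot$ because every term is sector-diagonal at each site) is exactly the reasoning the paper leaves implicit, so this is the same approach, just spelled out in more detail.
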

\begin{proof}
We use a trick from~\cite{Bausch2015}.
Define
\[
    \Hguard\coloneqq \sum_{i=1}^N(\1_{1,2}^{(i)}\otimes\1_3^{(i+1)} + \1_3^{(i)}\otimes\1_{1,2}^{(i+1)}).
\]
It is clear that any state with support on both $\Hs_C\otimes\Hs_\mathrm{dense}$ and $\Hs_\mathrm{trivial}$ will incur an energy penalty from $\Hguard$.
Define further
\[
    \Htot = \op H\otimes\1_2 \oplus 0_3 + \1_1\otimes\Hcont \oplus 0_3 + 0_{1,2} \oplus \Htriv + \Hguard.
\]
Then the claim follows.
\end{proof}

Since the halting problem is undecidable in general, we obtain our main result \cref{th:main}, which we re-state in the following way.
\begin{theorem}[Undecidability of the Spectral Gap in 1D]
	Let $\beta\in(0,1]$ be arbitrary.
    Whether the Hamiltonian in \cref{th:undecidability-1-gap} is gapped with a spectral gap of $1$, or is gapless, is undecidable, even if we multiply $\op H$ and $\Hcont$ by $\beta$.
$\op H_\mathrm{tot}$ can then be assumed to comprise local terms as laid out in \cref{th:main}.
\end{theorem}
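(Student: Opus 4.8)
The plan is to read off the spectrum of $\op H_\mathrm{tot}$ from \cref{th:TM-ham,th:undecidability-1-gap} in the two cases ``the UTM halts on $\ivar$'' and ``the UTM does not halt on $\ivar$'', observe that these correspond to gapless and gapped respectively, and then conclude undecidability by reduction from the Halting problem. First I would fix the construction: scale the Hamiltonian of \cref{th:TM-ham} by $\beta\in(0,1]$ --- which preserves $\lmin\ge0$ in the non-halting case and $\lmin\to-\infty$ in the halting case, since $\beta>0$ --- then apply the shift of \cref{lem:shift-ham} so that $\lmin(\op H)\ge1$ resp.\ $\lmin(\op H)\to-\infty$, and assemble $\op H_\mathrm{tot}$ with $\beta\Hcont$, $\Htriv$ and $\Hguard$ as in \cref{th:undecidability-1-gap}, giving $\spec(\op H_\mathrm{tot})=\{0\}\cup(\spec(\op H)+\spec(\beta\Hcont))\cup G$ with $G\subset[1,\infty)$. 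If the UTM does not halt on $\ivar$ then $\spec(\op H)\subset[1,\infty)$ and $\beta\Hcont\ge0$ force $\spec(\op H)+\spec(\beta\Hcont)\subset[1,\infty)$, so $\spec(\op H_\mathrm{tot})\subset\{0\}\cup[1,\infty)$; here $0$ is attained only by $\ket{0\cdots0}\in\Hs_\mathrm{trivial}$ (uniqueness from \cref{lem:trivial-ham}, and $\Hguard$ forbidding mixed-sector support) while $1\in\spec(\op H_\mathrm{tot})$ by flipping a single trivial-sector spin. Hence for every $N$ the ground state is non-degenerate with spectral gap exactly $1$, so $\{\op H_\mathrm{tot}\}$ is gapped in the strong sense used throughout the paper, independently of $\beta$.

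In the halting case \cref{th:TM-ham} gives $\lmin(\op H_N)\to-\infty$, so for all large $N$ the ground-state energy $E_0(N)$ of $\op H_\mathrm{tot}$ is attained in the sector $\Hs_C\otimes\Hs_\mathrm{dense}$ and equals $\lmin(\op H_N)$, with the trivial and mixed sectors lying at energy $\ge0$. Since $\beta\Hcont$ has non-negative spectrum becoming dense in a fixed interval $[0,\beta c_0]$ with $c_0>0$ as $N\to\infty$, the set $\spec(\op H_N)+\spec(\beta\Hcont_N)$ becomes dense in $[E_0(N),E_0(N)+\beta c_0]$; this window has fixed positive width and sits directly above the ground-state energy, so $\{\op H_\mathrm{tot}\}$ is gapless in the strong sense. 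The two cases are mutually exclusive and exhaustive, so ``$\{\op H_\mathrm{tot}(\ivar)\}$ is gapped with gap $1$'' is the exact complement of ``$\{\op H_\mathrm{tot}(\ivar)\}$ is gapless'', and the former holds iff the fixed UTM does not halt on $\ivar$.

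The reduction then closes the argument: $\ivar\mapsto(\op h^{(1)}(\ivar),\op h^{(2)}(\ivar))$ is computable, its $\ivar$-dependence entering only through the finitely many scalars $2^{-2|\ivar|}$, $\ee^{\pm\ii\pi2^{-2|\ivar|}}$ and $\ee^{\pm\ii\pi\phi(\ivar)}$, so any algorithm deciding gapped-versus-gapless for this family would, composed with this reduction, decide the Halting problem --- impossible by Turing, and by Gödel no consistent recursive axiomatisation proves all true instances. The explicit form \cref{eq:mainthm-single-site-interaction,eq:mainthm-two-sites-interaction} follows by sorting the fully-assembled $\op H_\mathrm{tot}$ into its diagonal integer part --- the couplings of $\Htriv$, $\Hguard$ and the shift, collected into $\op a,\op b$; its $\field Q[\sqrt2]$-valued Hermitian perturbations from the Hadamard/QFT gates and the various penalties, together with the $2^{-2|\ivar|}$-attenuated Marker terms, collected into $\op a',\op a'',\op b',\op b''$; and the rational forward-transition blocks of the controlled-phase and propagation rules, into which the phases $\ee^{\pm\ii\pi\phi(\ivar)}$ and $\ee^{\pm\ii\pi2^{-2|\ivar|}}$ are absorbed, collected into $\op b''',\op b''''$ --- after which $\beta$ is factored out and $\|\op h^{(1)}(\ivar)\|\le2$, $\|\op h^{(2)}(\ivar)\|\le1$ follow by bounding the (fixed) matrices.

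I expect the spectral dichotomy itself to be routine given \cref{th:TM-ham,th:undecidability-1-gap}; genuine care is needed in two places. The harder conceptual point is the gapless case: one must use that $\Hcont$'s spectrum is dense in an $N$-\emph{independent} interval, so that adding the low-energy spectrum of $\op H_N$ --- which by the Marker analysis is finitely structured above an $N$-dependent but finite ground-state energy --- still leaves the combined spectrum dense in a fixed-width window above $E_0(N)$ for all large $N$, as the strong definition of gapless requires. The most tedious point is the last bookkeeping step: verifying that the $\ivar$-dependence of every interaction term really is confined to those scalar prefactors and that all matrix entries lie in $\field Q[\sqrt2]$, so that each interaction admits a genuinely finite description.
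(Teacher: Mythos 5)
Your proposal is correct and follows essentially the same approach as the paper: read off the spectrum from \cref{th:TM-ham,th:undecidability-1-gap}, observe the dichotomy, and reduce from the Halting problem, noting that the construction tolerates a constant prefactor $\beta$ and that the explicit form of \cref{th:main} follows by bookkeeping. The paper's own proof is considerably terser (two sentences); you spell out the details it leaves implicit---in particular the careful ordering ``scale by $\beta$ first, then shift so $\lmin\ge1$'', which is the right way to make the paper's claim that ``the properties remain true independent of $\beta$'' precise, and the observation that gaplessness relies on $\Hcont$'s spectrum becoming dense in an $N$-independent interval.
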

\begin{proof}
We note that the properties required from $\op H$ and $\Hcont$ in \cref{th:undecidability-1-gap} remain true, independent of any constant prefactor $\beta$; i.e.\ the spectral gap for
\begin{align*}
    \Htot =&\ \beta(\op H\otimes\1_2 \oplus 0_3 + \1_1\otimes\Hcont \oplus 0_3) \\
    +&\ 0_{1,2} \oplus \Htriv + \Hguard.
\end{align*}
remains undecidable, for all $\beta>0$.

In addition, this means we can assume wlog that the local terms of $\op H$ and $\Hcont$ have norm $\|\cdot\|\le1$ for $\beta \le 1$. The estimates of the norms in \cref{th:main} then stem from computing the norms of the terms in $\Htriv$ and $\Hguard$.
\end{proof}

\section{Extensions of the result}\label{sec:extensions}
\subsection{\label{sec:periodic}Periodic Boundary Conditions}
\Cref{th:main} can, in a limited fashion, be extended to periodic boundary conditions, which we summarize in the following lemma and theorem.
\begin{lemma}\label{th:main-periodic}
\Cref{th:main} holds, even on 1D spin chains with periodic boundary conditions, and under the assumption that the spin chain instances all have length coprime to $P$, at the cost of a local dimension that grows with $P$.
\end{lemma}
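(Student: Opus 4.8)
The plan is to localise exactly which feature of the open-chain construction fails on a ring, and to replace it by a translation-invariant local gadget, paying for the gadget with an $O(P)$ factor in the local dimension. The only ingredient that genuinely needs open boundary conditions is the Gottesman--Irani edge-counting trick (used in \cref{rem:elastic-1,lem:shift-ham}): on a chain of $N$ sites there are $N$ one-local but only $N-1$ two-local interaction slots, and this mismatch both (a) forces a boundary marker $\bd$ at each end and (b) supplies the constant $+1$ energy offset that makes $\lmin(\op H)\ge 1$ in the non-halting case instead of merely $\lmin(\op H)\to 0^{+}$ --- the latter, once tensored with $\Hcont$, would give a dense spectrum down to $0$ and destroy the gap. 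On a ring the numbers of one- and two-local slots coincide, so this trick is vacuous, and with no $\bd$ forced anywhere a $\bd$-free chain would sit at energy $\approx 0$. Everything else in \cref{sec:marker,sec:qpe,sec:combining} is already segment-local and therefore oblivious to the global topology.

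First I would adjoin an independent ``background counter'' register $\Hs_{\mathrm{ctr}}\cong\field C^{P}$ at each site (multiplying the local dimension by $P$) and add a classical (diagonal, integer-entried), translation-invariant two-local term $\op H_{\mathrm{ctr}}$ that puts a $+1$ penalty on every edge whose two counter values do not increment by one modulo $P$. On a ring of length $N$ with $\gcd(N,P)=1$ (so in particular $P\nmid N$), a globally consistent mod-$P$ counter is impossible, since it would force $N\equiv 0\pmod P$; hence $\lmin(\op H_{\mathrm{ctr}})=1$ exactly, attained by configurations with a single ``seam''. Since this register is completely decoupled from the marker/QTM register, the full ring Hamiltonian splits as $\op H_{\mathrm{ring}}=\op H_{\mathrm{ctr}}\otimes\1+\1\otimes\op H$, so $\lmin(\op H_{\mathrm{ring}})=1+\lmin(\op H)$: the counter reproduces precisely the constant $+1$ shift that \cref{lem:shift-ham} obtained for free on the open chain.

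Next I would re-run the spectral analysis of \cref{sec:marker,sec:combining} for the marker/QTM register on a ring; it goes through essentially verbatim with two bookkeeping changes. (i) All the energy-offset terms (the $\pm 7/2$ of \cref{th:elastic!}, the $-4$ bonuses and $+2$ penalties of \cref{rem:elastic-1}, and the shift of \cref{lem:shift-ham}) now involve equal numbers of one- and two-local pieces and hence contribute zero net; one checks that the per-segment ground-state offset then comes out as $0$ rather than $-7/2$, so each $1$-bounded segment of length $w_i$ still contributes $\nu(w_i)\in[-2^{-f(w_i)},-4^{-f(w_i)}]$ as in \cref{th:elastic!}, and \cref{th:single-segment,lem:multiple-segments} apply unchanged because they are local to a single segment. (ii) On a ring every maximal run of $0$'s between two consecutive $\bd$'s is automatically bounded on both sides, so there is no ``un-terminated block''; the only signature with no segment structure is the $\bd$-free one, whose marker/QTM energy is $\ge 0$. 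Combining, $\lmin(\op H)\ge 0$ in the non-halting case (each segment contributes $\ge 0$ by \cref{eq:with-bonus-still-positive}, and the $\bd$-free sector contributes $\ge 0$), while in the halting case tiling the ring with $\lfloor N/(w_{\mathrm{halt}}+1)\rfloor$ segments of length $w_{\mathrm{halt}}$ (the remainder being a single constant-energy segment, with no divisibility obstruction since the tiling period is unrelated to $P$) drives $\lmin(\op H)<-\lfloor N/w_{\mathrm{halt}}\rfloor\,\Omega(1/T_{\mathrm{halt}}^{3})\to-\infty$, which is exactly \cref{th:TM-ham}. Feeding $\op H_{\mathrm{ring}}$ --- already carrying its built-in $+1$ --- into $\Htot$ together with $\Htriv$, $\Hguard$, and a periodic-boundary version of $\Hcont$ (which still has asymptotically dense spectrum in $[0,\infty)$, and is in fact easier to arrange under periodic boundary conditions via Bloch decomposition), exactly as in \cref{th:undecidability-1-gap}, gives $\spec(\Htot)=\{0\}\cup(\spec(\op H_{\mathrm{ring}})+\spec(\Hcont))\cup G$ with $G\subset[1,\infty)$: a spectral gap $1$ in the non-halting case, gapless in the halting case. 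Because the added counter terms are diagonal with integer entries, they merge into $\op a$ and $\op b$, so the explicit interaction form of \cref{th:main} survives with $d\mapsto dP$.

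The main obstacle is carrying out (i)--(ii) rigorously: checking that on a ring no parasitic configuration slips below the intended ground-state energy --- e.g.\ a configuration exploiting the cyclic topology, a double boundary $\bd\bd$ that is accidentally subsidised once the $\op P'$ terms are re-accounted with two interior edges, or the $\bd$-free ``cyclic QTM'' sector interacting badly with the halting penalty $\op P$ --- and that the recomputed energy offset genuinely leaves every good segment's contribution non-negative in the non-halting case and strictly negative, uniformly in $w$, in the halting case. The decoupled-counter idea makes restoring the missing $+1$ painless, but it moves the real work to re-verifying every inequality of \cref{th:elastic,th:elastic!,th:single-segment,th:TM-ham} with the ring's symmetric vertex/edge count and no forced end-markers.
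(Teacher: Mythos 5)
Your proposal follows essentially the same route as the paper: both identify the Gottesman--Irani edge-counting trick as the only genuinely open-boundary ingredient, both observe that on a ring any boundary marker automatically makes every segment properly bounded (so $\op P'$ can simply be dropped), and both restore the constant $+1$ energy offset by adjoining a mod-$P$ counter/tiling ancilla of dimension $P$ whose frustration on a ring of length coprime to $P$ supplies exactly one unit of energy. The remaining bookkeeping you flag (re-deriving the per-segment offsets with symmetric vertex/edge counts, handling the $\bd$-free sector) matches what the paper does at the same level of detail.
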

\begin{proof}
Take the Hamiltonian from \Cref{th:main}.
The only difference to the open boundary conditions case is that there is no mismatch between the number of 1- and 2-local terms, so we will have to modify those parts of the proof carefully.

We first note that \cref{rem:elastic-1} relies on this boundary trick.
In the periodic case, however, we cannot use it.
The reason for \cref{rem:elastic-1} was to enforce all segments to have right-boundaries---otherwise a segment which is half-unbounded on the right would pick up the bonus from the marker Hamiltonian, but no penalty due to the TM running out of tape.
This problem never occurs on a ring: if there is at least one marker present, it is automatically guaranteed that each segment is properly bounded.
Therefore, if we drop the term $\op P'$, \cref{lem:elastic-2} goes through, but such that the resulting Hamiltonian has a ground state energy of $0$, not $-4$.

The next step which needs amendment is in \cref{th:elastic}, where we note that there is no leftover penalty of $1/2$ from the leftmost boundary marker---bonus and penalty terms from \cref{lem:elastic-3} precisely cancel.
To this end, there is no energy shift necessary.

The last issue is with \cref{lem:trivial-ham}:
while one can straightforwardly create a Hamiltonian with constant negative ground state energy when there are open boundary conditions, this is not the case with periodic systems.
To circumvent this, we assume we have a trivial Hamiltonian $\Htriv$ with unique classical ground state with energy 0 and first excited state 1.
We then shift \emph{everything else} up by a constant.
Under the stated assumption that the spin loop has a length coprime to $P$, the positive energy shift can be achieved by adding an ancilliary Hilbert space of dimension $P$, and adding local projectors that enforce a tiling \'a la $1, 2, 3, \ldots, P$.
Since this tiling has to be broken at least at one site on the ring, there is a constant energy shift.

The overall Hamiltonian then reads, as before,
\[
\op H_\text{tot}=\op H'\otimes\1_2\oplus 0_3 + \1_1\otimes\Hcont\oplus 0_3 + 0_{1,2}\oplus\Htriv + \Hguard.
\]
where $\op H'$ equals $\op H$ from \cref{th:main}, with the $P$-periodic tiling enforced.
In the non-halting case, $\op H_\text{tot}$ will be gapped with $\Delta\ge1$, and unique ground state.
In the halting case, $\op H'$ will have an energy that diverges to $-\infty$ (despite the constant energy shift inflicted by the $P$-periodic tiling), and therefore pulls the dense spectrum of $\Hcont$ with it.
The claim of the theorem follows.
\end{proof}

\subsection{\label{sec:transverse-only}Purely Transverse Field $\ivar$ Dependence}
Thus far, the terms in \cref{th:main} explicitly-dependent on the phase $\phi$ are two-local.
More specifically, there are the one-local terms $\op a'$ with a coefficient of $\beta 2^{-2|\ivar|}$, as well as the terms $\op b'''$ with prefactors $\exp(\pm\ii \pi \phi(\ivar))$ and $\op b''''$ with prefactors $\exp(\pm \ii \pi 2^{-2|\ivar|})$, respectively.

We can strengthen our findings by making the $\ivar$-dependent terms all one-local.
This is a straightforward observation, and we will leave the details to the reader.
\begin{remark}
There exists a variant of the QPE QTM such that the corresponding Hamiltonian $\HTM(\ivar)$ has only one-local terms that depend on $\ivar$.
\end{remark}
\begin{proof}
The two-local terms dependent on $\ivar$ stem from two steps of the phase estimation algorithm:
\begin{enumerate}
\item The controlled-phase gates with powers of the gate $\op U_\phi$, and
\item the inverse QFT with powers of the controlled rotation $\op U_\alpha$.
\end{enumerate}
Naturally, any modification to the QPE QTM will directly translate to the corresponding history state Hamiltonian $\HTM$; in particular, if we manage to modify the algorithm to make the gates that depend on $\ivar$ one-local, the resulting Hamiltonian can be rendered one-local as well.
To this end, we first note the circuit identity
\[
\mbox{
    \Qcircuit @C=2em @!R=0em  {
        & \ctrl{1} & \qw \\
        & \gate{\op U} & \qw
    }}
    \hspace{2.1em}
    \raisebox{-1.45em}{=}
    \hspace{2em}
\mbox{
    \Qcircuit @C=2em @!R=0em {
        & \gate{\op V} & \ctrl{1} & \qw & \ctrl{1} & \qw \\
        & \gate{\op V} & \targ & \gate{\op V^\dagger} & \targ & \qw
    }
  }
\]
where $\op V = \sqrt{\op U}$, as e.g.\ explained in \cite[fig.\ 4.6]{Nielsen2010}.
Furthermore, we note that a generic translation of a circuit gate $\op V$ to a Hamiltonian---say at time step $t$---results in a local term \`a la
\[
    \op h_{\op V,t} = \sum_i \big( \ket{t}\ket{i} - \ket{t+1}\op V\ket i \big)\big( \bra{t}\bra{i} - \bra{t+1}\bra i\op V^\dagger \big).
\]
The locality of $\op h_{\op V,t}$ thus crucially depends on how the clock is implemented, and there exists a long history of development rendering those transitions two-local \cite{Bausch2016,gottesman2009quantum}.
Yet for our purposes we would like said gate to be implemented exactly, and such that if $\op V$ depends on $\ivar$, the overall term does not become two-local.
This can be achieved by ensuring that the clock transition $\ket t \longmapsto \ket{t+1}$ is a geometrically one-local term on the physical spins, during which $\op V$ is applied to the quantum register contained within the very same spin.
It is clear that this can be done in a translationally-invariant fashion within the context of Feynman's standard circuit-to-Hamiltonian construction, at the cost of increasing the local dimension slightly.
\end{proof}

\section{Discussion}

In spite of indications that 1D spin chains are simpler systems than higher dimensional lattice models, we have shown that the spectral gap problem is undecidable even in dimension one, settling one of the big open questions left in~\cite{Cubitt2015}. At the same time, the construction we present has some distinguishing features from the 2D construction.

In the 2D case, the ground state behaves as a highly non-classical model, showing all features of criticality, for any system size where the Universal Turing machine embedded in the model does not halt. If the machine eventually halts, starting from the corresponding system size the ground state will abruptly transition to a classical, product state. The construction we have presented shows the \emph{opposite} property: the ground state is a product classical state of the trivial Hamiltonian $\sum_{i=1}^N \ketbra 1_i$ (i.e.\ $\ket{00\cdots0}$), unless the machine halts, in which case the low-energy spectrum of the Hamiltonian suddenly begins to converge to a dense set.

While both the 1D and 2D cases can be seen as an example of a
size-driven phase transition~\cite{Bausch2015}, in the 1D construction we transition from trivial to gapless, instead of vice-versa.
In 1D, there are algorithms with \emph{provably} polynomial running time in the system size~\cite{Landau2013}.
Nonetheless, our results prove that any numerical study of the ground-state properties will \emph{not} reveal any of the phenomena one would expect of a gapless system. For both gapless and gapped cases, the numerics will instead find a classical ground state with constant gap above it, all the way up to some threshold chain length, which is uncomputable in general (determined by the tape length required for the universal Turing machine to halt, if it does indeed halt).

Therefore, not only is there no algorithm that can correctly predict whether a 1D Hamiltonian is gapped or not, but also the known efficient algorithms for computing ground state properties will fail to predict the correct thermodynamic properties of the state---even properties as elementary as the decay of correlations.

Of course the polynomial runtime of ground state approximation algorithms hinges on the promise that the one-dimensional system under study is asymptotically gapped, which is why we do not expect them to work e.g.\ for Hamiltonians with a QMA hard ground state problem.
In fact, because of this, we know that all Hamiltonians of one-dimensional spin chains with a QMA hard ground state problem have to be gapless.
Consequently, our 1D result implies that the premise itself on which all the efficient algorithms rely turns out to be undecidable.

Our findings extend to periodic boundary conditions, albeit in a limited fashion, for a number of spins promised to be coprime to some number $P$ (Section \ref{sec:periodic}).
This comes at the cost of a local dimension that grows linearly with $P$.
The general periodic case with fixed local dimension remains open.
We further showed that the same result holds for the case where the $\ivar$-dependence is only on the one-local transverse field, and all two-local terms are fixed (Section \ref{sec:transverse-only}).
As in 2D, the reduction also demonstrates that the ground state energy density of 1D spin chains is, in general, uncomputable.

% Discussion on genericality and disorder

An important question to ask is whether undecidability of the spectral gap is a generic feature, e.g.\ in the measure-theoretic sense over some underlying distribution, and whether the construction obeys some form of stability with respect to perturbations.
The strongest-known stability proofs for general local perturbations only apply to certain types of frustration-free Hamiltonian\cite{BravyiHastingsMichalakis,Michalakis,Yarotsky}. Little is known about stability of the spectral gap to arbitrary local perturbations even for much simpler and far better--studied models than ours, such as the 1d transverse Ising model.

% We present a more extensive discussion in \cref{sec:stability}, but summarize the main points in the following.
We can say a little more in our specific case.
First, note that generic \emph{disordered} local Hamiltonians---i.e.\ those where the local interaction terms are chosen uniformly at random, in particular not translationally invariant---are known to be gapless in 1D \cite{Movassagh2017}. This no longer holds true in the case of frustration-free translation invariant interactions \cite{Lemm2019}, while the question for generic translational invariant models is still open.
Whether or not a random instance sampled from a collection of local terms is generically gapped or gapless thus depends on the choice of the underlying distribution. If, for instance, the local terms are those given in \cref{th:main} (i.e.\ such that the phase gate for $\ivar$ and the encoded binary expansion length $|\ivar|$ match), and our choice of random distribution is over the encoded inputs $\ivar$---sampled with regards to some powerlaw distribution, say---then the probability that a random sample thereof has a decidable spectral gap depends on the universal Turing machine used within the construction.
Yet in general even this probability itself can be uncomputable (cf.\ Chaitin's constants \cite{Barmpalias2012}).

Similarly, the type of perturbation we allow determines whether we expect any type of robustness properties to hold.
For instance, varying the classical couplings in \cref{th:main} (e.g.\ $\op a$ perturbed by a term $\epsilon\bar{\op a}$ for small $\epsilon$) leaves the construction intact: any such term encoding a penalty or bonus diagonal in the computational basis will only change the corresponding energies by an amount $\propto\epsilon$.
But bonus and penalty terms are of order one: for sufficiently-small $\epsilon$, the limiting behaviour of $\op H_N$ remains unaltered.
In contrast, the Feynman-Kitaev Hamiltonian terms are not robust to perturbations \cite{Bausch2016a}---any such perturbation tends to produce a localized ground state which we expect to break the intended behaviour of our construction.

Perturbing the parameter $\ivar$ appearing in \cref{eq:mainthm-two-sites-interaction}  will simply change the instance simulated by the UTM; yet for a generic such perturbation, the binary expansion $\phi(\ivar)$ is of course infinitely long. Since our construction cannot provide all those gates with bounded-norm local terms, this type of perturbation cannot be analysed within the scope of Hamiltonians we construct.
On the other hand, perturbing the phase term $\phi(\ivar)$ such that its binary expansion remains bounded simply changes the encoded Turing machine input. This could, of course, change the behaviour between halting and non-halting.
It is therefore intrinsic to this construction---and expected for any undecidable property of the Hamiltonian---that no form of stability should hold in the encoded phase.

% comment on entanglement

An important point to emphasize is that, in the halting case, the critical behaviour exhibited by $\op H_N$ depends on the behaviour of the dense spectrum Hamiltonian $\op H_{\mathrm{dense}}$.
As proven in \cref{th:TM-ham}, the ground state of $\op H_C$ in the halting case is a product of segments of length just long enough for the encoded Turing machine to halt. While each segment can \emph{individually} be uncomputably large (e.g.\ if the instance was a Busy Beaver) with a correspondingly uncomputable amount of entanglement entropy, the \emph{overall} ground state remains product across these individual segments. This implies that even in the gapless phase, the entanglement entropy of the ground state of $\op H_C$ is independent of the system size $N$, and only depends on the parameter $\ivar$.

Therefore, if the ground state of $\op H_{\mathrm{dense}}$  has large entanglement entropy, so will have the ground state of $\op H_N$, and thus detecting violation of entanglement entropy area laws is also undecidable. On the other hand there are instances of 1D Hamiltonians with dense spectrum whose ground states do not have large entanglement entropy \cite{Wolf2006,FernndezGonzlez2014}. Choosing such Hamiltonians in the construction will give a family of Hamiltonians $\op H_N(M,\ivar)$ that always obeys an entanglement area law. Thus criticality is not essential to undecidability of the spectral gap; undecidability is possible even in cases where critical behaviour is guaranteed not to occur.

We conclude by mentioning an open question which is still to be addressed.
As in the case of 2D systems, the model we present is extremely artificial, with a very large local dimension,
which we  did not try to optimize.
It is an interesting problem whether it is possible to find more natural models exhibiting undecidable properties, or whether there is a local dimension threshold below which quantum systems necessarily behave in a predictable way~\cite{Bravyi2015}. I.e.\ does it hold that below some threshold on the local dimension the spectral gap problem becomes decidable?
While size-driven phase transitions can happen in 2D with very small local dimension~\cite{Bausch2015}, these low-dimensional constructions are decidable.
Determining if this threshold exists and if and how it depends on the lattice dimension remains a very interesting open question. The only known result in the other direction, proving decidability for frustration-free, nearest-neighbour qubit chains \cite{Bravyi2015}, is also specific to 1D. Together with our 1D undecidability result, this gives strong evidence that the dimension threshold has a non-trivial answer.

\begin{acknowledgements}
J.\,B.\ acknowledges support from the German National Academic Foundation, the EPSRC (grant 1600123), and the Draper's Research Fellowship at Pembroke College.
T.\,S.\,C.\ is supported by the Royal Society.
A.\,L.\ acknowledges support from the European Research Council (ERC Grant Agreement no. 337603) and VILLUM FONDEN via the QMATH Centre of Excellence (Grant no. 10059), the  Walter  Burke  Institute  for Theoretical Physics in the form of the Sherman Fairchild Fellowship as well as support from the Institute for Quantum  Information  and  Matter  (IQIM),  an  NSF  Physics Frontiers Center (NFS Grant PHY-1733907).
D.\,P.\,G.\ acknowledges financial support from Spanish MINECO (grants MTM2014-54240-P, MTM2017-88385-P and Severo Ochoa project SEV-2015-556), Comunidad de Madrid (grant QUITEMAD+CM, ref. S2013/ICE-2801) and the European Research Council (ERC) under the European Union's Horizon 2020 research and innovation programme (grant agreement No 648913).
\end{acknowledgements}

\bibliography{bibliography}

%merlin.mbs apsrev4-1.bst 2010-07-25 4.21a (PWD, AO, DPC) hacked
%Control: key (0)
%Control: author (8) initials jnrlst
%Control: editor formatted (1) identically to author
%Control: production of article title (-1) disabled
%Control: page (0) single
%Control: year (1) truncated
%Control: production of eprint (0) enabled
\begin{thebibliography}{75}%
\makeatletter
\providecommand \@ifxundefined [1]{%
 \@ifx{#1\undefined}
}%
\providecommand \@ifnum [1]{%
 \ifnum #1\expandafter \@firstoftwo
 \else \expandafter \@secondoftwo
 \fi
}%
\providecommand \@ifx [1]{%
 \ifx #1\expandafter \@firstoftwo
 \else \expandafter \@secondoftwo
 \fi
}%
\providecommand \natexlab [1]{#1}%
\providecommand \enquote  [1]{``#1''}%
\providecommand \bibnamefont  [1]{#1}%
\providecommand \bibfnamefont [1]{#1}%
\providecommand \citenamefont [1]{#1}%
\providecommand \href@noop [0]{\@secondoftwo}%
\providecommand \href [0]{\begingroup \@sanitize@url \@href}%
\providecommand \@href[1]{\@@startlink{#1}\@@href}%
\providecommand \@@href[1]{\endgroup#1\@@endlink}%
\providecommand \@sanitize@url [0]{\catcode `\\12\catcode `\$12\catcode
  `\&12\catcode `\#12\catcode `\^12\catcode `\_12\catcode `\%12\relax}%
\providecommand \@@startlink[1]{}%
\providecommand \@@endlink[0]{}%
\providecommand \url  [0]{\begingroup\@sanitize@url \@url }%
\providecommand \@url [1]{\endgroup\@href {#1}{\urlprefix }}%
\providecommand \urlprefix  [0]{URL }%
\providecommand \Eprint [0]{\href }%
\providecommand \doibase [0]{http://dx.doi.org/}%
\providecommand \selectlanguage [0]{\@gobble}%
\providecommand \bibinfo  [0]{\@secondoftwo}%
\providecommand \bibfield  [0]{\@secondoftwo}%
\providecommand \translation [1]{[#1]}%
\providecommand \BibitemOpen [0]{}%
\providecommand \bibitemStop [0]{}%
\providecommand \bibitemNoStop [0]{.\EOS\space}%
\providecommand \EOS [0]{\spacefactor3000\relax}%
\providecommand \BibitemShut  [1]{\csname bibitem#1\endcsname}%
\let\auto@bib@innerbib\@empty
%</preamble>
\bibitem [{\citenamefont {Lieb}\ \emph {et~al.}(1961)\citenamefont {Lieb},
  \citenamefont {Schultz},\ and\ \citenamefont {Mattis}}]{Lieb1961a}%
  \BibitemOpen
  \bibfield  {author} {\bibinfo {author} {\bibfnamefont {E.}~\bibnamefont
  {Lieb}}, \bibinfo {author} {\bibfnamefont {T.}~\bibnamefont {Schultz}}, \
  and\ \bibinfo {author} {\bibfnamefont {D.}~\bibnamefont {Mattis}},\ }\href
  {\doibase 10.1016/0003-4916(61)90115-4} {\bibfield  {journal} {\bibinfo
  {journal} {Annals of Physics}\ }\textbf {\bibinfo {volume} {16}},\ \bibinfo
  {pages} {407} (\bibinfo {year} {1961})}\BibitemShut {NoStop}%
\bibitem [{\citenamefont {Affleck}\ \emph {et~al.}(1987)\citenamefont
  {Affleck}, \citenamefont {Kennedy}, \citenamefont {Lieb},\ and\ \citenamefont
  {Tasaki}}]{Affleck1987}%
  \BibitemOpen
  \bibfield  {author} {\bibinfo {author} {\bibfnamefont {I.}~\bibnamefont
  {Affleck}}, \bibinfo {author} {\bibfnamefont {T.}~\bibnamefont {Kennedy}},
  \bibinfo {author} {\bibfnamefont {E.~H.}\ \bibnamefont {Lieb}}, \ and\
  \bibinfo {author} {\bibfnamefont {H.}~\bibnamefont {Tasaki}},\ }\href
  {\doibase 10.1103/PhysRevLett.59.799} {\bibfield  {journal} {\bibinfo
  {journal} {Physical Review Letters}\ }\textbf {\bibinfo {volume} {59}},\
  \bibinfo {pages} {799} (\bibinfo {year} {1987})}\BibitemShut {NoStop}%
\bibitem [{\citenamefont {Franchini}(2017)}]{Franchini2017}%
  \BibitemOpen
  \bibfield  {author} {\bibinfo {author} {\bibfnamefont {F.}~\bibnamefont
  {Franchini}},\ }\href {\doibase 10.1007/978-3-319-48487-7} {\emph {\bibinfo
  {title} {{An Introduction to Integrable Techniques for One-Dimensional
  Quantum Systems}}}},\ \bibinfo {series} {Lecture Notes in Physics}, Vol.\
  \bibinfo {volume} {940}\ (\bibinfo  {publisher} {Springer International
  Publishing},\ \bibinfo {address} {Cham},\ \bibinfo {year} {2017})\BibitemShut
  {NoStop}%
\bibitem [{\citenamefont {White}(1992)}]{White1992}%
  \BibitemOpen
  \bibfield  {author} {\bibinfo {author} {\bibfnamefont {S.~R.}\ \bibnamefont
  {White}},\ }\href {\doibase 10.1103/PhysRevLett.69.2863} {\bibfield
  {journal} {\bibinfo  {journal} {Physical Review Letters}\ }\textbf {\bibinfo
  {volume} {69}},\ \bibinfo {pages} {2863} (\bibinfo {year}
  {1992})}\BibitemShut {NoStop}%
\bibitem [{\citenamefont {Landau}\ \emph {et~al.}(2015)\citenamefont {Landau},
  \citenamefont {Vazirani},\ and\ \citenamefont {Vidick}}]{Landau2013}%
  \BibitemOpen
  \bibfield  {author} {\bibinfo {author} {\bibfnamefont {Z.}~\bibnamefont
  {Landau}}, \bibinfo {author} {\bibfnamefont {U.}~\bibnamefont {Vazirani}}, \
  and\ \bibinfo {author} {\bibfnamefont {T.}~\bibnamefont {Vidick}},\ }\href
  {\doibase 10.1038/nphys3345} {\bibfield  {journal} {\bibinfo  {journal}
  {Nature Physics}\ }\textbf {\bibinfo {volume} {11}},\ \bibinfo {pages} {566}
  (\bibinfo {year} {2015})},\ \Eprint {http://arxiv.org/abs/1307.5143}
  {arXiv:1307.5143} \BibitemShut {NoStop}%
\bibitem [{\citenamefont {Kitaev}\ \emph {et~al.}(2002)\citenamefont {Kitaev},
  \citenamefont {Shen},\ and\ \citenamefont {Vyalyi}}]{Kitaev2002}%
  \BibitemOpen
  \bibfield  {author} {\bibinfo {author} {\bibfnamefont {A.~Y.}\ \bibnamefont
  {Kitaev}}, \bibinfo {author} {\bibfnamefont {A.}~\bibnamefont {Shen}}, \ and\
  \bibinfo {author} {\bibfnamefont {M.~N.}\ \bibnamefont {Vyalyi}},\ }in\ \href
  {\doibase 10.1007/978-0-387-36944-0_13} {\emph {\bibinfo {booktitle} {Quantum
  Information}}}\ (\bibinfo  {publisher} {Springer New York},\ \bibinfo
  {address} {New York, NY},\ \bibinfo {year} {2002})\ pp.\ \bibinfo {pages}
  {203--217}\BibitemShut {NoStop}%
\bibitem [{\citenamefont {Aharonov}\ \emph {et~al.}(2009)\citenamefont
  {Aharonov}, \citenamefont {Gottesman}, \citenamefont {Irani},\ and\
  \citenamefont {Kempe}}]{Aharonov2009}%
  \BibitemOpen
  \bibfield  {author} {\bibinfo {author} {\bibfnamefont {D.}~\bibnamefont
  {Aharonov}}, \bibinfo {author} {\bibfnamefont {D.}~\bibnamefont {Gottesman}},
  \bibinfo {author} {\bibfnamefont {S.}~\bibnamefont {Irani}}, \ and\ \bibinfo
  {author} {\bibfnamefont {J.}~\bibnamefont {Kempe}},\ }\href {\doibase
  10.1007/s00220-008-0710-3} {\bibfield  {journal} {\bibinfo  {journal}
  {Communications in Mathematical Physics}\ }\textbf {\bibinfo {volume}
  {287}},\ \bibinfo {pages} {41} (\bibinfo {year} {2009})},\ \Eprint
  {http://arxiv.org/abs/0705.4077} {arXiv:0705.4077} \BibitemShut {NoStop}%
\bibitem [{\citenamefont {Gottesman}\ and\ \citenamefont
  {Irani}(2009)}]{gottesman2009quantum}%
  \BibitemOpen
  \bibfield  {author} {\bibinfo {author} {\bibfnamefont {D.}~\bibnamefont
  {Gottesman}}\ and\ \bibinfo {author} {\bibfnamefont {S.}~\bibnamefont
  {Irani}},\ }\href {\doibase 10.4086/toc.2013.v009a002} {\bibfield  {journal}
  {\bibinfo  {journal} {Theory of Computing}\ }\textbf {\bibinfo {volume}
  {9}},\ \bibinfo {pages} {31} (\bibinfo {year} {2009})},\ \Eprint
  {http://arxiv.org/abs/0905.2419} {arXiv:0905.2419} \BibitemShut {NoStop}%
\bibitem [{\citenamefont {Bausch}\ \emph {et~al.}(2017)\citenamefont {Bausch},
  \citenamefont {Cubitt},\ and\ \citenamefont {Ozols}}]{Bausch2016}%
  \BibitemOpen
  \bibfield  {author} {\bibinfo {author} {\bibfnamefont {J.}~\bibnamefont
  {Bausch}}, \bibinfo {author} {\bibfnamefont {T.}~\bibnamefont {Cubitt}}, \
  and\ \bibinfo {author} {\bibfnamefont {M.}~\bibnamefont {Ozols}},\ }\href
  {\doibase 10.1007/s00023-017-0609-7} {\bibfield  {journal} {\bibinfo
  {journal} {Annales Henri Poincar{\'{e}}}\ ,\ \bibinfo {pages} {52}} (\bibinfo
  {year} {2017})},\ \Eprint {http://arxiv.org/abs/1605.01718}
  {arXiv:1605.01718} \BibitemShut {NoStop}%
\bibitem [{\citenamefont {González-Guillén}\ and\ \citenamefont
  {Cubitt}(2018)}]{1810.06528}%
  \BibitemOpen
  \bibfield  {author} {\bibinfo {author} {\bibfnamefont {C.~E.}\ \bibnamefont
  {González-Guillén}}\ and\ \bibinfo {author} {\bibfnamefont {T.~S.}\
  \bibnamefont {Cubitt}},\ }\href@noop {} {\enquote {\bibinfo {title}
  {{History-state Hamiltonians are critical}},}\ } (\bibinfo {year} {2018}),\
  \Eprint {http://arxiv.org/abs/arXiv:1810.06528} {arXiv:1810.06528}
  \BibitemShut {NoStop}%
\bibitem [{\citenamefont {Hastings}(2007)}]{Hastings2007a}%
  \BibitemOpen
  \bibfield  {author} {\bibinfo {author} {\bibfnamefont {M.~B.}\ \bibnamefont
  {Hastings}},\ }\href {\doibase 10.1088/1742-5468/2007/08/P08024} {\bibfield
  {journal} {\bibinfo  {journal} {Journal of Statistical Mechanics: Theory and
  Experiment}\ }\textbf {\bibinfo {volume} {2007}},\ \bibinfo {pages} {P08024}
  (\bibinfo {year} {2007})},\ \Eprint {http://arxiv.org/abs/0705.2024}
  {arXiv:0705.2024} \BibitemShut {NoStop}%
\bibitem [{\citenamefont {Arad}\ \emph {et~al.}(2013)\citenamefont {Arad},
  \citenamefont {Kitaev}, \citenamefont {Landau},\ and\ \citenamefont
  {Vazirani}}]{Arad2013}%
  \BibitemOpen
  \bibfield  {author} {\bibinfo {author} {\bibfnamefont {I.}~\bibnamefont
  {Arad}}, \bibinfo {author} {\bibfnamefont {A.}~\bibnamefont {Kitaev}},
  \bibinfo {author} {\bibfnamefont {Z.}~\bibnamefont {Landau}}, \ and\ \bibinfo
  {author} {\bibfnamefont {U.}~\bibnamefont {Vazirani}},\ }\href
  {http://arxiv.org/abs/1301.1162} {\enquote {\bibinfo {title} {{An area law
  and sub-exponential algorithm for 1D systems}},}\ } (\bibinfo {year}
  {2013}),\ \Eprint {http://arxiv.org/abs/1301.1162} {arXiv:1301.1162}
  \BibitemShut {NoStop}%
\bibitem [{\citenamefont {Imry}\ \emph {et~al.}(1974)\citenamefont {Imry},
  \citenamefont {Scalapino},\ and\ \citenamefont {Gunther}}]{Imry1974}%
  \BibitemOpen
  \bibfield  {author} {\bibinfo {author} {\bibfnamefont {Y.}~\bibnamefont
  {Imry}}, \bibinfo {author} {\bibfnamefont {D.~J.}\ \bibnamefont {Scalapino}},
  \ and\ \bibinfo {author} {\bibfnamefont {L.}~\bibnamefont {Gunther}},\ }\href
  {\doibase 10.1103/PhysRevB.10.2900} {\bibfield  {journal} {\bibinfo
  {journal} {Physical Review B}\ }\textbf {\bibinfo {volume} {10}},\ \bibinfo
  {pages} {2900} (\bibinfo {year} {1974})}\BibitemShut {NoStop}%
\bibitem [{\citenamefont {Verstraete}\ \emph {et~al.}(2005)\citenamefont
  {Verstraete}, \citenamefont {Cirac}, \citenamefont {Latorre}, \citenamefont
  {Rico},\ and\ \citenamefont {Wolf}}]{Verstraete2005}%
  \BibitemOpen
  \bibfield  {author} {\bibinfo {author} {\bibfnamefont {F.}~\bibnamefont
  {Verstraete}}, \bibinfo {author} {\bibfnamefont {J.~I.}\ \bibnamefont
  {Cirac}}, \bibinfo {author} {\bibfnamefont {J.~I.}\ \bibnamefont {Latorre}},
  \bibinfo {author} {\bibfnamefont {E.}~\bibnamefont {Rico}}, \ and\ \bibinfo
  {author} {\bibfnamefont {M.~M.}\ \bibnamefont {Wolf}},\ }\href {\doibase
  10.1103/PhysRevLett.94.140601} {\bibfield  {journal} {\bibinfo  {journal}
  {Physical Review Letters}\ }\textbf {\bibinfo {volume} {94}},\ \bibinfo
  {pages} {140601} (\bibinfo {year} {2005})}\BibitemShut {NoStop}%
\bibitem [{\citenamefont {Bravyi}\ and\ \citenamefont
  {Gosset}(2015)}]{Bravyi2015}%
  \BibitemOpen
  \bibfield  {author} {\bibinfo {author} {\bibfnamefont {S.}~\bibnamefont
  {Bravyi}}\ and\ \bibinfo {author} {\bibfnamefont {D.}~\bibnamefont
  {Gosset}},\ }\href {\doibase 10.1063/1.4922508} {\bibfield  {journal}
  {\bibinfo  {journal} {Journal of Mathematical Physics}\ }\textbf {\bibinfo
  {volume} {56}},\ \bibinfo {pages} {061902} (\bibinfo {year} {2015})},\
  \Eprint {http://arxiv.org/abs/1503.04035v1} {arXiv:1503.04035v1} \BibitemShut
  {NoStop}%
\bibitem [{\citenamefont {Staar}\ \emph {et~al.}(2013)\citenamefont {Staar},
  \citenamefont {Maier},\ and\ \citenamefont {Schulthess}}]{Staar2013}%
  \BibitemOpen
  \bibfield  {author} {\bibinfo {author} {\bibfnamefont {P.}~\bibnamefont
  {Staar}}, \bibinfo {author} {\bibfnamefont {T.}~\bibnamefont {Maier}}, \ and\
  \bibinfo {author} {\bibfnamefont {T.~C.}\ \bibnamefont {Schulthess}},\ }\href
  {\doibase 10.1103/PhysRevB.88.115101} {\bibfield  {journal} {\bibinfo
  {journal} {Physical Review B}\ }\textbf {\bibinfo {volume} {88}},\ \bibinfo
  {pages} {115101} (\bibinfo {year} {2013})}\BibitemShut {NoStop}%
\bibitem [{\citenamefont {Mazurenko}\ \emph {et~al.}(2017)\citenamefont
  {Mazurenko}, \citenamefont {Chiu}, \citenamefont {Ji}, \citenamefont
  {Parsons}, \citenamefont {Kan{\'{a}}sz-Nagy}, \citenamefont {Schmidt},
  \citenamefont {Grusdt}, \citenamefont {Demler}, \citenamefont {Greif},\ and\
  \citenamefont {Greiner}}]{Mazurenko2017}%
  \BibitemOpen
  \bibfield  {author} {\bibinfo {author} {\bibfnamefont {A.}~\bibnamefont
  {Mazurenko}}, \bibinfo {author} {\bibfnamefont {C.~S.}\ \bibnamefont {Chiu}},
  \bibinfo {author} {\bibfnamefont {G.}~\bibnamefont {Ji}}, \bibinfo {author}
  {\bibfnamefont {M.~F.}\ \bibnamefont {Parsons}}, \bibinfo {author}
  {\bibfnamefont {M.}~\bibnamefont {Kan{\'{a}}sz-Nagy}}, \bibinfo {author}
  {\bibfnamefont {R.}~\bibnamefont {Schmidt}}, \bibinfo {author} {\bibfnamefont
  {F.}~\bibnamefont {Grusdt}}, \bibinfo {author} {\bibfnamefont
  {E.}~\bibnamefont {Demler}}, \bibinfo {author} {\bibfnamefont
  {D.}~\bibnamefont {Greif}}, \ and\ \bibinfo {author} {\bibfnamefont
  {M.}~\bibnamefont {Greiner}},\ }\href {\doibase 10.1038/nature22362}
  {\bibfield  {journal} {\bibinfo  {journal} {Nature}\ }\textbf {\bibinfo
  {volume} {545}},\ \bibinfo {pages} {462} (\bibinfo {year}
  {2017})}\BibitemShut {NoStop}%
\bibitem [{\citenamefont {Ge}\ and\ \citenamefont {Eisert}(2016)}]{Ge2016}%
  \BibitemOpen
  \bibfield  {author} {\bibinfo {author} {\bibfnamefont {Y.}~\bibnamefont
  {Ge}}\ and\ \bibinfo {author} {\bibfnamefont {J.}~\bibnamefont {Eisert}},\
  }\href {\doibase 10.1088/1367-2630/18/8/083026} {\bibfield  {journal}
  {\bibinfo  {journal} {New Journal of Physics}\ }\textbf {\bibinfo {volume}
  {18}},\ \bibinfo {pages} {83026} (\bibinfo {year} {2016})}\BibitemShut
  {NoStop}%
\bibitem [{\citenamefont {Cubitt}\ \emph
  {et~al.}(2015{\natexlab{a}})\citenamefont {Cubitt}, \citenamefont
  {Perez-Garcia},\ and\ \citenamefont {Wolf}}]{Cubitt2015}%
  \BibitemOpen
  \bibfield  {author} {\bibinfo {author} {\bibfnamefont {T.~S.}\ \bibnamefont
  {Cubitt}}, \bibinfo {author} {\bibfnamefont {D.}~\bibnamefont
  {Perez-Garcia}}, \ and\ \bibinfo {author} {\bibfnamefont {M.~M.}\
  \bibnamefont {Wolf}},\ }\href {\doibase 10.1038/nature16059} {\bibfield
  {journal} {\bibinfo  {journal} {Nature}\ }\textbf {\bibinfo {volume} {528}},\
  \bibinfo {pages} {207} (\bibinfo {year} {2015}{\natexlab{a}})},\ \Eprint
  {http://arxiv.org/abs/1502.04135} {arXiv:1502.04135} \BibitemShut {NoStop}%
\bibitem [{\citenamefont {Cubitt}\ \emph
  {et~al.}(2015{\natexlab{b}})\citenamefont {Cubitt}, \citenamefont
  {Perez-Garcia},\ and\ \citenamefont {Wolf}}]{Cubitt2015_long}%
  \BibitemOpen
  \bibfield  {author} {\bibinfo {author} {\bibfnamefont {T.~S.}\ \bibnamefont
  {Cubitt}}, \bibinfo {author} {\bibfnamefont {D.}~\bibnamefont
  {Perez-Garcia}}, \ and\ \bibinfo {author} {\bibfnamefont {M.~M.}\
  \bibnamefont {Wolf}},\ }\href@noop {} {\enquote {\bibinfo {title}
  {Undecidability of the spectral gap},}\ } (\bibinfo {year}
  {2015}{\natexlab{b}}),\ \Eprint {http://arxiv.org/abs/1502.04573}
  {arXiv:1502.04573 [quant-ph]} \BibitemShut {NoStop}%
\bibitem [{\citenamefont {Cook}(1971)}]{Cook1971}%
  \BibitemOpen
  \bibfield  {author} {\bibinfo {author} {\bibfnamefont {S.~A.}\ \bibnamefont
  {Cook}},\ }in\ \href {\doibase 10.1145/800157.805047} {\emph {\bibinfo
  {booktitle} {Proceedings of the third annual ACM symposium on Theory of
  computing - STOC '71}}}\ (\bibinfo  {publisher} {ACM Press},\ \bibinfo
  {address} {New York, New York, USA},\ \bibinfo {year} {1971})\ pp.\ \bibinfo
  {pages} {151--158}\BibitemShut {NoStop}%
\bibitem [{\citenamefont {Berger}(1966)}]{Berger1966}%
  \BibitemOpen
  \bibfield  {author} {\bibinfo {author} {\bibfnamefont {R.}~\bibnamefont
  {Berger}},\ }\href
  {https://books.google.com/books/about/The{\_}Undecidability{\_}of{\_}the{\_}Domino{\_}Problem.html?id=8AmiHD0Lbu8C{\&}pgis=1}
  {\emph {\bibinfo {title} {{The Undecidability of the Domino Problem}}}}\
  (\bibinfo  {publisher} {American Mathematical Soc.},\ \bibinfo {year}
  {1966})\ p.~\bibinfo {pages} {72}\BibitemShut {NoStop}%
\bibitem [{Note1()}]{Note1}%
  \BibitemOpen
  \bibinfo {note} {Note that gapped is not defined as the negation of gapless;
  there are systems that fall into neither class. The reason for choosing such
  strong definitions is to deliberately avoid ambiguous cases (such as systems
  with degenerate ground states). Our constructions will allow us to use these
  strong definitions, because we are able to guarantee that each instance falls
  into one of the two classes.}\BibitemShut {Stop}%
\bibitem [{\citenamefont {Turing}(1937)}]{Turing1937a}%
  \BibitemOpen
  \bibfield  {author} {\bibinfo {author} {\bibfnamefont {A.~M.}\ \bibnamefont
  {Turing}},\ }\href {\doibase 10.1112/plms/s2-42.1.230} {\bibfield  {journal}
  {\bibinfo  {journal} {Proceedings of the London Mathematical Society}\
  }\textbf {\bibinfo {volume} {s2-42}},\ \bibinfo {pages} {230} (\bibinfo
  {year} {1937})}\BibitemShut {NoStop}%
\bibitem [{\citenamefont {G{\"{o}}del}(1931)}]{Godel1931}%
  \BibitemOpen
  \bibfield  {author} {\bibinfo {author} {\bibfnamefont {K.}~\bibnamefont
  {G{\"{o}}del}},\ }\href {\doibase 10.1007/BF01700692} {\bibfield  {journal}
  {\bibinfo  {journal} {Monatshefte f{\"{u}}r Mathematik und Physik}\ }\textbf
  {\bibinfo {volume} {38-38}},\ \bibinfo {pages} {173} (\bibinfo {year}
  {1931})}\BibitemShut {NoStop}%
\bibitem [{\citenamefont {Komar}(1964)}]{Komar64}%
  \BibitemOpen
  \bibfield  {author} {\bibinfo {author} {\bibfnamefont {A.}~\bibnamefont
  {Komar}},\ }\href {\doibase 10.1103/PhysRev.133.B542} {\bibfield  {journal}
  {\bibinfo  {journal} {Phys. Rev.}\ }\textbf {\bibinfo {volume} {133}},\
  \bibinfo {pages} {B542} (\bibinfo {year} {1964})}\BibitemShut {NoStop}%
\bibitem [{\citenamefont {Anderson}(1972)}]{Anderson72}%
  \BibitemOpen
  \bibfield  {author} {\bibinfo {author} {\bibfnamefont {P.~W.}\ \bibnamefont
  {Anderson}},\ }\href {\doibase 10.1142/9789812385123_others01} {\bibfield
  {journal} {\bibinfo  {journal} {Science}\ }\textbf {\bibinfo {volume}
  {177}},\ \bibinfo {pages} {393} (\bibinfo {year} {1972})}\BibitemShut
  {NoStop}%
\bibitem [{\citenamefont {Pour-El}\ and\ \citenamefont
  {Richards}(1981)}]{Pourel81}%
  \BibitemOpen
  \bibfield  {author} {\bibinfo {author} {\bibfnamefont {M.~B.}\ \bibnamefont
  {Pour-El}}\ and\ \bibinfo {author} {\bibfnamefont {I.}~\bibnamefont
  {Richards}},\ }\href {\doibase 10.1016/0001-8708(81)90001-3} {\bibfield
  {journal} {\bibinfo  {journal} {Adv. Math.}\ }\textbf {\bibinfo {volume}
  {39}},\ \bibinfo {pages} {215 } (\bibinfo {year} {1981})}\BibitemShut
  {NoStop}%
\bibitem [{\citenamefont {Fredkin}\ and\ \citenamefont
  {Toffoli}(1982)}]{Fredkin82}%
  \BibitemOpen
  \bibfield  {author} {\bibinfo {author} {\bibfnamefont {E.}~\bibnamefont
  {Fredkin}}\ and\ \bibinfo {author} {\bibfnamefont {T.}~\bibnamefont
  {Toffoli}},\ }\href {\doibase 10.1007/BF01857727} {\bibfield  {journal}
  {\bibinfo  {journal} {Internat. J. Theoret. Phys.}\ }\textbf {\bibinfo
  {volume} {21}},\ \bibinfo {pages} {219} (\bibinfo {year} {1982})}\BibitemShut
  {NoStop}%
\bibitem [{\citenamefont {Domany}\ and\ \citenamefont
  {Kinzel}(1984)}]{Domany84}%
  \BibitemOpen
  \bibfield  {author} {\bibinfo {author} {\bibfnamefont {E.}~\bibnamefont
  {Domany}}\ and\ \bibinfo {author} {\bibfnamefont {W.}~\bibnamefont
  {Kinzel}},\ }\href {\doibase 10.1103/PhysRevLett.53.311} {\bibfield
  {journal} {\bibinfo  {journal} {Phys.\ Rev.\ Lett.}\ }\textbf {\bibinfo
  {volume} {53}},\ \bibinfo {pages} {311} (\bibinfo {year} {1984})}\BibitemShut
  {NoStop}%
\bibitem [{\citenamefont {Omohundro}(1984)}]{Omohundro84}%
  \BibitemOpen
  \bibfield  {author} {\bibinfo {author} {\bibfnamefont {S.}~\bibnamefont
  {Omohundro}},\ }\href {\doibase 10.1016/0167-2789(84)90255-0} {\bibfield
  {journal} {\bibinfo  {journal} {Physica D: Nonlinear Phenomena}\ }\textbf
  {\bibinfo {volume} {10}},\ \bibinfo {pages} {128} (\bibinfo {year}
  {1984})}\BibitemShut {NoStop}%
\bibitem [{\citenamefont {Gu}\ \emph {et~al.}(2009)\citenamefont {Gu},
  \citenamefont {Weedbrook}, \citenamefont {Perales},\ and\ \citenamefont
  {Nielsen}}]{Gu-Nielsen}%
  \BibitemOpen
  \bibfield  {author} {\bibinfo {author} {\bibfnamefont {M.}~\bibnamefont
  {Gu}}, \bibinfo {author} {\bibfnamefont {C.}~\bibnamefont {Weedbrook}},
  \bibinfo {author} {\bibfnamefont {A.}~\bibnamefont {Perales}}, \ and\
  \bibinfo {author} {\bibfnamefont {M.~A.}\ \bibnamefont {Nielsen}},\ }\href
  {\doibase 10.1016/j.physd.2008.12.016} {\bibfield  {journal} {\bibinfo
  {journal} {Physica D}\ }\textbf {\bibinfo {volume} {238}},\ \bibinfo {pages}
  {835} (\bibinfo {year} {2009})}\BibitemShut {NoStop}%
\bibitem [{\citenamefont {Wang}(1961)}]{Wang}%
  \BibitemOpen
  \bibfield  {author} {\bibinfo {author} {\bibfnamefont {H.}~\bibnamefont
  {Wang}},\ }\href {\doibase 10.1145/367177.367224} {\bibfield  {journal}
  {\bibinfo  {journal} {Bell System Tech. J.}\ }\textbf {\bibinfo {volume}
  {40}},\ \bibinfo {pages} {1} (\bibinfo {year} {1961})}\BibitemShut {NoStop}%
\bibitem [{\citenamefont {Kanter}(1990)}]{Kanter90}%
  \BibitemOpen
  \bibfield  {author} {\bibinfo {author} {\bibfnamefont {I.}~\bibnamefont
  {Kanter}},\ }\href {\doibase 10.1103/PhysRevLett.64.332} {\bibfield
  {journal} {\bibinfo  {journal} {Phys.\ Rev.\ Lett.}\ }\textbf {\bibinfo
  {volume} {64}},\ \bibinfo {pages} {332} (\bibinfo {year} {1990})}\BibitemShut
  {NoStop}%
\bibitem [{\citenamefont {Moore}(1990)}]{Moore90}%
  \BibitemOpen
  \bibfield  {author} {\bibinfo {author} {\bibfnamefont {C.}~\bibnamefont
  {Moore}},\ }\href {\doibase 10.1103/PhysRevLett.64.2354} {\bibfield
  {journal} {\bibinfo  {journal} {Phys. Rev. Lett.}\ }\textbf {\bibinfo
  {volume} {64}},\ \bibinfo {pages} {2354} (\bibinfo {year}
  {1990})}\BibitemShut {NoStop}%
\bibitem [{\citenamefont {Bennett}(1990)}]{Bennett90}%
  \BibitemOpen
  \bibfield  {author} {\bibinfo {author} {\bibfnamefont {C.~H.}\ \bibnamefont
  {Bennett}},\ }\href {\doibase 10.1038/346606a0} {\bibfield  {journal}
  {\bibinfo  {journal} {Nature}\ }\textbf {\bibinfo {volume} {346}},\ \bibinfo
  {pages} {606} (\bibinfo {year} {1990})}\BibitemShut {NoStop}%
\bibitem [{\citenamefont {Eisert}\ \emph {et~al.}(2012)\citenamefont {Eisert},
  \citenamefont {M\"uller},\ and\ \citenamefont {Gogolin}}]{Eisert12}%
  \BibitemOpen
  \bibfield  {author} {\bibinfo {author} {\bibfnamefont {J.}~\bibnamefont
  {Eisert}}, \bibinfo {author} {\bibfnamefont {M.~P.}\ \bibnamefont
  {M\"uller}}, \ and\ \bibinfo {author} {\bibfnamefont {C.}~\bibnamefont
  {Gogolin}},\ }\href {\doibase 10.1103/PhysRevLett.108.260501} {\bibfield
  {journal} {\bibinfo  {journal} {Phys.\ Rev.\ Lett.}\ }\textbf {\bibinfo
  {volume} {108}},\ \bibinfo {pages} {260501} (\bibinfo {year}
  {2012})}\BibitemShut {NoStop}%
\bibitem [{\citenamefont {Wolf}\ \emph {et~al.}(2011)\citenamefont {Wolf},
  \citenamefont {Cubitt},\ and\ \citenamefont {Perez-Garcia}}]{Wolf11}%
  \BibitemOpen
  \bibfield  {author} {\bibinfo {author} {\bibfnamefont {M.~M.}\ \bibnamefont
  {Wolf}}, \bibinfo {author} {\bibfnamefont {T.~S.}\ \bibnamefont {Cubitt}}, \
  and\ \bibinfo {author} {\bibfnamefont {D.}~\bibnamefont {Perez-Garcia}},\
  }\href@noop {} {\bibfield  {journal} {\bibinfo  {journal} {arXiv:1111.5425}\
  } (\bibinfo {year} {2011})}\BibitemShut {NoStop}%
\bibitem [{\citenamefont {Morton}\ and\ \citenamefont
  {Biamonte}(2012)}]{Morton12}%
  \BibitemOpen
  \bibfield  {author} {\bibinfo {author} {\bibfnamefont {J.}~\bibnamefont
  {Morton}}\ and\ \bibinfo {author} {\bibfnamefont {J.}~\bibnamefont
  {Biamonte}},\ }\href {\doibase 10.1103/PhysRevA.86.030301} {\bibfield
  {journal} {\bibinfo  {journal} {Phys. Rev. A}\ }\textbf {\bibinfo {volume}
  {86}},\ \bibinfo {pages} {030301} (\bibinfo {year} {2012})}\BibitemShut
  {NoStop}%
\bibitem [{\citenamefont {Kliesch}\ \emph {et~al.}(2014)\citenamefont
  {Kliesch}, \citenamefont {Gross},\ and\ \citenamefont {Eisert}}]{Kliesch14}%
  \BibitemOpen
  \bibfield  {author} {\bibinfo {author} {\bibfnamefont {M.}~\bibnamefont
  {Kliesch}}, \bibinfo {author} {\bibfnamefont {D.}~\bibnamefont {Gross}}, \
  and\ \bibinfo {author} {\bibfnamefont {J.}~\bibnamefont {Eisert}},\ }\href
  {\doibase 10.1103/PhysRevLett.113.160503} {\bibfield  {journal} {\bibinfo
  {journal} {Phys.\ Rev.\ Lett.}\ }\textbf {\bibinfo {volume} {113}},\ \bibinfo
  {pages} {160503} (\bibinfo {year} {2014})}\BibitemShut {NoStop}%
\bibitem [{\citenamefont {De~las Cuevas}\ \emph {et~al.}(2016)\citenamefont
  {De~las Cuevas}, \citenamefont {Cubitt}, \citenamefont {Cirac}, \citenamefont
  {Wolf},\ and\ \citenamefont {P\'erez-Garc\'ia}}]{Delascuevas16}%
  \BibitemOpen
  \bibfield  {author} {\bibinfo {author} {\bibfnamefont {G.}~\bibnamefont
  {De~las Cuevas}}, \bibinfo {author} {\bibfnamefont {T.~S.}\ \bibnamefont
  {Cubitt}}, \bibinfo {author} {\bibfnamefont {J.~I.}\ \bibnamefont {Cirac}},
  \bibinfo {author} {\bibfnamefont {M.~M.}\ \bibnamefont {Wolf}}, \ and\
  \bibinfo {author} {\bibfnamefont {D.}~\bibnamefont {P\'erez-Garc\'ia}},\
  }\href {\doibase 0.1063/1.4954983} {\bibfield  {journal} {\bibinfo  {journal}
  {J.\ Math. Phys.}\ }\textbf {\bibinfo {volume} {57}},\ \bibinfo {eid}
  {071902} (\bibinfo {year} {2016}),\ 0.1063/1.4954983}\BibitemShut {NoStop}%
\bibitem [{\citenamefont {Van~den Nest}\ and\ \citenamefont
  {Briegel}(2008)}]{VandenNest08}%
  \BibitemOpen
  \bibfield  {author} {\bibinfo {author} {\bibfnamefont {M.}~\bibnamefont
  {Van~den Nest}}\ and\ \bibinfo {author} {\bibfnamefont {H.~J.}\ \bibnamefont
  {Briegel}},\ }\href {\doibase 10.1007/s10701-008-9212-6} {\bibfield
  {journal} {\bibinfo  {journal} {Found. Phys.}\ }\textbf {\bibinfo {volume}
  {38}},\ \bibinfo {pages} {448} (\bibinfo {year} {2008})}\BibitemShut
  {NoStop}%
\bibitem [{\citenamefont {Elkouss}\ and\ \citenamefont
  {P\'erez-Garc\'ia}(2016)}]{Elkouss16}%
  \BibitemOpen
  \bibfield  {author} {\bibinfo {author} {\bibfnamefont {D.}~\bibnamefont
  {Elkouss}}\ and\ \bibinfo {author} {\bibfnamefont {D.}~\bibnamefont
  {P\'erez-Garc\'ia}},\ }\href {\doibase 10.1038/s41467-018-03428-0} {\bibfield
   {journal} {\bibinfo  {journal} {arXiv:1601.06101}\ } (\bibinfo {year}
  {2016}),\ 10.1038/s41467-018-03428-0}\BibitemShut {NoStop}%
\bibitem [{\citenamefont {Bendersky}\ \emph {et~al.}(2016)\citenamefont
  {Bendersky}, \citenamefont {Senno}, \citenamefont {de~la Torre},
  \citenamefont {Figueira},\ and\ \citenamefont {Acin}}]{Bendersky16}%
  \BibitemOpen
  \bibfield  {author} {\bibinfo {author} {\bibfnamefont {A.}~\bibnamefont
  {Bendersky}}, \bibinfo {author} {\bibfnamefont {G.}~\bibnamefont {Senno}},
  \bibinfo {author} {\bibfnamefont {G.}~\bibnamefont {de~la Torre}}, \bibinfo
  {author} {\bibfnamefont {S.}~\bibnamefont {Figueira}}, \ and\ \bibinfo
  {author} {\bibfnamefont {A.}~\bibnamefont {Acin}},\ }\href {\doibase
  10.1103/PhysRevLett.118.130401} {\bibfield  {journal} {\bibinfo  {journal}
  {arXiv:1606.08403}\ } (\bibinfo {year} {2016}),\
  10.1103/PhysRevLett.118.130401}\BibitemShut {NoStop}%
\bibitem [{\citenamefont {Slofstra}(2016)}]{Slofstra16}%
  \BibitemOpen
  \bibfield  {author} {\bibinfo {author} {\bibfnamefont {W.}~\bibnamefont
  {Slofstra}},\ }\href@noop {} {\bibfield  {journal} {\bibinfo  {journal}
  {arXiv:1606.03140}\ } (\bibinfo {year} {2016})}\BibitemShut {NoStop}%
\bibitem [{\citenamefont {Lloyd}(1993)}]{Lloyd}%
  \BibitemOpen
  \bibfield  {author} {\bibinfo {author} {\bibfnamefont {S.}~\bibnamefont
  {Lloyd}},\ }\href {\doibase 10.1103/PhysRevLett.71.943} {\bibfield  {journal}
  {\bibinfo  {journal} {Phys. Rev. Lett.}\ }\textbf {\bibinfo {volume} {71}},\
  \bibinfo {pages} {943} (\bibinfo {year} {1993})}\BibitemShut {NoStop}%
\bibitem [{\citenamefont {Lloyd}(1994)}]{Lloyd94}%
  \BibitemOpen
  \bibfield  {author} {\bibinfo {author} {\bibfnamefont {S.}~\bibnamefont
  {Lloyd}},\ }\href {\doibase 10.1080/09500349414552341} {\bibfield  {journal}
  {\bibinfo  {journal} {J. Modern Opt.}\ }\textbf {\bibinfo {volume} {41}},\
  \bibinfo {pages} {2503} (\bibinfo {year} {1994})}\BibitemShut {NoStop}%
\bibitem [{\citenamefont {Lloyd}(2016)}]{Lloyd16}%
  \BibitemOpen
  \bibfield  {author} {\bibinfo {author} {\bibfnamefont {S.}~\bibnamefont
  {Lloyd}},\ }\href@noop {} {\bibfield  {journal} {\bibinfo  {journal}
  {arXiv:1602.05924}\ } (\bibinfo {year} {2016})}\BibitemShut {NoStop}%
\bibitem [{\citenamefont {Cubitt}\ \emph {et~al.}(2016)\citenamefont {Cubitt},
  \citenamefont {Perez-Garcia},\ and\ \citenamefont {Wolf}}]{Cubitt16-Comment}%
  \BibitemOpen
  \bibfield  {author} {\bibinfo {author} {\bibfnamefont {T.~S.}\ \bibnamefont
  {Cubitt}}, \bibinfo {author} {\bibfnamefont {D.}~\bibnamefont
  {Perez-Garcia}}, \ and\ \bibinfo {author} {\bibfnamefont {M.~M.}\
  \bibnamefont {Wolf}},\ }\href@noop {} {\bibfield  {journal} {\bibinfo
  {journal} {arXiv:1603.00825}\ } (\bibinfo {year} {2016})}\BibitemShut
  {NoStop}%
\bibitem [{\citenamefont {Robinson}(1971)}]{Robinson1971}%
  \BibitemOpen
  \bibfield  {author} {\bibinfo {author} {\bibfnamefont {R.~M.}\ \bibnamefont
  {Robinson}},\ }\href {\doibase 10.1007/BF01418780} {\bibfield  {journal}
  {\bibinfo  {journal} {Inventiones mathematicae}\ }\textbf {\bibinfo {volume}
  {12}},\ \bibinfo {pages} {177} (\bibinfo {year} {1971})}\BibitemShut
  {NoStop}%
\bibitem [{Note2()}]{Note2}%
  \BibitemOpen
  \bibinfo {note} {We restrict to Wang tiles for simplicity. Slightly more
  general tiling rules are also possible, such as requiring complementary
  colours on abutting sides, but the same argument applies.}\BibitemShut
  {Stop}%
\bibitem [{\citenamefont {Movassagh}\ \emph {et~al.}(2010)\citenamefont
  {Movassagh}, \citenamefont {Farhi}, \citenamefont {Goldstone}, \citenamefont
  {Nagaj}, \citenamefont {Osborne},\ and\ \citenamefont
  {Shor}}]{Movassagh2010d}%
  \BibitemOpen
  \bibfield  {author} {\bibinfo {author} {\bibfnamefont {R.}~\bibnamefont
  {Movassagh}}, \bibinfo {author} {\bibfnamefont {E.}~\bibnamefont {Farhi}},
  \bibinfo {author} {\bibfnamefont {J.}~\bibnamefont {Goldstone}}, \bibinfo
  {author} {\bibfnamefont {D.}~\bibnamefont {Nagaj}}, \bibinfo {author}
  {\bibfnamefont {T.~J.}\ \bibnamefont {Osborne}}, \ and\ \bibinfo {author}
  {\bibfnamefont {P.~W.}\ \bibnamefont {Shor}},\ }\href {\doibase
  10.1103/PhysRevA.82.012318} {\bibfield  {journal} {\bibinfo  {journal}
  {Physical Review A}\ }\textbf {\bibinfo {volume} {82}},\ \bibinfo {pages}
  {012318} (\bibinfo {year} {2010})}\BibitemShut {NoStop}%
\bibitem [{\citenamefont {Hastings}\ and\ \citenamefont
  {Koma}(2006)}]{Hastings2006}%
  \BibitemOpen
  \bibfield  {author} {\bibinfo {author} {\bibfnamefont {M.~B.}\ \bibnamefont
  {Hastings}}\ and\ \bibinfo {author} {\bibfnamefont {T.}~\bibnamefont
  {Koma}},\ }\href {\doibase 10.1007/s00220-006-0030-4} {\bibfield  {journal}
  {\bibinfo  {journal} {Communications in Mathematical Physics}\ }\textbf
  {\bibinfo {volume} {265}},\ \bibinfo {pages} {781} (\bibinfo {year}
  {2006})}\BibitemShut {NoStop}%
\bibitem [{\citenamefont {Movassagh}\ and\ \citenamefont
  {Shor}(2016)}]{Movassagh2014}%
  \BibitemOpen
  \bibfield  {author} {\bibinfo {author} {\bibfnamefont {R.}~\bibnamefont
  {Movassagh}}\ and\ \bibinfo {author} {\bibfnamefont {P.~W.}\ \bibnamefont
  {Shor}},\ }\href {\doibase 10.1073/pnas.1605716113} {\bibfield  {journal}
  {\bibinfo  {journal} {Proceedings of the National Academy of Sciences}\
  }\textbf {\bibinfo {volume} {113}},\ \bibinfo {pages} {13278} (\bibinfo
  {year} {2016})},\ \Eprint {http://arxiv.org/abs/1408.1657} {arXiv:1408.1657}
  \BibitemShut {NoStop}%
\bibitem [{\citenamefont {Feynman}(1985)}]{Feynman1986}%
  \BibitemOpen
  \bibfield  {author} {\bibinfo {author} {\bibfnamefont {R.~P.}\ \bibnamefont
  {Feynman}},\ }\href {\doibase 10.1364/ON.11.2.000011} {\bibfield  {journal}
  {\bibinfo  {journal} {Optics News}\ }\textbf {\bibinfo {volume} {11}},\
  \bibinfo {pages} {11} (\bibinfo {year} {1985})}\BibitemShut {NoStop}%
\bibitem [{\citenamefont {Oliveira}\ and\ \citenamefont
  {Terhal}(2005)}]{Oliveira2008}%
  \BibitemOpen
  \bibfield  {author} {\bibinfo {author} {\bibfnamefont {R.~I.}\ \bibnamefont
  {Oliveira}}\ and\ \bibinfo {author} {\bibfnamefont {B.~M.}\ \bibnamefont
  {Terhal}},\ }\href {http://dl.acm.org/citation.cfm?id=2016987
  http://arxiv.org/abs/quant-ph/0504050
  http://dl.acm.org/citation.cfm?id=2016985.2016987} {\bibfield  {journal}
  {\bibinfo  {journal} {Quantum Information {\&} Computation}\ }\textbf
  {\bibinfo {volume} {8}},\ \bibinfo {pages} {1} (\bibinfo {year} {2005})},\
  \Eprint {http://arxiv.org/abs/0504050} {arXiv:0504050 [quant-ph]}
  \BibitemShut {NoStop}%
\bibitem [{\citenamefont {Kempe}\ \emph {et~al.}(2006)\citenamefont {Kempe},
  \citenamefont {Kitaev},\ and\ \citenamefont {Regev}}]{Kempe2006}%
  \BibitemOpen
  \bibfield  {author} {\bibinfo {author} {\bibfnamefont {J.}~\bibnamefont
  {Kempe}}, \bibinfo {author} {\bibfnamefont {A.~Y.}\ \bibnamefont {Kitaev}}, \
  and\ \bibinfo {author} {\bibfnamefont {O.}~\bibnamefont {Regev}},\ }\href
  {\doibase 10.1137/S0097539704445226} {\bibfield  {journal} {\bibinfo
  {journal} {SIAM Journal on Computing}\ }\textbf {\bibinfo {volume} {35}},\
  \bibinfo {pages} {1070} (\bibinfo {year} {2006})},\ \Eprint
  {http://arxiv.org/abs/0406180} {arXiv:0406180 [quant-ph]} \BibitemShut
  {NoStop}%
\bibitem [{\citenamefont {Breuckmann}\ and\ \citenamefont
  {Terhal}(2014)}]{Breuckmann2013}%
  \BibitemOpen
  \bibfield  {author} {\bibinfo {author} {\bibfnamefont {N.~P.}\ \bibnamefont
  {Breuckmann}}\ and\ \bibinfo {author} {\bibfnamefont {B.~M.}\ \bibnamefont
  {Terhal}},\ }\href {\doibase 10.1088/1751-8113/47/19/195304} {\bibfield
  {journal} {\bibinfo  {journal} {Journal of Physics A: Mathematical and
  Theoretical}\ }\textbf {\bibinfo {volume} {47}},\ \bibinfo {pages} {195304}
  (\bibinfo {year} {2014})},\ \Eprint {http://arxiv.org/abs/1311.6101}
  {arXiv:1311.6101} \BibitemShut {NoStop}%
\bibitem [{\citenamefont {Bausch}\ and\ \citenamefont
  {Piddock}(2017)}]{Bausch2017}%
  \BibitemOpen
  \bibfield  {author} {\bibinfo {author} {\bibfnamefont {J.}~\bibnamefont
  {Bausch}}\ and\ \bibinfo {author} {\bibfnamefont {S.}~\bibnamefont
  {Piddock}},\ }\href {\doibase 10.1063/1.5011338} {\bibfield  {journal}
  {\bibinfo  {journal} {Journal of Mathematical Physics}\ }\textbf {\bibinfo
  {volume} {58}},\ \bibinfo {pages} {111901} (\bibinfo {year} {2017})},\
  \Eprint {http://arxiv.org/abs/1702.08830} {arXiv:1702.08830} \BibitemShut
  {NoStop}%
\bibitem [{\citenamefont {Bausch}\ and\ \citenamefont
  {Crosson}(2018)}]{Bausch2016a}%
  \BibitemOpen
  \bibfield  {author} {\bibinfo {author} {\bibfnamefont {J.}~\bibnamefont
  {Bausch}}\ and\ \bibinfo {author} {\bibfnamefont {E.}~\bibnamefont
  {Crosson}},\ }\href {\doibase 10.22331/q-2018-09-19-94} {\bibfield  {journal}
  {\bibinfo  {journal} {Quantum}\ }\textbf {\bibinfo {volume} {2}},\ \bibinfo
  {pages} {94} (\bibinfo {year} {2018})},\ \Eprint
  {http://arxiv.org/abs/1609.08571} {arXiv:1609.08571} \BibitemShut {NoStop}%
\bibitem [{\citenamefont {Nielsen}\ and\ \citenamefont
  {Chuang}(2010)}]{Nielsen2010}%
  \BibitemOpen
  \bibfield  {author} {\bibinfo {author} {\bibfnamefont {M.~A.}\ \bibnamefont
  {Nielsen}}\ and\ \bibinfo {author} {\bibfnamefont {I.~L.}\ \bibnamefont
  {Chuang}},\ }\href {\doibase 10.1017/CBO9780511976667} {\emph {\bibinfo
  {title} {{Quantum Computation and Quantum Information}}}}\ (\bibinfo
  {publisher} {Cambridge University Press},\ \bibinfo {address} {Cambridge},\
  \bibinfo {year} {2010})\ p.\ \bibinfo {pages} {676}\BibitemShut {NoStop}%
\bibitem [{Note3()}]{Note3}%
  \BibitemOpen
  \bibinfo {note} {In fact, sizes $4^n$ for all integer $n$~\cite
  {Cubitt2015_long}.}\BibitemShut {Stop}%
\bibitem [{\citenamefont {Brouwer}\ and\ \citenamefont
  {Haemers}(2011)}]{Brouwer2012}%
  \BibitemOpen
  \bibfield  {author} {\bibinfo {author} {\bibfnamefont {A.~E.}\ \bibnamefont
  {Brouwer}}\ and\ \bibinfo {author} {\bibfnamefont {W.~H.}\ \bibnamefont
  {Haemers}},\ }in\ \href {\doibase 10.1007/978-1-4614-1939-6_1} {\emph
  {\bibinfo {booktitle} {Universitext}}}\ (\bibinfo  {publisher} {Springer New
  York},\ \bibinfo {year} {2011})\ pp.\ \bibinfo {pages} {1--20}\BibitemShut
  {NoStop}%
\bibitem [{\citenamefont {Muir}(1882)}]{Muir1882}%
  \BibitemOpen
  \bibfield  {author} {\bibinfo {author} {\bibfnamefont {T.}~\bibnamefont
  {Muir}},\ }\href {https://archive.org/details/treatiseontheory00muir} {\emph
  {\bibinfo {title} {{A treatise on the theory of determinants with graduated
  sets of exercises for use in colleges and schools}}}}\ (\bibinfo  {publisher}
  {Macmillan and Co.},\ \bibinfo {address} {London},\ \bibinfo {year} {1882})\
  p.\ \bibinfo {pages} {260}\BibitemShut {NoStop}%
\bibitem [{Note4()}]{Note4}%
  \BibitemOpen
  \bibinfo {note} {Truncation happens on less than $|\phi |+3$ tape, as
  explained in \protect \cref {th:phase-QTM}; here we include the two boundary
  markers, hence $w<|\phi |+5$.}\BibitemShut {Stop}%
\bibitem [{Note5()}]{Note5}%
  \BibitemOpen
  \bibinfo {note} {In the 2D result, the Turing machine then entered a time
  wasting operation, where the head would simply idle until the clock runs out
  of time. This is, strictly speaking, not necessary: the history state
  evolution can stop at any point, while keeping the computation
  reversible---see e.g.~\cite {Bausch2016}. If in doubt, it is of course always
  possible to use the traditional way such that once the clock runs out of
  space, the penalty is only inflicted if the TM is not yet in a halting
  configuration.}\BibitemShut {Stop}%
\bibitem [{\citenamefont {Bausch}\ \emph {et~al.}(2018)\citenamefont {Bausch},
  \citenamefont {Cubitt}, \citenamefont {Lucia}, \citenamefont {Perez-Garcia},\
  and\ \citenamefont {Wolf}}]{Bausch2015}%
  \BibitemOpen
  \bibfield  {author} {\bibinfo {author} {\bibfnamefont {J.}~\bibnamefont
  {Bausch}}, \bibinfo {author} {\bibfnamefont {T.~S.}\ \bibnamefont {Cubitt}},
  \bibinfo {author} {\bibfnamefont {A.}~\bibnamefont {Lucia}}, \bibinfo
  {author} {\bibfnamefont {D.}~\bibnamefont {Perez-Garcia}}, \ and\ \bibinfo
  {author} {\bibfnamefont {M.~M.}\ \bibnamefont {Wolf}},\ }\href {\doibase
  10.1073/pnas.1705042114} {\bibfield  {journal} {\bibinfo  {journal}
  {Proceedings of the National Academy of Sciences}\ }\textbf {\bibinfo
  {volume} {115}},\ \bibinfo {pages} {19} (\bibinfo {year} {2018})},\ \Eprint
  {http://arxiv.org/abs/1512.05687} {arXiv:1512.05687} \BibitemShut {NoStop}%
\bibitem [{\citenamefont {Bravyi}\ \emph {et~al.}(2010)\citenamefont {Bravyi},
  \citenamefont {Hastings},\ and\ \citenamefont
  {Michalakis}}]{BravyiHastingsMichalakis}%
  \BibitemOpen
  \bibfield  {author} {\bibinfo {author} {\bibfnamefont {S.}~\bibnamefont
  {Bravyi}}, \bibinfo {author} {\bibfnamefont {M.}~\bibnamefont {Hastings}}, \
  and\ \bibinfo {author} {\bibfnamefont {S.}~\bibnamefont {Michalakis}},\
  }\href@noop {} {\bibfield  {journal} {\bibinfo  {journal} {J.\ Math. Phys.}\
  }\textbf {\bibinfo {volume} {51}},\ \bibinfo {pages} {093512} (\bibinfo
  {year} {2010})}\BibitemShut {NoStop}%
\bibitem [{\citenamefont {Michalakis}\ and\ \citenamefont
  {Pytel}(2013)}]{Michalakis}%
  \BibitemOpen
  \bibfield  {author} {\bibinfo {author} {\bibfnamefont {S.}~\bibnamefont
  {Michalakis}}\ and\ \bibinfo {author} {\bibfnamefont {J.}~\bibnamefont
  {Pytel}},\ }\href@noop {} {\bibfield  {journal} {\bibinfo  {journal}
  {Commun.\ Math.\ Phys.}\ }\textbf {\bibinfo {volume} {322}},\ \bibinfo
  {pages} {277} (\bibinfo {year} {2013})}\BibitemShut {NoStop}%
\bibitem [{\citenamefont {Yarotsky}(2006)}]{Yarotsky}%
  \BibitemOpen
  \bibfield  {author} {\bibinfo {author} {\bibfnamefont {D.}~\bibnamefont
  {Yarotsky}},\ }\href@noop {} {\bibfield  {journal} {\bibinfo  {journal}
  {Commun.\ Math.\ Phys.}\ }\textbf {\bibinfo {volume} {261}},\ \bibinfo
  {pages} {799} (\bibinfo {year} {2006})}\BibitemShut {NoStop}%
\bibitem [{\citenamefont {Movassagh}(2017)}]{Movassagh2017}%
  \BibitemOpen
  \bibfield  {author} {\bibinfo {author} {\bibfnamefont {R.}~\bibnamefont
  {Movassagh}},\ }\href {\doibase 10.1103/PhysRevLett.119.220504} {\bibfield
  {journal} {\bibinfo  {journal} {Physical Review Letters}\ }\textbf {\bibinfo
  {volume} {119}},\ \bibinfo {pages} {220504} (\bibinfo {year}
  {2017})}\BibitemShut {NoStop}%
\bibitem [{\citenamefont {Lemm}(2019)}]{Lemm2019}%
  \BibitemOpen
  \bibfield  {author} {\bibinfo {author} {\bibfnamefont {M.}~\bibnamefont
  {Lemm}},\ }\href {\doibase 10.1103/physrevb.100.035113} {\bibfield  {journal}
  {\bibinfo  {journal} {Physical Review B}\ }\textbf {\bibinfo {volume} {100}}
  (\bibinfo {year} {2019}),\ 10.1103/physrevb.100.035113},\ \Eprint
  {http://arxiv.org/abs/1903.00108} {arXiv:1903.00108} \BibitemShut {NoStop}%
\bibitem [{\citenamefont {Barmpalias}\ and\ \citenamefont
  {Dowe}(2012)}]{Barmpalias2012}%
  \BibitemOpen
  \bibfield  {author} {\bibinfo {author} {\bibfnamefont {G.}~\bibnamefont
  {Barmpalias}}\ and\ \bibinfo {author} {\bibfnamefont {D.~L.}\ \bibnamefont
  {Dowe}},\ }\href {\doibase 10.1098/rsta.2011.0319} {\bibfield  {journal}
  {\bibinfo  {journal} {Philosophical Transactions of the Royal Society A:
  Mathematical, Physical and Engineering Sciences}\ }\textbf {\bibinfo {volume}
  {370}},\ \bibinfo {pages} {3488} (\bibinfo {year} {2012})}\BibitemShut
  {NoStop}%
\bibitem [{\citenamefont {Wolf}\ \emph {et~al.}(2006)\citenamefont {Wolf},
  \citenamefont {Ortiz}, \citenamefont {Verstraete},\ and\ \citenamefont
  {Cirac}}]{Wolf2006}%
  \BibitemOpen
  \bibfield  {author} {\bibinfo {author} {\bibfnamefont {M.~M.}\ \bibnamefont
  {Wolf}}, \bibinfo {author} {\bibfnamefont {G.}~\bibnamefont {Ortiz}},
  \bibinfo {author} {\bibfnamefont {F.}~\bibnamefont {Verstraete}}, \ and\
  \bibinfo {author} {\bibfnamefont {J.~I.}\ \bibnamefont {Cirac}},\ }\href
  {\doibase 10.1103/physrevlett.97.110403} {\bibfield  {journal} {\bibinfo
  {journal} {Physical Review Letters}\ }\textbf {\bibinfo {volume} {97}}
  (\bibinfo {year} {2006}),\ 10.1103/physrevlett.97.110403}\BibitemShut
  {NoStop}%
\bibitem [{\citenamefont {Fern{\'{a}}ndez-Gonz{\'{a}}lez}\ \emph
  {et~al.}(2014)\citenamefont {Fern{\'{a}}ndez-Gonz{\'{a}}lez}, \citenamefont
  {Schuch}, \citenamefont {Wolf}, \citenamefont {Cirac},\ and\ \citenamefont
  {P{\'{e}}rez-Garc{\'{i}}a}}]{FernndezGonzlez2014}%
  \BibitemOpen
  \bibfield  {author} {\bibinfo {author} {\bibfnamefont {C.}~\bibnamefont
  {Fern{\'{a}}ndez-Gonz{\'{a}}lez}}, \bibinfo {author} {\bibfnamefont
  {N.}~\bibnamefont {Schuch}}, \bibinfo {author} {\bibfnamefont {M.~M.}\
  \bibnamefont {Wolf}}, \bibinfo {author} {\bibfnamefont {J.~I.}\ \bibnamefont
  {Cirac}}, \ and\ \bibinfo {author} {\bibfnamefont {D.}~\bibnamefont
  {P{\'{e}}rez-Garc{\'{i}}a}},\ }\href {\doibase 10.1007/s00220-014-2173-z}
  {\bibfield  {journal} {\bibinfo  {journal} {Communications in Mathematical
  Physics}\ }\textbf {\bibinfo {volume} {333}},\ \bibinfo {pages} {299}
  (\bibinfo {year} {2014})}\BibitemShut {NoStop}%
\end{thebibliography}%
\end{document}